\newcommand{\CDC}[1]{\textcolor{blue}{{#1}}}
\DeclareMathAlphabet{\pazocal}{OMS}{zplm}{m}{n}
\let\bbordermatrix\bordermatrix
\patchcmd{\bbordermatrix}{8.75}{4.75}{}{}
\patchcmd{\bbordermatrix}{\left(}{\left[}{}{}
\patchcmd{\bbordermatrix}{\right)}{\right]}{}{}
\newcommand{\sr}{\stackrel}
\newcommand{\rar}{\rightarrow}
\newcommand{\tri}{\sr{\triangle}{=}}
\newcommand{\be}{\begin{equation}}
\newcommand{\ee}{\end{equation}}
\newcommand{\bea}{\begin{eqnarray}}
\newcommand{\eea}{\end{eqnarray}}
\newcommand{\bes}{\begin{eqnarray*}}
\newcommand{\ees}{\end{eqnarray*}}
\newcommand{\bce}{\begin{center}}
\newcommand{\ece}{\end{center}}
\newcommand{\beae}{\begin{IEEEeqnarray}{rCl}}
\newcommand{\eeae}{\end{IEEEeqnarray}}
\newcommand{\nms}{\IEEEeqnarraynumspace}
\def\VR{\kern-\arraycolsep\strut\vrule &\kern-\arraycolsep}
\def\vr{\kern-\arraycolsep & \kern-\arraycolsep}
\newcommand{\ben}{\begin{enumerate}}
\newcommand{\een}{\end{enumerate}}
\newcommand{\hso}{\hspace{.1in}}
\newcommand{\hst}{\hspace{.2in}}
\newcommand{\noi}{\noindent}
\newtheorem{theorem}{Theorem}[section]
\newtheorem{problem}{Problem}[section]
\newtheorem{remark}{Remark}[section]
\newtheorem{corollary}{Corollary}[section]
\newtheorem{definition}{Definition}[section]
\newtheorem{lemma}{Lemma}[section]
\newtheorem{proposition}{Proposition}[section]
\newtheorem{observation}{Observation}[section]
\newtheorem{counterexample}{Counterexample}[section]
\begin{document}

\baselineskip=17pt


\title{Time-Invariant Feedback  Strategies Do  Not  Increase Capacity of AGN  Channels Driven by Stable and Certain Unstable   Autoregressive  Noise\footnote{This paper was presented in part at the 2020 IEEE International Symposium on Information Theory, Los Angeles, USA,  July 21-26, 2020}}

 \author{
   \IEEEauthorblockN{
     Charalambos D. Charalambous\IEEEauthorrefmark{1},
     Christos Kourtellaris\IEEEauthorrefmark{1},
     Sergey Loyka\IEEEauthorrefmark{2}\\}
   \IEEEauthorblockA{
     \IEEEauthorrefmark{1}Department of Electrical and Computer Engineering\\University of Cyprus, 75 Kallipoleos Avenue, P.O. Box 20537, Nicosia, 1678, Cyprus
     \\
     Email: chadcha@ucy.ac.cy, kourtellaris.christos@ucy.ac.cy}\\
   \IEEEauthorblockA{
     \IEEEauthorrefmark{2}School of Electrical Engineering and Computer Science\\ University of Ottawa, Ontario, Canada\\
     Email: sergey.loyka@uottawa.ca}
 }

\maketitle

\vspace*{-1.0cm}

\begin{abstract}
The capacity of additive Gaussian noise (AGN) channels, $Y_t=X_t+V_t, t=1, \ldots, n$, $\frac{1}{n} {\bf E}\big\{\sum_{t=1}^n |X_t|^2 \big\}\leq \kappa, \kappa \in [0,\infty)$,   with time-invariant channel input feedback strategies,  is characterized and conditions are identified for entropy rates, and limit of average power  to exist,  when the noise is described by {\it stable and unstable} autoregressive models, AR$(c)$, $V_t=cV_{t-1}+ W_t, V_0=v_0, t=1, \ldots, n$, 
where  $c\in (-\infty,\infty)$,  $W_t, t=1,\ldots, n$, is a zero mean, variance $K_W$,  independent Gaussian sequence, independent of $V_0$. For stable   AR$(c), c\in (-1,1)$ the conditions are necessary and sufficient for asymptotic stationarity of the processes $(X_t, Y_t), t=1, 2, \ldots$. 
New   closed form capacity formulas and lower bounds  are derived, for the AR$(c), c\in (-\infty,\infty)$ noise, which are  fundamentally different from existing formulas in the literature, and illustrate   multiple regimes of capacity, as a function of the parameters $(c,K_W,\kappa)$, as follows.\\
1) feedback increases capacity  for the regime, $c^2 \in (1, \infty),$  for   $\kappa > \frac{K_W\big(1+\sqrt{4c^2-3}\big)}{2\big(c^2-1\big)^2}$, \\
2) feedback does not increase capacity for the regime $c^2 \in (1, \infty)$, for $\kappa \leq \frac{K_W\big(1+\sqrt{4c^2-3}\big)}{2\big(c^2-1\big)^2}$, and \\
3)  feedback does not increase capacity for the regime  $c \in [-1,1]$, for $ \kappa \in [0,\infty)$.\\
\noi When compared  to \cite{kim2010}, our capacity formulas for AGN channels driven by a  
 stable AR$(c), c\in (-1,1)$ noise, state that feedback does not increase capacity,  contrary  to the main results of the characterizations of feedback capacity derived in \cite[Theorem~4.1 and Theorem~6.1]{kim2010}, for stationary or asymptotically stationary processes. We show that our disagreement with \cite{kim2010} is mainly attributed to the  identification of   necessary and/or  sufficient conditions, to ensure the asymptotic average power and   entropy rates exist,  expressed in terms of the known  detectability and stabilizability conditions of convergence of generalized difference Riccati equations (DREs) to analogous generalized algebraic Riccati equations (AREs), of Gaussian estimation problems, which are not accounted for in \cite[Theorem~4.1 and Theorem~6.1]{kim2010}.
\end{abstract}



 \section{Introduction, Motivation, and Main Results of the Paper}
\label{intro}
The feedback capacity of  additive Gaussian noise (AGN) channels driven by nonstationary, and stationary limited memory Gaussian noise, is   addressed, since the early 1970's,      in an anthology of  papers, under various assumptions  \cite{butman1969,tienan-schalkwijk1974,wolfowitz1975,butman1976,cover-pombra1989,ozarow1990,yang-kavcic-tatikonda2007,kim2010}. Two of the fundamental problems are related to

\begin{description}
\item[(Q1):] characterizations and   computations  of   feedback capacity of noiseless feedback codes, when the initial state $S_0=s_0$ of the noise is known or not known  to the encoder and the decoder, and 

\item[(Q2):]  bound on feedback capacity, based on linear feedback coding schemes of communicating  Gaussian random variables (RVs), $\Theta: \Omega \rar {\mathbb R}$, and coding schemes of communicating digital messages  $W:\Omega \rar   {\cal M}^{(n)} \tri  \left\{1, 2,\ldots,  \lceil M_n\rceil\right\}$, when the initial state $S_0=s_0$ of the noise is known to the encoder and the decoder.
\end{description}
\CDC{This paper is mainly concerned with question (Q1), for AGN channels driven by stable and unstable noise, when  {\it  channel input strategies are  time-invariant (not necessarily stationary).} The first objective is to identify necessary and/or sufficient conditions for  optimal channel input strategies to ensure the limiting average power and the information rates exist,  for both stable or unstable channel noise, and then to determine if additional conditions are needed for these  limits to be independent of the initial states of the noise.  Such conditions are identified. For stable (resp. unstable)  noise these conditions also  imply  the  input and output processes (resp. the input process and innovations process of the output) are asymptotically stationary, and the limiting average power and the entropy rates do not depend on the initial states of the channel or their  initial distributions.     The second main objective is the calculation of closed form capacity formulas for stable and unstable autoregressive unit memory noise. Such formulas are derived. They  illustrate  multiple regimes of capacity, and that feedback does not increase capacity, when the noise is stable (also for  unstable noise for certain values of power $\kappa$). }
      
\CDC{From  the main  results of this  paper follows   that  several of the  answers related to questions (Q1) and (Q2), given in \cite{kim2010}, that treats stable, stationary or asymptotically stationary  noises, are incorrect due to oversights related to convergence of asymptotic  average power and entropy rates. In particular, the oversights affect the validity of  feedback capacity described   in time-domain \cite[Theorem~6.1]{kim2010}, by a time-invariant optimization problem, with zero  variance   of the innovations of the channel input process; when conditions for convergence are imposed, then the value of feedback capacity stated    in   \cite[Theorem~6.1]{kim2010} is necessarily zero (otherwise the per unit time asymptotic limits of  the  information theoretic characterization of feedback capacity does not exists).
 Some of the   technical  oversights in  \cite{kim2010} are also repeated in \cite{liu-han2017,liu-han2019,gattami2019,li-elia2019}. These are  discussed  at various  parts of this  paper. For the reader's convenience these are clarified in Section~\ref{lite}.} \\
\CDC{To enlarge the scope of the current paper compared to previous literature \cite{yang-kavcic-tatikonda2007,kim2010,liu-han2017,liu-han2019,gattami2019,li-elia2019} that treat feedback capacity for stable noises, we analyze both the feedback and non-feedback capacity,  for stable and unstable noises  (using a unified approach),  although most of our emphasis is on feedback capacity.} To  avoid extensive notation, this  paper is focused on AGN channels, driven by  a correlated Gaussian noise, of the simplest form, the autoregressive unit memory, stable and unstable noise. However, the method of this  paper apply to more general limited memory noise models \cite{yang-kavcic-tatikonda2007,kim2010} (under appropriate assumptions).

\subsection{The Main Problems of Capacity of AGN Channels with Memory}
\label{intro_n}
In this section we introduce the precise mathematical formulation, and the underlying assumptions based on which we derive the results of the paper.  We consider the following time-varying AGN channel. 
\begin{align}
&Y_t=X_t+V_t,  \hst t=1, \ldots, n, \hst \frac{1}{n}  {\bf E}_{v_0} \Big\{\sum_{t=1}^{n} (X_t)^2\Big\} \leq \kappa, \hso \kappa \in [0,\infty) \label{AGN_in}
\end{align}
$X^n = \{X_1, X_2, \ldots, X_n\}$ is  the sequence of channel input random variables (RVs) $X_t :  \Omega \rar {\mathbb R}$,   \\
$Y^n = \{Y_1, Y_2, \ldots, Y_n \}$ is  the sequence of channel output RVs $Y_t :  \Omega \rar {\mathbb R}$,\\
   $V^n = \{ V_1, \ldots, V_n\}$ conditioned on the initial state $V_0=v_0$, is a  sequence of  jointly Gaussian distributed  RVs $V_t :  \Omega \rar {\mathbb R}$,   and $V^n \in N(0,K_{V^n|V_0})$, \\
$V_0=v_0$, is  known to the encoder and decoder\footnote{This assumption is explicit in \cite{yang-kavcic-tatikonda2007},  it is hidden in \cite[Theorem~6.1]{kim2010}, it is explicit in  \cite[Lemma~6.1 and comments above it]{kim2010}; see Section~\ref{app_sect_imp} for clarification.}\\
$N(0,K_{V^n|V_0})$ denotes the distribution of the Gaussian RV $V^n$ conditional on $V_0$, with  zero conditional mean, and conditional variance  $K_{V^n|V_0}$, \\
 ${\bf E}_{v_0}\{\cdot\}$ denotes expectation for fixed initial state  $V_0=v_0$. \\
A  time-varying unit memory Gaussian autoregressive noise, with initial state $V_0=v_0$, is defined by      
\begin{align}
\mbox{AR$(c_t)$}:\hso  \left\{ \begin{array}{l}
 V_t=c_tV_{t-1}+ W_t,\hso  V_0=v_0, \hso t=1, \ldots, n,  \\
W_t\in N(0,K_{W_t}), \hso t=1,\ldots, n, \hso \mbox{indep. Gaussian, indep. of $V_0\in N(0,K_{V_0})$,}\\
K_{V_0} \geq 0, \hso K_{W_t}>0, \hst  c_t \in (-\infty,\infty), \hso t=1, \ldots, n \hso \mbox{are non-random}.
\end{array} \right. \label{ykt_ar_1}
\end{align}
We denote by AR$(c), c\in (-\infty,\infty)$ the restriction of AR$(c_t)$ to the time-invariant  autoregressive noise, i.e., $K_{W_t}=K_{W}, t=1, \ldots, n, c_t=c\in (-\infty,\infty), t=0, \ldots, n$. For stable noise,   AR$(c), c \in (-1,1)$, the variance defined by  $K_{V_t}\tri {\bf E}\big(V_t\big)^2$, satisfies $K_{V_t}= c^2 K_{V_{t-1}}+K_W, K_{V_0}\geq 0, t=1, \ldots, n$. The stable AR$(c)$ noise   is called asymptotically stationary if $\lim_{n \longrightarrow \infty}K_{V_n}= \frac{K_W}{1-c^2},$ for all initial values $K_{V_0}\geq 0$, i.e., $|c|<1$.   \\
AR$(c_t)$ without an initial state is defined by  (\ref{ykt_ar_1}),  for $t=2, \ldots, n$, with  $V_1 \in N(0, K_{V_1}), K_{V_1}\geq 0$, independent of $W_t \in N(0,K_{W_t}), K_{W_t}>0,  t=2, \ldots, n$. Similarly, the stable AR$(c)$ noise without an initial state  is called asymptotically stationary if $K_{V_t}= c^2 K_{V_{t-1}}+K_W, K_{V_1}\geq 0, t=2, \ldots, n$,  converges,  $\lim_{n \longrightarrow \infty}K_{V_n}= \frac{K_W}{1-c^2},$ for all initial values $K_{V_1}\geq 0$, $|c|<1$. That is, the invariant distribution of the noise is  $N(0,\frac{K_W}{1-c^2}), c \in (-1,1)$.

At this stage, we introduce the feedback code and non-feedback code of the AGN channel.
  \\

\begin{definition} Feedback and non-feedback codes\\
\label{def_code}
(a)  A noiseless time-varying feedback code\footnote{A time-varying feedback code means the channel input distributions ${\bf P}_{X_t|X^{t-1}, Y^{t-1}, S_0}, t=1, \ldots, n$ are time-varying.} for the  AGN Channel, is denoted  by     ${\cal C}_{{\mathbb Z}^+}^{fb}\tri \big\{(n, \lceil M_n \rceil, s_0,\kappa, \epsilon_n): n=1,2, \ldots, \big\}$, and  consists of the following  elements and assumptions. \\
(i) The set of uniformly distributed messages $W : \Omega \rar  {\cal M}^{(n)} \tri  \left\{1, 2,\ldots, \lceil M_n \rceil\right\}$. \\
(ii) The set of codewords  of block length $n$,  defined by  the set\footnote{The superscript $e(\cdot) $ on ${\bf E}_{v_0}^e$ is used to denote that the distribution depends on the strategy $e(\cdot)\in {\cal E}_{[0,n]}(\kappa)$.} 
\begin{align}
{\cal E}_{[0,n]}(\kappa) \triangleq & \Big\{ X_1=e_1(W,V_0),X_2=e_2(W,V_0,X_1,Y_1), \ldots, X_n=e_n(W,V_0, X^{n-1}, Y^{n-1}): \nonumber \\
&  \frac{1}{n+1}   {\bf E}_{v_0}^e    \Big( \sum_{i=0}^n  (X_t)^2\Big) \leq \kappa \Big\}. 
\end{align}
(iii) The decoder functions  $(v_0,y^n) \longmapsto d_{n}(v_0,y^n)\in  {\cal M}^{(n)}$, with average  error  probability
\bea
{\bf P}_{error}^{(n)}(v_0) = {\mathbb P}\Big\{d_{n}(V_0,Y^n) \neq W\Big|V_0=v_0 \Big\}= \frac{1}{\lceil M_n \rceil} \sum_{w=1}^{\lceil M_n \rceil} {\bf  P}_{v_0}^e\Big(d_n(V_0,Y^n) \neq W\Big) \leq \epsilon_n. \label{g_cp_4}
\eea
where ${\bf P}_{v_0}^e$ means the distribution depends on $e(\cdot)\in {\cal E}_{[0,n]}(\kappa)$ and  $V_0=v_0$ is fixed.\\
(iv)  ``$X^n$ is causally related to $V^n$'' \cite[page 39, above Lemma~5]{cover-pombra1989}, which is equivalent to the following decomposition of the joint probability distribution of $(X^n, V^n)$ given $V_0$.
\begin{align}
{\bf P}_{X^n, V^n|V_0}
=&{\bf P}_{V^n|V_0} \prod_{t=1}^n  {\bf P}_{X_t|X^{t-1}, V^{t-1},V_0}. \label{g_cp_3}\\
=&{\bf P}_{V^n|V_0} \prod_{t=1}^n  {\bf P}_{X_t|X^{t-1}, Y^{t-1},V_0},  \hst \mbox{by   $Y_t=X_t+V_t$.}\label{g_cp_3_a}
\end{align}
 The coding   rate  is   $r_n\triangleq \frac{1}{n} \log \lceil M_n \rceil$.   Given an initial state $V_0=v_0$,  a rate $R(v_0)$ is called  an {\it achievable rate}, if there exists  a  code  sequence ${\cal C}_{{\mathbb Z}^+}^{fb}$, satisfying
$\lim_{n\longrightarrow\infty} {\epsilon}_n=0$ and $\liminf_{n \longrightarrow\infty}\frac{1}{n}\log \lceil M_n \rceil\geq R(v_0)$.\\
  The operational definition of the  feedback capacity of the  AGN channel, for fixed $V_0=v_0$, is $C(\kappa,v_0)\triangleq \sup \big\{R(v_0): R(v_0) \: \: \mbox{is achievable}\big\}$. \\
(b) A  time-varying code without feedback  for the  AGN Channel, denoted by ${\cal C}_{{\mathbb Z}^+}^{nfb}$, is the restriction of the time-varying feedback code ${\cal C}_{{\mathbb Z}^+}^{fb}$, to the subset    
  ${\cal E}_{[0,n]}^{nfb}(\kappa)\subset {\cal E}_{[0,n]}(\kappa)$, defined   by 
\begin{align}
{\cal E}_{[0,n]}^{nfb}(\kappa) \triangleq & \Big\{ X_1=e_1^{nfb}(W,V_0),X_2=e_2^{nfb}(W,V_0,X_1),  \ldots, X_n=e_n^{nfb}(W,V_0, X^{n-1}): \nonumber \\
&  \frac{1}{n}   {\bf E}_{v_0}^{e^{nfb}}    \Big( \sum_{i=1}^n  (X_t)^2\Big) \leq \kappa \Big\}.  
\end{align}
\end{definition}

Since the code sequence  ${\cal C}_{{\mathbb Z}^+}^{fb}$   depends on $V_0=v_0$, then in general, the   rate $R(v_0)$, and also $C(\kappa, v_0)$ depend on  $v_0$. 


{\it Feedback Capacity of Time-Varying Channel Input Strategies.} Consider the  feedback code of Definition~\ref{def_code}.(a), i.e., ${\cal C}_{{\mathbb Z}^+}^{fb}$. Given the elements of the set ${\cal E}_{[0,n]}(\kappa)$, by the  maximum entropy principle of Gaussian distributions, similar to   \cite{cover-pombra1989},  the upper bound holds\footnote{The superscript $e$ means the underlying distributions are induced by the channel distribution and the elements of the set $e(\cdot)\in {\cal E}_{[0,n]}(\kappa)$.}.
\begin{align}
I^e(W; Y^n|v_0) \leq H(Y^n|v_0)-H(V^n|v_0), \hso \mbox{if $H(Y^n|v_0)$ is evaluated at a Gaussian  ${\bf P}_{Y^n|V_0}$ } \label{cp_ub}
\end{align}
where $H(X|s)$ stands for differential entropy of RV $X$ conditioned on the initial state $S=s$.  Further, similar to \cite{cover-pombra1989},   the upper bound in (\ref{cp_ub}) is achieved  if the input $X^n$ is  jointly Gaussian for fixed $V_0=v_0$, satisfies the average power constraint,  and respects  (\ref{g_cp_3}). 
By the chain rule of mutual information, $I^e(W;Y^n|v_0)=\sum_{t=1}^n I^e(W;Y_t|Y^{t-1},v_0)$, and the data processing inequality, follows,
\begin{align}
\sup_{{\cal E}_{[0,n]}(\kappa)}I^e(W;Y^n|v_0)\leq 
 &\sup_{ {\bf P}_{X_t|X^{t-1}, Y^{t-1}, V_0}, t=1, \ldots, n: \hso \frac{1}{n}  {\bf E}_{v_0}\big\{\sum_{t=1}^n \big(X_t\big)^2 \big\}\leq \kappa} H(Y^n|v_0)- H(V^n|v_0), \hso \mbox{by (\ref{AGN_in})} \label{FTFIC_1a}
\end{align} 
where the supremum in the right hand side of (\ref{FTFIC_1a})  is taken over conditionally Gaussian time-varying distributions ${\bf P}_{X_t|X^{t-1}, Y^{t-1}, V_0}, t=1, \ldots, n$, such that $(X^n, Y^n)$ are jointly Gaussian for fixed $V_0=v_0$, and  (\ref{g_cp_3}) is respected.\\
Define,  as in \cite{cover-pombra1989}, the $n-$finite transmission feedback information (FTFI) capacity  of code ${\cal C}_{{\mathbb Z}^+}^{fb}$,  by
\begin{align}
C_n(\kappa,v_0) \tri   \sup_{ {\bf P}_{X_t|X^{t-1}, Y^{t-1}, V_0}, t=1, \ldots, n: \hso \frac{1}{n}  {\bf E}_{v_0}\big\{\sum_{t=1}^n \big(X_t\big)^2 \big\}\leq \kappa}  H(Y^n|v_0)-H(V^n|s_0) \label{FTFIC_1in}
\end{align}
provided the supremum element exists in the set.  From the  converse and direct coding theorems  in \cite[Theorem~1]{cover-pombra1989}, it then follows that the  characterization of  feedback capacity of code  ${\cal C}_{{\mathbb Z}^+}^{fb}$,   is given  by  
\begin{align}
 C(\kappa,v_0) =  \lim_{n \longrightarrow \infty} \frac{1}{n} C_n(\kappa,v_0) \label{FTFI_2}
\end{align}
provided the limit exists in $[0,\infty)$ i.e., it is finite.

{\it Capacity Without Feedback of Time-Varying  Channel Input Strategies.}
Let $C_n^{nfb}(\kappa,v_0)$ be defined as in (\ref{FTFIC_1in}), with the time-varying feedback distributions ${\bf P}_{X_t|X^{t-1}, Y^{t-1}, V_0}, t=1, \ldots, n$, replaced by the time-varying non-feedback distributions ${\bf P}_{X_t|X^{t-1}, V_0}, t=1, \ldots, n$, called $n-$finite transmission without feedback information (FTwFI) capacity. The  non-feedback  capacity  of the code ${\cal C}_{{\mathbb Z}^+}^{nfb}$ of Definition~\ref{def_code}.(b), is characterized  by  $C^{nfb}(\kappa,v_0) =  \lim_{n \longrightarrow \infty} \frac{1}{n} C_n^{nfb}(\kappa,v_0)$, provided the limit exists.

To the best of our knowledge, no  closed form formulas are available in the literature for $C_n(\kappa,v_0)$, $C(\kappa,v_0)$,  even for  AGN channels driven by a time-invariant  AR$(c), c\in (-1,1)$ noise. \CDC{Often, past literature is focused on stationary or  stable noise, stationary or asymptotically stationary joint input and output process, and investigates  the  variant of 
(\ref{FTFI_2}) with the limit and supremum operations interchanged  \cite[Theorem~7 and Corollary~7.1]{yang-kavcic-tatikonda2007},   \cite[Theorem~3.2]{kim2010} and  \cite{liu-han2017,liu-han2019,gattami2019,li-elia2019}. However, as it will be apparent in this paper,  conditions for existence of the limiting  average power and entropy rates are not correctly identified in \cite{yang-kavcic-tatikonda2007,kim2010,liu-han2017,liu-han2019,gattami2019,li-elia2019}, and this oversight led to incorrect characterizations of feedback capacity by a time-invariant optimization problem (i.e., \cite[Theorem~6.1]{kim2010}).}

 This brings us to the next definition  of capacity, where conditions for existence of the limits of average power and entropy rates are characterized, and they part of our problem formulation. 

{\it  Feedback Capacity of Time-Invariant Channel Input Strategies.} \CDC{We consider (\ref{FTFIC_1in}), (\ref{FTFI_2}) with the per unit time limit and  supremum operations interchanged,  and time-invariant codes and induced distributions, called strategies.
To ensure the feedback capacity (to be defined shortly) is well-posed,  we introduce the following condition:}
   
\CDC{
{\bf (C1)} Channel input strategies  with feedback are  time-invariant, the consistency condition   (\ref{g_cp_3}) holds, and the following limits exist:  \\
(i) $\lim_{n \longrightarrow \infty} \frac{1}{n} {\bf E}_{v_0}\big\{\sum_{t=1}^n \big(X_t\big)^2 \big\}\in [0,\infty)$,  (ii)
$\lim_{n \longrightarrow \infty}\frac{1}{n}\big\{H(Y^n|v_0)-H(V^n|v_0)\big\}\in [0,\infty)$.} 

We define the operational information  feedback capacity under condition {\bf (C1)}, as follows.  
\begin{align}
&C^\infty(\kappa,v_0) 
\tri  \sup_{\lim_{n \longrightarrow \infty} \frac{1}{n} {\bf E}\big\{\sum_{t=1}^n \big(X_t\big)^2 \big\}\leq \kappa, \:\; \mbox{subject to  {\bf (C1)}}} \lim_{n \longrightarrow \infty} \frac{1}{n} \Big\{ H(Y^n|v_0)-H(V^n|s_0)\Big\}  \label{inter}
\end{align}
where the supremum is taken over all jointly Gaussian  channel input processes $X^n, n=1,2,\ldots$ with feedback,   or distributions with feedback ${\bf P}_{X_t|X^{t-1}, Y^{t-1}, V_0}, t=1,2,  \ldots$, such that $(X^n,Y^n), n=1,2,\ldots,$  is jointly Gaussian, for $V_0=v_0$,   and  {\bf (C1)} holds.  \\
\CDC{In the definition of $C^\infty(\kappa,v_0)$ we do not assume joint stationarity of  $(X^n, Y^n, V^n), n=1,2,\ldots$,  because this is not required for the  limits to exist. Similarly, we do not impose conditions  to ensure $C^\infty(\kappa,v_0)=C^\infty(\kappa), \forall v_0$, i.e., is independent of the initial state $V_0=v_0$, and hence $\int C^\infty(\kappa,v_0){\bf P}_{V_0}(dv_0)$ is independent of $V_0 \in N(0, K_{V_0}), \forall K_{V_0}\geq 0$. Rather, for stable (resp. unstable) noise, we first identify necessary and/or sufficient conditions for {\bf (C1)} to hold, and then determine if  additional conditions are needed to ensure the optimal channel input process is asymptotically stationary, and such that it induces asymptotic stationarity of the channel output process (resp. innovations process of the output),   and    $C^\infty(\kappa,v_0)=C^\infty(\kappa), \forall v_0$. }\\
\CDC{For time-invariant stable or unstable noise, it will become apparent,  from properties of estimation theory of linear Gaussian systems (introduced at latter parts of the paper), that the well-known detectability and stabilizability conditions  of generalized  Kalman-filter equations \cite{caines1988,kailath-sayed-hassibi}  (see Definition~\ref{def:det-stab}), are necessary and sufficient for  the limits to exist, and for $C^\infty(\kappa,v_0)$ to be well-posed.  It will also be apparent that in prior literature  \cite{yang-kavcic-tatikonda2007,kim2010,liu-han2017,liu-han2019,gattami2019,li-elia2019},  conditions for existence of the limits in  {\bf (C1)} are not  imposed, and this omission leads to incorrect characterizations  of feedback rates by a  time-invariant optimization problem, which is not the asymptotic limit of  (\ref{inter}) (i.e., \cite[Theorem~6.1]{kim2010}).}


{\it Capacity Without Feedback of Time-Invariant Channel Input Strategies.}
 Similar to (\ref{inter}), we also analyze  the non-feedback capacity analog,   under condition ${\bf (C1)}$, which is defined as follows.   
\begin{align}
&C^{\infty,nfb}(\kappa,v_0) 
\tri  \sup_{\lim_{n \longrightarrow \infty} \frac{1}{n} {\bf E}\big\{\sum_{t=1}^n \big(X_t\big)^2 \big\}\leq \kappa, \:\; \mbox{subject to  {\bf (C1)}}} \lim_{n \longrightarrow \infty} \frac{1}{n} \Big\{ H(Y^n|v_0)-H(V^n|s_0)\Big\} \label{inter_nfb}
\end{align}
where the supremum is taken over all jointly Gaussian   channel input processes $X^n,n=1,2,\ldots,$ without feedback  or distributions without feedback, denoted  by ${\bf P}_{X_t|X^{t-1}, V_0}, t=1,2, \ldots$, such that  $(X^n,Y^n), n=1,2,\ldots$  is jointly Gaussian for $V_0=v_0$, {\bf (C1)} holds (with ${\bf P}_{X_t|X^{t-1}, Y^{t-1},  V_0}$ replaced by ${\bf P}_{X_t|X^{t-1}, V_0}$), and (\ref{g_cp_3}) is respected, for $n=1,2,\ldots$. To our knowledge,  for AGN channels driven by an  unstable  noise $V^n$,  no closed form expression of non-feedback capacity  is ever reported in the literature. \\
Given the above formulation, in this paper we obtain answers to the  questions listed under Problem~\ref{prob_1_in}.\\

\begin{problem} Main problem\\
\label{prob_1_in}
Consider  $C^\infty(\kappa,v_0)$ defined by (\ref{inter}), and $C^{\infty,nfb}(\kappa,v_0)$ defined by (\ref{inter_nfb}),   of the AGN channel driven by a time-invariant stable and unstable, AR$(c)$ noise, i.e., $c\in (-\infty,\infty)$: \\
(a) What are  necessary and/or sufficient conditions for  {\bf (C1)} to hold? \\
\CDC{(b) What are  necessary and/or sufficient conditions for asymptotic stationarity of the process $(X^n, Y^n), n=1,2,\ldots$ or of only the marginal process $X^n$, that achieve $C^\infty(\kappa,v_0)$, and for $C^\infty(\kappa,v_0)=C^{\infty}(\kappa), \forall v_0$ i.e., to be  independent of initial data?}\\
\CDC{(c) What are the characterizations and  closed form formulas of  feedback capacity $C^\infty(\kappa,v_0)$?\\ 
(d) How do we extract simple lower bounds on non-feedback capacity, $C^{\infty,nfb}(\kappa,v_0)$ from the chararacterizations of feedback capacity?}
 \end{problem}

\CDC{To address Problem~\ref{prob_1_in}  we make  use of the  identities 
\begin{align}
\lim_{n \longrightarrow \infty} \frac{1}{n} H(Y^n|v_0)=&\lim_{n \longrightarrow \infty} \frac{1}{n} \sum_{t=1}^n H(Y_t|Y^{t-1},v_0)=\lim_{n \longrightarrow \infty} \frac{1}{n} \sum_{t=1}^n H(Y_t-{\bf E}\big\{Y_t|Y^{t-1}, v_0\big\}|Y^{t-1},v_0)  \label{lim_ex} \\
=&\lim_{n \longrightarrow \infty} \frac{1}{n} \sum_{t=1}^n H(I_t), \hso I_t\tri Y_t-{\bf E}\big\{Y_t|Y^{t-1}, v_0\big\} \hso \mbox{an indep. innovations process.}
\end{align}
Then we identify necessary and/or sufficient conditions for the  limits $\lim_{n \longrightarrow \infty} \frac{1}{n} \big\{H(Y^n|v_0)-H(V^n|v_0)\big\}$  and    $\lim_{n \longrightarrow \infty} \frac{1}{n} {\bf E}_{v_0}\big\{\sum_{t=1}^n \big(X_t\big)^2 \big\}$ to exist, i.e., in $[0,\infty)$. }  

\subsection{Methodology of the Paper} 
Our methodology is based on the following main steps.\\
{\it Step 1.} We characterize $C_n(\kappa,v_0)$ defined by (\ref{FTFIC_1in}), i.e., the $n-$FTFI capacity, of the  AGN channel driven by a time-varying AR$(c_t)$    noise. We also give  a lower bound on  the characterization of the  $n-$FTwFI capacity $C_n^{nfb}(\kappa,v_0)$, using  a Gaussian  channel input process, which  is realized by an AR$(\overline{\Lambda}_t)$ process,
\bea
X_t=\overline{\Lambda}_t X_{t-1}+Z_t,\hso  X_1=Z_1, \hso \overline{\Lambda}_t \in (-\infty,\infty), \hso  t=2, \ldots, n \label{nofeedback_input}
\eea
where 
$Z^n$ an independent Gaussian sequence,  independent of $(V^n,V_0)$. \\ 
{\it Step 2.} We characterize the feedback capacity $C^\infty(\kappa,v_0)=C^{\infty}(\kappa), \forall v_0$ defined by (\ref{inter}), and we give  a lower bound  on the characterization of $C^{\infty,nfb}(\kappa,v_0)$ defined by (\ref{inter_nfb}),  of the  AGN channel driven by a time-invariant stable or unstable noise, AR$(c), c\in (-\infty,\infty)$. Our analysis  identifies  necessary and/or sufficient conditions for condition {\bf (C1)} to hold, expressed  in terms of the convergence properties of {\it generalized difference Riccati equations (DREs) and algebraic Riccati equations (AREs)}, of estimating the channel state, that is, the noise $V^n$,   from the channel output process $Y^n$,  and the initial state $V_0=v_0$, for $n=1,2,\ldots$.  This step is analogous to  \cite[Theorem~4.1]{kourtellaris-charalambousIT2015_Part_1}, although the models considered in \cite{kourtellaris-charalambousIT2015_Part_1} involve a classical control DRE and ARE.\\
{\it Step 3.} We derive a closed form formula of   feedback capacity $C^\infty(\kappa,v_0)=C^\infty(\kappa), \forall v_0$, that shows there are {\it multiple regimes} of capacity, and these regimes depend on the parameters $(c,K_W, \kappa)$. 
Our feedback capacity formulae $C^\infty(\kappa)$ for AGN channels driven by stable noise AR$(c), c\in (-1,1)$ is    fundamentally different   from the  one  obtained using the characterization of feedback capacity in  \cite[Theorem~6.1]{kim2010}. We show this difference  is   mainly attributed to the appended  detectability and stabilizability conditions on the characterization of our feedback capacity, \CDC{to ensure the optimal channel input process $X^n, n=1, 2,\ldots$  is such that the limits,  $\lim_{n \longrightarrow \infty} \frac{1}{n} {\bf E}_{v_0}\big\{\sum_{t=1}^n \big(X_t\big)^2 \big\}\in [0,\infty)$,  $\lim_{n \longrightarrow \infty}\frac{1}{n}\big\{H(Y^n|v_0)-H(V^n|v_0)\big\}\in [0,\infty)$  exist, and  the joint  process $(X^n, Y^n), n=1,2,\ldots$  is asymptotically  stationary, which are not accounted for, in  \cite[Theorem~6.1]{kim2010}.}  \\
We also give an achievable lower bound on the non-feedback capacity  $C^{\infty,nfb}(\kappa,v_0)$, based on (\ref{nofeedback_input}), with $\overline{\Lambda}_t=0,  \forall t$, i.e., $X_t=Z_t$, $Z^n, n=1, \ldots$   an  independent and identically distributed (IID) sequence, and holds for stable and unstable AR$(c), c \in (-\infty,\infty)$ noise.\\
 {\it Step 4.} 
\CDC{We  show the characterization of feedback capacity given in  \cite[Theorem~6.1, $C_{FB}$]{kim2010} (i.e., the limiting expression of $C^\infty(\kappa,v_0)$,  without the stabilizability condition), and with  a zero variance of the innovations part of the channel input process (see \cite[Lemma~6.1]{kim2010}), gives an incorrect value of    $C_{FB}$. However, when the  stabilizability conditions is imposed (which is necessary for asymptotic limits to exists) then it follows the value  $C_{FB}=0$; clearly an incorrect limiting value of  (\ref{inter}). }

We structured the paper as follows.\\
In Section~\ref{sect:problem}, we derive  the characterization of the $n-$FTFI capacity,  and the lower bound on the   characterization of the $n-$FTwFI capacity, for AGN channels driven by the   AR$(c_t)$ noise (Section~\ref{sect:ar1_r}), and  present a preliminary   elaboration on technical issues that are integral part of  capacity definition (\ref{inter}).      \\ 
 In Section~\ref{sect:q1}, we present the  derivations of feedback capacity formulas of $C^{\infty}(\kappa,v_0)=C^{\infty}(\kappa), \forall v_0$, i.e.,  (\ref{inter}), and the  achievable lower bounds on the non-feedback capacity 
$C^{\infty,nfb}(\kappa,v_0)$, for stable and unstable noise, using    the asymptotic analysis of   generalized Kalman-filters \cite{kailath-sayed-hassibi,caines1988}.  
  \\
\CDC{In Section~\ref{lite} we collect and discuss some of the oversights in  \cite{kim2010}, which are also repeated in \cite{liu-han2017,liu-han2019,gattami2019,li-elia2019}.}
  
%



For the rest of this section, we follow a rather unconventional presentation; we  
present a summary of the main results of the paper, with informative simulations of achievable rates.

\subsection{Summary of the Main Results, Graphical Evaluations, and Discussion} 
\label{sect:main}
Below,  we summarize the main results of the paper, and present graphical evaluations of  achievable rates. 

(R1) {\it Multiple Regimes of Capacity.} There are multiple regimes of feedback capacity $C^{\infty}(\kappa,v_0)=C^{\infty}(\kappa), \forall v_0$ defined by (\ref{inter}), and these regimes  depend on the parameters $(c,K_W, \kappa)$,  as follows.
\begin{align}
&\mbox{Regime 1}:   \hso {\cal K}^\infty(c,K_W) \tri \Big\{ \kappa\in [0, \infty): 1 < c^2 < \infty, \hso \kappa > \frac{K_W\big(1+\sqrt{4c^2-3}\big)}{2\big(c^2-1\big)^2}  \Big\}, \label{reg_1} \\
&\mbox{Regime 2:} \hso  {\cal K}^{\infty,nfb}(c,K_W) \tri \Big\{ \kappa\in [0,\infty): 1 < c^2 <\infty,\hso  \kappa \leq \frac{K_W\big(1+\sqrt{4c^2-3}\big)}{2\big(c^2-1\big)^2}  \Big\}, \label{reg_3}\\
&\mbox{Regime 3:} \hso  {\cal K}^{\infty,nfb}(c) \tri \Big\{\kappa\in [0,\infty): 0 \leq c^2 \leq 1\Big\}. \label{reg_2}
\end{align}
{\it Regime 1.} This corresponds to the unstable AR$(c)$ noise, i.e., $c\in (-\infty, 1)\bigcup (1,\infty)$, there is a threshold effect on power, $\kappa> \kappa_{min}\tri \frac{K_W\big(1+\sqrt{4c^2-3}\big)}{2\big(c^2-1\big)^2}$, and feedback increases capacity.\\
{\it Regime 2.} This corresponds to the unstable AR$(c)$ noise, i.e., $c\in (-\infty, 1)\bigcup (1,\infty)$,  there is a threshold effect on power, $\kappa \leq  \kappa_{max}\tri \frac{K_W\big(1+\sqrt{4c^2-3}\big)}{2\big(c^2-1\big)^2}$,  feedback does not increase  capacity, $C^\infty(\kappa,v_0)= C^{\infty,nfb}(\kappa), \forall v_0$.\\
{\it Regime 3.} This  corresponds to the marginally stable AR$(c)$ noise, i.e., $c\in [-1, 1]$,   there is no threshold effect on power, feedback does not increase capacity, $C^\infty(\kappa,v_0)= C^{\infty,nfb}(\kappa), \forall v_0$. 
%
%
%

(R2) {\it Feedback capacity $C^\infty(\kappa,v_0)=C^{\infty}(\kappa), \forall v_0$}.  For Regime 1 the following hold.\\
The  optimal time-invariant, channel input distribution for $C^\infty(\kappa,v_0)$ defined by (\ref{inter}), is induced by a  jointly Gaussian channel input process $X^{n,o}$ (not necessarily stationary), with a representation 
\begin{align}
&X_t^o= \overline{\Lambda}^\infty \Big(X_{t-1}^o - {\bf E}_{v_0}\Big\{X_{t-1}^o\Big|Y^{o,t-1}\Big\}\Big) +Z_t^o, \hso t=2, \ldots, n, \label{in_1}   \\
&\hst  =\Lambda^\infty \Big(V_{t-1} - {\bf E}_{v_0}\Big\{V_{t-1}\Big|Y^{o,t-1}\Big\}\Big) +Z_t^o, \hst \overline{\Lambda}^\infty=-\Lambda^\infty, \label{in_2} \\
&X_1^o =\Lambda^\infty \Big(v_{0} - {\bf E}_{v_0}\Big\{V_{0}\Big\}\Big) +Z_1^o=Z_1^o, \\
&\frac{1}{n}  {\bf E}_{v_0} \Big\{\sum_{t=1}^{n} \big(X_t^o\big)^2\Big\}=\frac{1}{n} \sum_{t=1}^n \Big\{ \big(\Lambda^\infty\big)^2 K_{t-1}^o+ K_{Z}^\infty\Big\}  \leq \kappa, \label{cp_e_ar2_s1_new_in}\\
&Z_t^o  \hso  \mbox{indep. of $(X^{o,t-1}, V^{t-1}, Y^{o,{t-1}}, V_0)$},   \hso \mbox{ $Z_t^o \in N(0, K_{Z}^\infty)$},\hso \mbox{ for $t=1,\ldots,n$,}\\
&Z^{o,n} \hso \mbox{indep. of $(V^n,V_0)$,}\\
&K_t^o \tri {\bf E}_{v_0} \Big(E_t^o\Big)^2, \hso     \hso  E_t^o\tri V_{t} - {\bf E}_{v_0}\Big\{V_{t}^o\Big|Y^{o,t}\Big\},    \hso t=1, \ldots, n, \hso K_0^o=0,  \label{error_in}  
\end{align}
\begin{align}
&\widehat{V}_t^o\tri {\bf E}_{v_0}\Big\{V_{t}\Big|Y^{o,t}\Big\}, \hso \mbox{satisfy the  generalized Kalman-filter equations  (\ref{Q_1_8_s1_new})-(\ref{Q_1_10_s1_new})}. 
\end{align}
The feedback capacity is characterized by 
\begin{align}
&C^\infty(\kappa,v_0)
\tri
 \sup_{\big(\Lambda^\infty, K_{Z}^\infty\big): \hso \lim_{n \longrightarrow \infty} \frac{1}{n} \sum_{t=1}^n  \big(\Lambda^\infty\big)^2 K_{t-1}^o+ K_{Z}^\infty   \leq \kappa} \lim_{n \longrightarrow \infty}\frac{1}{2n}   \sum_{t=1}^n   \log\Big( \frac{\big(\Lambda^\infty+c\big)^2 K_{t-1}^o  + K_{Z}^\infty +K_{W}}{K_{W}}\Big),\label{int_fb} \\
  & \hst \hst \hst =C^\infty(\kappa),\hso  \forall v_0, \hso  \mbox{for values of power $\kappa \in {\cal K}^\infty(c,K_W)$},  \label{reg_1_in}\\
&\mbox{subject to: $K_Z^\infty\geq 0$ and    $K_t^o, t=1, \ldots,n$ satisfies the generalized DRE,}\nonumber \\
&K_{t}^o=  c^2 K_{t-1}^o  + K_{W} -\frac{ \Big( K_{W} + c K_{t-1}^o\Big(\Lambda^\infty + c \Big)\Big)^2}{ \Big(K_{Z}^\infty+ K_{W} + \Big(\Lambda^\infty + c \Big)^2 K_{t-1}^o\Big)}, \hst  K_t^o \geq0,  \hso K_{0}^o=0, \hst t=1, \ldots, n. \label{DRE_in} 
 \end{align}
The feedback capacity is computed from the time-invariant optimization problem\footnote{For $\kappa \in {\cal K}^\infty(c,K_W)$  the limit in (\ref{int_fb}) converges to a finite number and the supremum exists and it is finite.}
\begin{align}
&C^{\infty}(\kappa,v_0)=C^{\infty}(\kappa)\tri \sup_{(K^\infty, K_Z^\infty): \: \big(\Lambda^\infty\big)^2 K^\infty+ K_{Z}^\infty  \leq \kappa} \frac{1}{2} \log\Big( \frac{\big(\Lambda^{\infty}+c\big)^2 K^{\infty}  +K_Z^{\infty} +K_{W}}{K_{W}}\Big), \;  \kappa \in {\cal K}^\infty(c,K_W),  \label{int_nf_8_a}\\
&\mbox{subject to: $K_Z^\infty\geq 0$ and     $K^\infty$ is the unique $K^\infty\geq 0$ stabilizing solution of the  generalized ARE}\nonumber \\
&K^\infty=  c^2 K^\infty  + K_{W} -\frac{ \Big( K_{W} + c K^\infty\Big(\Lambda^\infty + c \Big)\Big)^2}{ \Big(K_{Z}^\infty+ K_{W} + \Big(\Lambda^\infty + c \Big)^2 K^\infty\Big)}, \hst K^\infty \geq 0. \label{ARE_in}
\end{align}
Here ``stabilizing'' means $K^\infty\geq 0$ is such that the eigenvalue of the linear recursion of the estimation error $E_n^o\tri V_n- \widehat{V}_n^o, n=1,2,\ldots$, lies inside $(-1,1)$, hence it converges in mean-square-sense. \\
For Regime 1 (see  (\ref{reg_1}) which means $c^2 \in (1,\infty)$) the closed form capacity formula    is
\begin{align}
C^\infty(\kappa)= & \frac{1}{2}\log\Big( \frac{c^2\Big( \big(c^2-1\big)\kappa +K_W\Big)}{\Big(c^2-1\Big)K_W}\Big), \hso \mbox{for all} \hso  \kappa \in {\cal K}^\infty(c,K_W) \label{fb_1} \\
=& \log |c| +\frac{1}{2}\log\Big(\frac{1}{ c^2-1}+\gamma\Big),  \hso \gamma \tri \frac{\kappa}{K_W} \label{gamma} \\
\geq &  \log |c| \in (0,\infty) \label{thr}
\end{align}
and it is achieved by  the  unique values $(K^{\infty,*}, \Lambda^{\infty,*}, K_Z^{\infty,*})$:
\begin{align}
&K^{\infty,*} = \frac{\kappa\big(c^2-1\big)^2-K_W}{c^2\big(c^2-1\big)} \in (0,\infty), \\
&\Lambda^{\infty,*}= \frac{c K_W}{\kappa\big(c^2-1\big)^2-K_W}\in (-\infty,\infty),\\
&K_Z^{\infty,*}= \frac{\kappa\big(c^2-1\big)\Big(\kappa \big(c^2-1\big)^2-K_W\Big)-K_W^2}{\big(c^2-1\big)\Big(\kappa \big(c^2-1\big)^2-K_W\Big)} \in (0,\infty).
\end{align}
Since for Regime 1, $\kappa \in {\cal K}^\infty(c,K_W)$, and  $c^2 \in (1,\infty)$, inequality (\ref{thr}) indicates a threshold effect on the achievable rate for the unstable noise. From (\ref{gamma}), the behavior of $C^\infty(\kappa)$ for $\gamma$ large is 
\begin{align}
 C^\infty(\kappa) \simeq \log |c| + \frac{1}{2} \log \gamma, \hst \mbox{for large   $\gamma$.}
\end{align}
(R3) {\it Capacity without feedback   $C^{\infty,nfb}(\kappa,v_0)=C^{\infty,nfb}(\kappa), \forall v_0$}. For Regime 2 and Regime 3,   feedback does not increase capacity. That is, if $\kappa \in {\cal K}^{\infty,nfb}(c,K_W)$ or $\kappa \in {\cal K}^{\infty,nfb}(c)$, there does not exist  channel input process such that the limit in (\ref{int_fb}) converges to a nonzero value.  On the other hand, for $\kappa \in {\cal K}^{\infty,nfb}(c,K_W)$ or $\kappa \in {\cal K}^{\infty,nfb}(c)$,  there exists a channel input without feedback, in particular, with  $\overline{\Lambda}^\infty=\Lambda^\infty=0$ in (\ref{in_1}) and (\ref{in_2}), such  that the limit in (\ref{inter_nfb}) converges to a nonzero value, for the AGN channel driven by a stable and unstable AR$(c),c\in (-\infty, \infty)$  noise.  

(R4) {\it Lower bound  on capacity without feedback $C^{\infty,nfb}(\kappa,v_0)$.} An achievable lower bound  on the characterization of nonfeedback capacity, is obtained,  for the AGN channel driven by a stable and unstable AR$(c),c\in (-\infty, \infty)$  noise, which corresponds to a unit memory channel input, $X_t^o=\overline{\Lambda}^\infty X_{t-1}^o+Z_t^o, X_1^o=Z_1^o,  t=2, \ldots,$ and holds   for any  $\kappa \in [0,\infty)$.   Further, another  achievable  lower bound on the non-feedback capacity  for the AGN channel driven by a stable and unstable AR$(c),c\in (-\infty, \infty)$  noise,    is obtained from (R2),   by letting $\overline{\Lambda}^\infty=\Lambda^\infty=0$, which holds for any  $\kappa \in [0,\infty)$, with corresponding IID channel input, $X_t^o=Z_t^o, t=1, \ldots,$ $K_Z^\infty=\kappa$.
  The  lower bound based on the IID channel input  is given by 
\begin{align}
 C^{\infty,nfb}(\kappa,v_0) \geq & \;C_{LB}^{\infty,nfb}(\kappa) \tri \frac{1}{2} \log\Big( \frac{c^2 K^{\infty,*}  + \kappa +K_{W}}{K_{W}}\Big), \hso \forall \kappa \in [0,\infty), \label{int_nf_8}\\
=&  \frac{1}{2} \log\Big(\frac{ \kappa \Big(1+c^2\Big)+K_W + \sqrt{\Big(\kappa\big(1-c^2\big) +K_W\Big)^2+4c^2 K_W \kappa}}{2K_W}      \Big),& c \in (-\infty, \infty)
\label{int_nf_8_aa} 
\end{align}
where $K^{\infty,*} \tri \lim_{n \longrightarrow \infty} K_n^{o} \in [0,\infty)$ is the unique nonnegative stabilizing solution of (\ref{ARE_in}), that corresponds to  the optimal strategy $(\Lambda^{\infty,*}, K_Z^{\infty,*})=(0,\kappa)$, and    given by  
\begin{align}
&K^{\infty,*}= \left\{ \begin{array}{lll}  \frac{- \kappa \Big(1-c^2\Big)-K_W + \sqrt{\Big(\kappa\big(1-c^2\big) +K_W\Big)^2+4c^2 K_W \kappa}}{2c^2}, & c\neq 0, &\kappa \in [0,\infty),  \\
\frac{\kappa K_W}{\kappa+K_W}, & c=0, & \kappa \in [0,\infty),  \end{array} \right. 
 \label{int_nf_9}\\
& \overline{\Lambda}^\infty=\Lambda^{\infty,*}=0, \hso K_Z^{\infty,*}=\kappa. \label{int_nf_9_a}
\end{align} 

(R5) Numerical evaluations  of the feedback capacity $C^\infty(\kappa)$ for $\kappa \in {\cal K}^\infty(c,K_W)$, i.e. regime 1,  based on (\ref{fb_1}), and the lower bound of non-feedback capacity $C_{LB}^{\infty,nfb}(\kappa)$ for $\kappa \in [0,\infty)$ based on (\ref{int_nf_8_aa}),  are shown in Figure~\ref{fig:ar1cap}. These   illustrate that feedback capacity $C^\infty(\kappa)$ for $\kappa \in {\cal K}^\infty(c,K_W)$ is an increasing function of the parameter, $|c| \in (1, \infty)$, that is, the more unstable the AR$(c)$ noise the   higher the value of  capacity $C^\infty(\kappa)$. Further, the lower bound on non-feedback capacity $C_{LB}^{\infty,nfb}(\kappa)$ is achievable  for all $\kappa \in [0,\infty)$, for stable and unstable  AR$(c), c\in (-\infty,\infty)$ noise.
As illustrated in  Figure~\ref{fig:ar1cap}, for values of $|c|> 1$, a discontinuity occurs at $\kappa=\frac{K_W\big(1+\sqrt{4c^2-3}\big)}{2\big(c^2-1\big)^2}$. \CDC{Note, that this occurs since  i) for $|c| \in (1, \infty)$ and $\kappa\leq\frac{K_W\big(1+\sqrt{4c^2-3}\big)}{2\big(c^2-1\big)^2}$, and ii) for $|c| \in (0, 1]$ and $\kappa \in [0,\infty)$, the depicted curves correspond to  lower bounds on non-feedback capacity $C_{LB}^{\infty,nfb}(\kappa)$,  and not the actual non-feedback capacity, whereas for  $|c| \in (1, \infty)$ and $\kappa>\frac{K_W\big(1+\sqrt{4c^2-3}\big)}{2\big(c^2-1\big)^2}$ the depicted curves correspond to the feedback capacity.} 

\begin{figure}
\centering
  \includegraphics[width=0.9\linewidth]{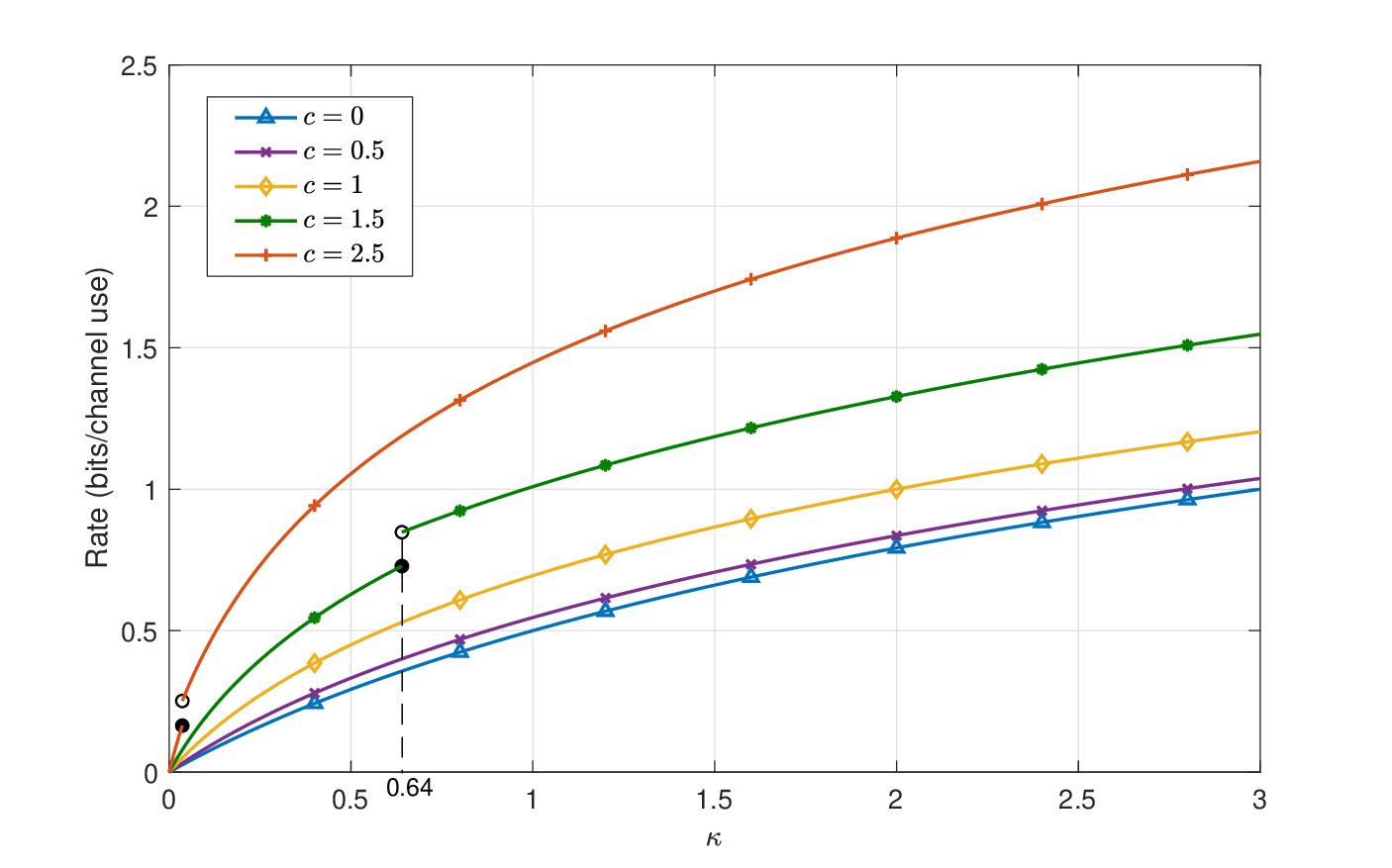}
  \caption{Feedback capacity $C^\infty(\kappa)$ for $\kappa \in {\cal K}^{\infty}(c,K_W)$ based on (\ref{fb_1}) and lower bound on non-feedback capacity $C_{LB}^{\infty,nfb}(\kappa)$ for  $\kappa \in [0,\infty)$ based on (\ref{int_nf_8_aa}),    of the AGN channel driven by AR$(c)$ noise, for various values of $c\in (-\infty, \infty)$ and $K_W=1$.}
  \label{fig:ar1cap}
\end{figure}

\begin{figure}
\centering
  \includegraphics[width=0.8\linewidth]{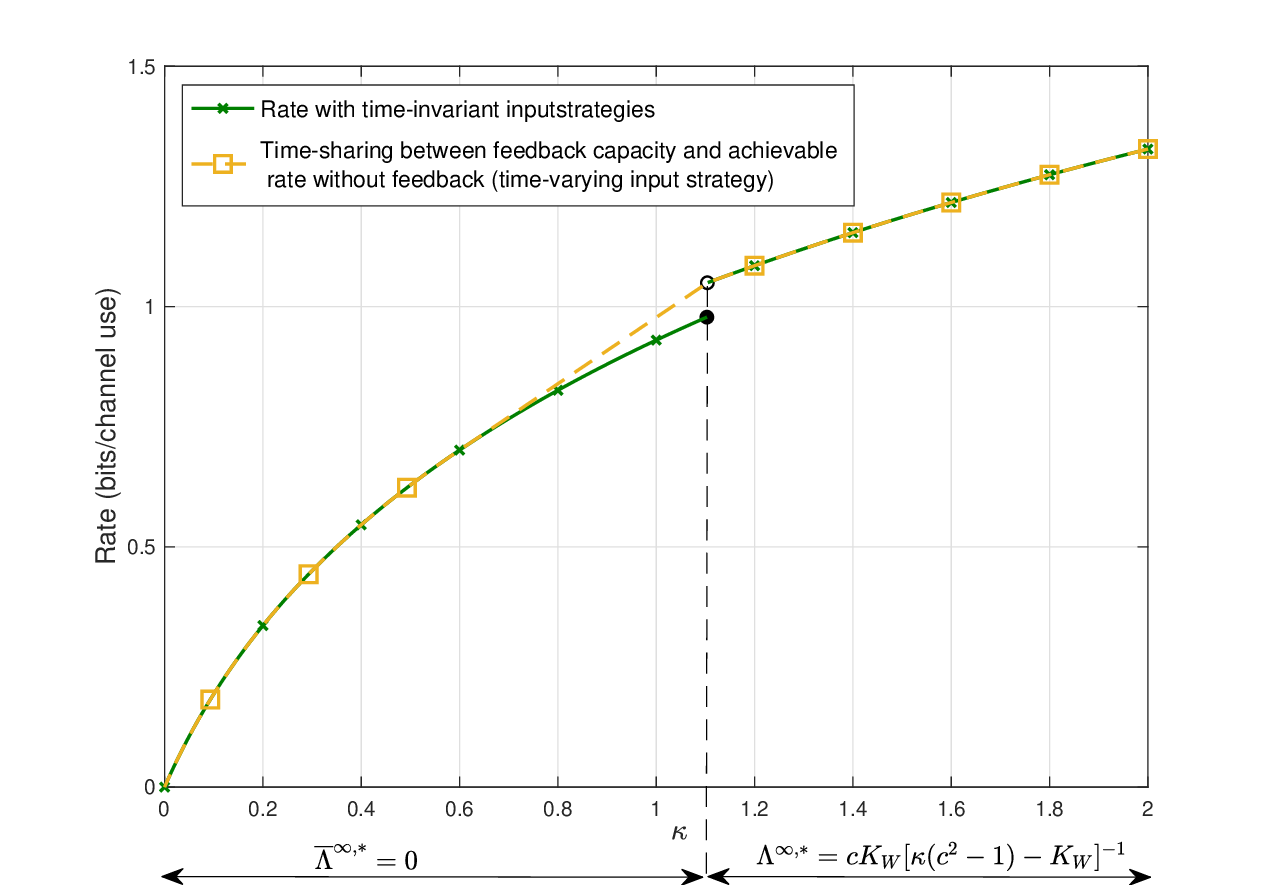}
  \caption{Time-sharing between time-invariant channel input feedback strategy and time-invariant channel input non-feedback strategy (of the lower bound)  for an AGN channel driven by AR$(c)$ noise, with $c=1.5$ and $K_W=1$.}
  \label{fig:timesharing}
\end{figure}

\begin{figure}
\centering
  \includegraphics[width=0.85\linewidth]{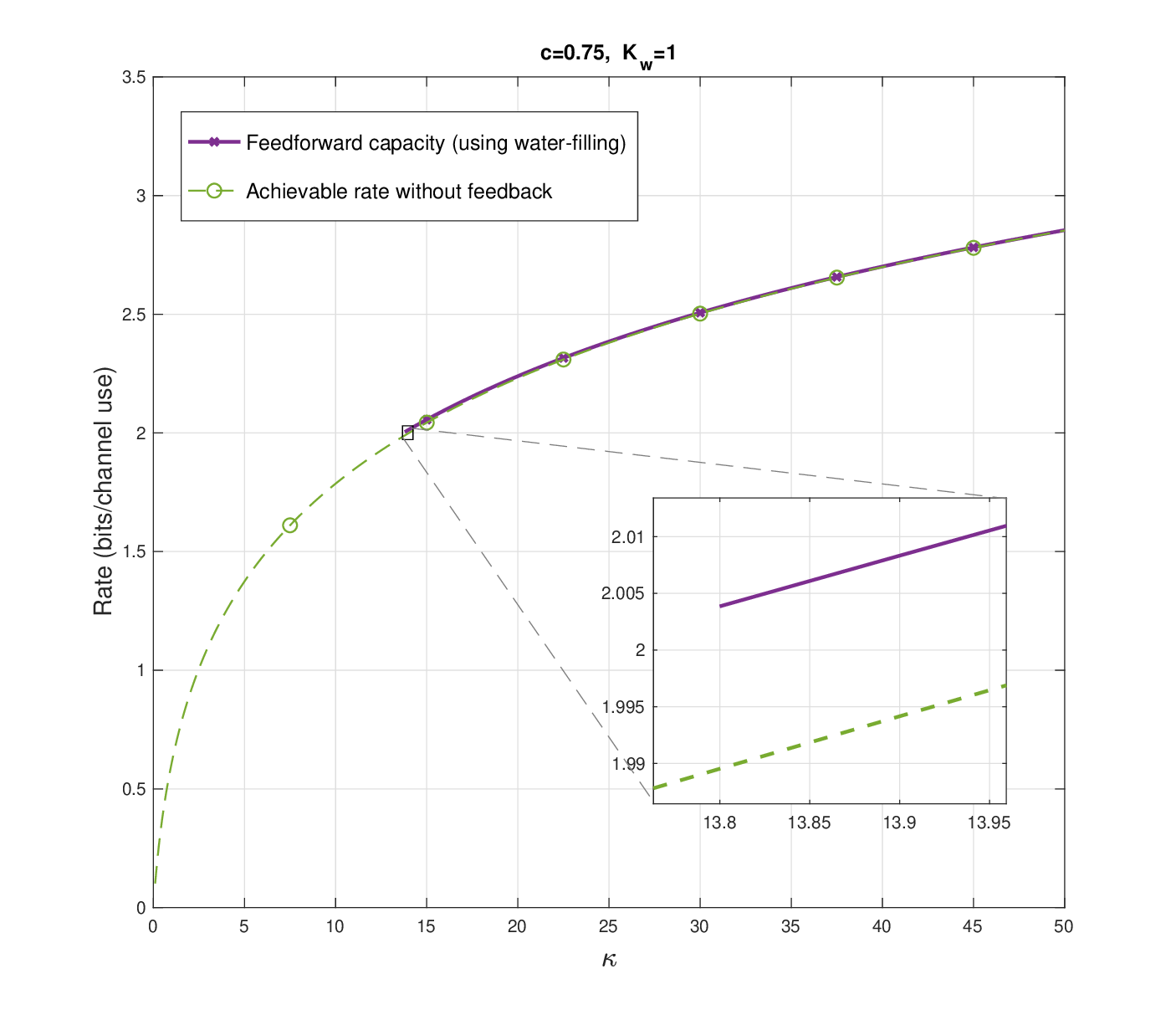}
  \caption{Comparison of non-feedback capacity $C^{nfb}(\kappa)$  based on water-filling formulae (\ref{nfb_wf}), of   \cite[eqn(5.5.14)]{ihara1993} and \cite[eqn(6)]{butman1976},   and lower bound $C_{LB}^{\infty,nfb}(\kappa)$ on  achievable rates without feedback based on formulae  \eqref{int_nf_8_aa}, of transmitting an IID channel input $Z_t^o\in N(0,\kappa)$, for an AGN channel driven by AR$(c)$, noise,  $c=0.75$ and $K_W=1$ (the values correspond to the maximum difference). } 
  \label{fig:compnofeed}
\end{figure}

The rate of a time-sharing scheme between feedback capacity of regime 1, $C^\infty(\kappa),  \kappa \in {\cal K}^\infty(c,K_W)$,  and the lower bound on capacity without feedback $C_{LB}^{\infty,nfb}(\kappa), \kappa \in [0,\infty)$,   is illustrated in Figure~\ref{fig:timesharing}. This scheme  results in higher rates compared to feedback capacity, since it employs a time-varying  channel input strategy, that is, two different strategies, one without feedback and one with feedback   are applied, whereas the feedback capacity of regime 1,   is only  defined over time-invariant channel input strategies.

Now, we digress to recall the closed form  non-feedback capacity formulae, that is derived using water-filling in \cite[eqn(5.5.14)]{ihara1993} (see also (and \cite[eqn(6)]{butman1976}),  for the AGN channel driven a stable AR$(c)$ noise. 
\begin{align}
C^{nfb}(\kappa) =\frac{1}{2}\log\Big(1 +\kappa +   \frac{c^2}{1-c^2}\Big), \hso \kappa>\big(\frac{1}{(1-|c|)^2}-\frac{1}{(1-c^2)}\big), \hso c\in (-1,1),  \hso K_W=1. \label{nfb_wf}
\end{align}
 Figure~\ref{fig:compnofeed} compares the non-feedback capacity $C^{nfb}(\kappa)$ based on (\ref{nfb_wf}) to the achievable lower bound on the non-feedback $C_{LB}^{\infty,nfb}(\kappa)$ based on  \eqref{int_nf_8_aa}, which corresponds to transmitting an IID channel input $Z_t^o\in N(0,\kappa)$, i.e., $K_Z^{\infty,*}=\kappa$,   for $c=0.75$ and $K_W=1$. Surprisingly, 
contrary to the non-feedback lower bound formulae  \eqref{int_nf_8_aa},  which holds for all stable or unstable AR$(c), c\in (-\infty, \infty)$ noise and $\kappa\in[0,\infty)$, the  closed form non-feedback capacity formulae (\ref{nfb_wf}), based on water-filling, is restricted to $\kappa>\big(\frac{1}{(1-|c|)^2}-\frac{1}{(1-c^2)}\big)$, and to the stable AR$(c), c\in (-1,1)$. The  maximum difference $C^{nfb}(\kappa)-C_{LB}^{\infty,nfb}(\kappa)$, when $K_W=1$ occurs at $c=0.75$, and is less than $1.5\times 10^{-2}$ bits per channel use. This difference is expected to be reduced further if a unit memory optimal input $X_t^o=\overline{\Lambda}^\infty X_{t-1}^o+Z_t^o, X_1^o=Z_1^o,  t=2, \ldots,$ is used, as stated in (R4).

\section{Characterizations of $n-$FTFI and $n-$FTwFI Capacity   of  AGN Channels}
\label{sect:problem}
In this section we present the following main results. \\ 
(1) Theorem~\ref{thm_FTFI} (Section~\ref{sect:ar1_r}), which gives the characterization of $n-$FTFI capacity for time-varying  feedback codes of Definition~\ref{def_code}.(a), \\
(2)  high level discussion (Section~\ref{prel_ric}) on the implications of generalized Kalman-filter equations on the characterizations  of $n-$FTFI capacity, and \\
(3) Corollary~\ref{thm_FTWFI} (Section~\ref{sect:wf_ar1_r}), which gives a lower bound   on the  $n-$FTwFI capacity for time-varying  non-feedback codes of Definition~\ref{def_code}.(b), based on a Markov channel input process without feedback, and follows directly from Theorem~\ref{thm_FTFI}.

\subsection{Characterization of $n-$FTFI   Capacity for AGN Channels Driven by  AR$(c_t)$ Noise}
\label{sect:ar1_r}
Below, we  introduce  the characterization of the  $n-$FTFI capacity, for an AGN channel, driven by the  time-varying   AR$(c_t)$ noise, 
for  the  feedback code of Definition~\ref{def_code}.(a).  
 Our presentation, of the next theorem, is based on the degenerate case of the general characterization of the $n-$FTFT capacity of AGN channels, derived in \cite{charalambous-kourtellaris-loykaCDC2018}. We should mention that although,  \cite{yang-kavcic-tatikonda2007},  treats AGN channels driven by stable noise, some parts of  the representation given below can be    extracted from the analysis of  \cite[Section~II-V]{yang-kavcic-tatikonda2007}.\\

\begin{theorem}  Characterization of $n-$FTFI Capacity for AGN Channels Driven by  AR$(c_t)$ Noise\\
\label{thm_FTFI}
Consider the AGN channel (\ref{AGN_in}) driven by a time-varying AR$(c_t)$ noise, i.e., (\ref{ykt_ar_1}), and the  code of Definition~\ref{def_code}.(a). 
Then the following hold.\\
(a) The optimal time-varying channel input distribution with feedback, for the optimization problem $C_n(\kappa,v_0)$ defined by  (\ref{FTFIC_1in}), is  conditionally Gaussian, of  the form
\begin{align}
{\bf P}_{X_t|X_{t-1},Y^{t-1},V_0}={\bf P}_{X_t|V_{t-1},Y^{t-1},V_0}, \hst t=1, \ldots, n \label{tv_is}
\end{align}
and it  is induced by the time-varying jointly Gaussian channel input process $X^n$, with a representation\footnote{\CDC{The fact that $X_1=Z_1, K_1=0, \widehat{V}_0=v_0$ is due to the code definition, i.e., $V_0=v_0$ is known to the encoder.}} 
\begin{align}
&X_t = \overline{\Lambda}_t \Big({X}_{t-1} - \widehat{X}_{t-1}\Big) + Z_t,      \hst t=2, \ldots,n,  \label{Q_1_3_s1}  \\
&\hso \hso =\Lambda_t \Big({V}_{t-1} - \widehat{V}_{t-1}\Big) + Z_t,  \hso \overline{\Lambda}_t=-\Lambda_t, \label{Q_1_3_s1_a} \\
&X_1 =  Z_1, \label{Q_1_3_a_s1}   \\
&Z_t\in  N(0, K_{Z_t}), \hso t=1, \ldots, n \hso \mbox{a  Gaussian sequence,}\label{Q_1_5_s1} \\
&Z_t \hso  \mbox{independent of}  \hso  (V^{t-1},X^{t-1},Y^{t-1}, {V}_0),  \hso t=1, \ldots, n,\label{Q_1_6_s1}\\
&Z^n \hso  \mbox{independent of}  \hso  (V^{n},{V}_0),\\
&V_t=c_t V_{t-1} +W_t, \hso V_0=v_0,\hst c_t \in (-\infty,\infty), \hso t=1, \ldots, n,\label{Q_1_3_a_s1_a} \\
&Y_t= X_t + V_t=  \overline{\Lambda}_t \Big({X}_{t-1} - \widehat{X}_{t-1}\Big)+c_t \Big(Y_{t-1}-X_{t-1}\Big) +W_t + Z_t,  \hso t=2, \ldots, n \label{Q_1_4_s1_a}\\
&\hso \hso =\Lambda_t \Big({V}_{t-1} - \widehat{V}_{t-1}\Big)+c_t V_{t-1} +W_t + Z_t,  \label{Q_1_4_s1}\\
&Y_1=Z_1+c_1 V_0 +W_1, \hso V_0=v_0, \label{Q_1_4_s1_a_b}\\
&\frac{1}{n}  {\bf E}_{v_0} \Big\{\sum_{t=1}^{n} \big(X_t\big)^2\Big\}=\frac{1}{n}   \sum_{t=1}^n \Big\{\Big(\Lambda_t\Big)^2  K_{t-1}  + K_{Z_t} \Big\}   \leq \kappa,     \label{cp_e_ar2_s1}\\
&(\Lambda_t, K_{Z_t})\in (-\infty, \infty) \times [0,\infty) \hst \mbox{scalar-valued,  non-random,}\\
&\widehat{X}_{t} \tri {\bf E}_{v_0}\Big\{X_{t} \Big| Y^{t}\Big\}, \hso \widehat{V}_{t} \tri {\bf E}_{v_0}\Big\{V_{t} \Big| Y^{t}\Big\},  \\
& K_{t}\tri {\bf E}_{v_0}\left\{\Big(X_{t} - \widehat{X}_{t}\Big)^2  \right\}={\bf E}_{v_0}\left\{\Big(V_{t} - \widehat{V}_{t}\Big)^2  \right\},  \hst  t=1, \ldots, n.\label{s1_24}
\end{align}  
Further, $H(Y^n|v_0)-H(V^n|v_0)$,     $(\widehat{V}_t, K_t), t=1, \ldots, n$ are determined  by the generalized\footnote{Unlike \cite{yang-kavcic-tatikonda2007}, we use    the term generalized, because,  the conditions for the asymptotic analysis to hold,  are fundamentally different from those of asymptotic analysis of  classical Kalman-filter equations.} time-varying Kalman-filter  and generalized time-varying difference Riccati  equation (DRE), of estimating $V^n$ from $Y^n$, given below.

{\it  Generalized   Kalman-filter Recursion for (\ref{Q_1_3_a_s1_a})-(\ref{Q_1_4_s1_a}) \cite{caines1988,kailath-sayed-hassibi}:} 
\begin{align}
&\widehat{V}_{t} = c_t \widehat{V}_{t-1} + M_t(K_{t-1}, \Lambda_t, K_{Z_t}) I_t, \hso \widehat{V}_{0}=v_0, \label{Q_1_8_s1} \\
&\hso = F_t(K_{t-1},\Lambda_t,K_{Z_t}) \widehat{V}_{t-1} + M_t(K_{t-1}, \Lambda_t, K_{Z_t}) Y_t, \hso \widehat{V}_{0}=v_0, \label{Q_1_8_s1_a}  \\
&I_t \tri Y_t-{\bf E}_{v_0}\Big\{Y_t\Big|Y^{t-1}\Big\}= Y_t - c_t \widehat{V}_{t-1}, \hso I_1= Z_1+W_1,\hso    t=2, \ldots, n, \label{Q_1_9_s1}\\
& \hso = \Big(\Lambda_t +c_t\Big)\Big(V_{t-1}- \widehat{V}_{t-1}\Big) + Z_t +W_t,\label{Q_1_9_s1_n} \\
&M_t(K_{t-1}, \Lambda_t, K_{Z_t})  \tri  \Big( K_{W_t} + c_t  K_{t-1}\Big(\Lambda_t + c_t \Big)\Big)\Big(K_{Z_t}+ K_{W_t} + \Big(\Lambda_t + c_t \Big)^2 K_{t-1}\Big)^{-1}, \\
& F_t(K_{t-1},\Lambda_t, K_{Z_t}) \tri c_t -M_t(K_{t-1}, \Lambda_t, K_{Z_t}) \Big(\Lambda_t +c_t\Big)\\
& I_t, \hso t=1, \ldots, n, \hst \mbox{an orthogonal innovations process.}\label{Q_1_11_s1}
 \end{align}
 {\it Generalized Time-Varying Difference Riccati  Equation:}
\begin{align}
K_{t}= & c_t^2 K_{t-1}  + K_{W_t} -\frac{ \Big( K_{W_t} + c_t K_{t-1}\Big(\Lambda_t + c_t \Big)\Big)^2}{ \Big(K_{Z_t}+ K_{W_t} + \Big(\Lambda_t + c_t \Big)^2 K_{t-1}\Big)}, \hst  K_t \geq0,  \hso K_{0}=0, \hst t=1, \ldots, n, 
 \label{Q_1_10_s1}
 \end{align}
{\it Error Recursion of the Generalized Kalman-filter, $E_t\tri V_t-\widehat{V}_t, t=1, \ldots, n$ :}
 \begin{align} 
E_t=&F_t(K_{t-1},\Lambda_t,K_{Z_t}) E_{t-1}-M_t(K_{t-1}, \Lambda_t, K_{Z_t})  \Big(Z_t+ W_t\Big)+W_t,  \hso E_0=v_0-\widehat{V}_0, \ \ t=1, \ldots, n. \label{i_error_s1}
\end{align} 
Entropy of Channel Output Process:
\begin{align}
H(Y^n|v_0)= \sum_{t=1}^n H(Y_t|Y^{t-1},v_0)= \sum_{t=1}^n H(Y_t-{\bf E}\big\{Y_t|Y^{t-1}, v_0\big\}|Y^{t-1},v_0)
=\sum_{t=1}^n H(I_t). \label{Q_1_4_s1_aaa}
\end{align}
(b) The characterization of the $n-$FTFI capacity $C_n(\kappa,v_0)$ defined by  (\ref{FTFIC_1in}) is
\begin{align}
&C_{n}(\kappa,v_0) 
\tri  \sup_{\big(\Lambda_t, K_{Z_t} \big), t=1,\ldots, n: \hso \frac{1}{n} \sum_{t=1}^n \Big\{\big(\Lambda_t\big)^2 K_{t-1}+ K_{Z_t}\Big\}   \leq \kappa}  \frac{1}{2}  \sum_{t=1}^n   \log\Big( \frac{\big(\Lambda_t+c_t\big)^2 K_{t-1}  + K_{Z_t} +K_{W_t}}{K_{W_t}}\Big) \label{tv_stra}   \\
&\mbox{subject to: $K_t, t=1, \ldots,n$ satisfies recursion (\ref{Q_1_10_s1}) and $K_{Z_t}\geq 0 , t=1, \ldots, n$.}
\end{align} 
\end{theorem}

\begin{proof} (a) Representation (\ref{tv_is}) follows, from a degenerate case of  \cite{charalambous-kourtellaris-loykaCDC2018}. The representation of the jointly Gaussian process $X^n$, defined by (\ref{Q_1_3_s1_a}), such that $Z^n$ satisfies  (\ref{Q_1_5_s1}) and (\ref{Q_1_6_s1}), is also a degenerate case of  \cite{charalambous-kourtellaris-loykaCDC2018}, where the channel is more general, of the form $Y_t= C_{t-1} Y_{t-1}+ D_t X_t +D_{t,t-1} X_{t-1} +V_t$, $V_t=F_tV_{t-1} +W_t$, where $(C_{t,t-1}, D_t, D_{t,t-1}, F_t)$ are nonrandom,  i.e., with  past dependence on channel inputs and outputs.   
The representation (\ref{Q_1_3_s1}) follows from (\ref{Q_1_3_s1_a}), by substituting $V_{t-1}=Y_{t-1}-X_{t-1}$.  Expressions (\ref{Q_1_3_a_s1_a})-(\ref{s1_24}) follow directly from (\ref{Q_1_3_s1}) and  (\ref{Q_1_3_s1_a}), and the channel definition.  The generalized Kalman-filter equations follow from standard texbooks, i.e., \cite{caines1988}. (\ref{Q_1_4_s1_aaa}) follows from the independent property of the innovations process.  (b) Follows from (\ref{FTFIC_1in}), (\ref{Q_1_4_s1_aaa}),  $H(V^n|v_0)=\sum_{t=1}H(W_t)$, and part (a).  
\end{proof}

\ \

\begin{remark} By the definition of the innovations process and entropy, (\ref{Q_1_9_s1_n}) and (\ref{Q_1_4_s1_aaa}), it follows that whether the limit exists, $\lim_{n \longrightarrow \infty}\frac{1}{n} \big\{H(Y^n|v_0)-H(V^n|v_0)\big\}= \lim_{n \longrightarrow \infty}\frac{1}{n}\sum_{t=1}^n \big\{H(I_t)-H(W_t)\big\}\in [0,\infty)$ can be  determined from the limiting  covariance of the innovations process $I^n$ and noise $W^n$. Similarly, for $\lim_{n \longrightarrow \infty} \frac{1}{n}  {\bf E}_{v_0} \Big\{\sum_{t=1}^{n} \big(X_t\big)^2\Big\}\in [0,\infty)$ by     (\ref{cp_e_ar2_s1}). 
\end{remark}

\subsection{Some Preliminary Facts on Generalized Difference and Algebraic Riccati Equations}
\label{prel_ric}
At this point we pause to discuss several facts,  to provide insight into  the questions of asymptotic analysis,  of  $C(\kappa,s_0)$,  defined by (\ref{FTFI_2}), and $C^\infty(\kappa,v_0)$ defined by (\ref{inter}),  based on the properties of generalized DREs and AREs. These   are   well-developed in the systems and control theory literature, for control and estimation problems  \cite{kailath-sayed-hassibi,caines1988}. The reader  finds a more detailed  presentation in Section~\ref{sect:q1},  Section~\ref{sect:pre}, and  Theorem~\ref{thm_ric}. 


{\bf Fact 1.} {\it The generalized time-varying Kalman-filter with correlated state and observation noise.} \\
The definition of the innovations process $I^n$ of $Y^n$, i.e.,   (\ref{Q_1_9_s1}),  (\ref{Q_1_9_s1_n}), implies    the the generalized Kalman-filter recursions (\ref{Q_1_8_s1})-(\ref{Q_1_10_s1}), correspond to the problem of estimating $X^n$ from $\overline{Y}^n$, of the  system: 
\begin{align}
&V_t=A_t V_{t-1} +W_t, \hso V_0=v_0, \hso t=1, \ldots, n,\label{state_1}  \\
&\overline{Y}_t= C_t {V}_{t-1}  +W_t + Z_t,  \hso \overline{Y}_t\tri Y_t+ \Lambda_t  \widehat{V}_{t-1},  \label{obse_1}\\
&A_t\tri c_t, \hso   C_t \tri  \Lambda_t +c_t.
\end{align}
The terminology,   ``generalized DREs'' is often used in the literature, because in system (\ref{state_1}),  (\ref{obse_1}),  the noises entering $V^n$ and $\overline{Y}^n$ are correlated.  The  noise covariance,  for system (\ref{Q_1_3_a_s1_a}), (\ref{Q_1_4_s1}), or system 
(\ref{state_1}),  (\ref{obse_1}) is thus,  defined by 
\begin{align}
&cov( \left[ \begin{array}{c} W_t  \\ W_t +Z_t\end{array} \right], \left[ \begin{array}{c} W_t  \\ W_t +Z_t\end{array} \right]^T) \tri    \left[ \begin{array}{cc} Q_t & S_t  \\ S_t & R_t\end{array} \right]= \left[ \begin{array}{cc} K_{W_t} & K_{W_t}  \\ K_{W_t} &K_{W_t} +K_{Z_t}\end{array} \right]\ \succeq 0,\hso   t=1,\ldots, n, \label{cp_e_ar2_s1_a}\\
&Q_t\tri K_{W_t}, \hso S_t\tri K_{W_t}, \hso R_t \tri K_{W_t}+K_{Z_t} >0.
\end{align}
With  the above notation,  then the  
generalized time-varying DRE (\ref{Q_1_10_s1}),  is given by 
\begin{align}
K_{t}= & A_t^2 K_{t-1}  + K_{W_t} -\frac{ \Big( K_{W_t} + A_t K_{t-1} C_t \Big)^2}{ \Big(R_t + \Big(C_t \Big)^2 K_{t-1}\Big)}, \hst  K_t \geq0,  \hso K_{0}=0, \hst t=1, \ldots, n, 
 \label{Q_1_10_s1_new1}
 \end{align}
By \cite{kailath-sayed-hassibi},  it  follows that the   
generalized time-varying DRE (\ref{Q_1_10_s1}), equivalently (\ref{Q_1_10_s1_new1}),  is the time-varying version of  the one studied in \cite[Section~17.7, i.e., eqn(14.7.1), and throughout the book, for $S\neq 0$]{kailath-sayed-hassibi}. That is, the Kalman-filter (\ref{Q_1_8_s1})-(\ref{Q_1_10_s1}) or equivalently, the Kalman-filter   for  (\ref{state_1}),  (\ref{obse_1}),  is not the classical Kalman-filter. The generalized Kalman-filter (\ref{Q_1_8_s1})-(\ref{Q_1_10_s1}) reduces to the classical Kalman-filter if and only if the noise $W^n$ does not enter $Y^n$ or $\overline{Y}^n$, that is,  $S=0$ (see  the notation in \cite[$S=0$]{kailath-sayed-hassibi}). 


{\bf Fact 2.} {\it On the convergence of solutions of time-invariant generalized DREs to solutions of  AREs.}\\
For the time-invariant  AR$(c)$ noise, i.e., $c_t=c, K_{W_t}=K_W, t=1, \ldots, n$, consider  the time-invariant channel input distributions or  strategies,  i.e., $ \Lambda_t=\Lambda^\infty, K_{Z_t}=K_{Z}^\infty, t=1, \ldots, n$, which do not imply the corresponding generated  process $(X^o,Y^o)$ is stationary.  Let  $K_t^o, t=0,1, \ldots, n$, denote the sequence  generated by the  strategies $(\Lambda^\infty,K_{Z}^\infty)$. Then from Theorem~\ref{thm_FTFI}, we obtain $C^\infty(\kappa,v_0)$ defined by (\ref{int_fb}), (\ref{DRE_in}),   that is,   the time-invariant version of the generalized DRE (\ref{Q_1_10_s1}),  equivalently (\ref{Q_1_10_s1_new1}):
\begin{align}
&K_{t}^o=  A^2 K_{t-1}^o  + K_{W} -\frac{ \Big( K_{W} + A K_{t-1}^o C^\infty \Big)^2}{ \Big(R^\infty + \Big(C^\infty \Big)^2 K_{t-1}^o\Big)}, \hst  K_t^o \geq0,  \hso K_{0}^o=0, \hst t=1, \ldots, n, 
 \label{Q_1_10_s1_new1_a}\\
 & C^\infty\tri C+ \Lambda^\infty, \hso R^\infty\tri K_W + K_Z^\infty. 
 \end{align} 
 Hence, to determine whether the limit in $C^\infty(\kappa,v_0)$ defined by (\ref{int_fb}),  exists, it is necessary to understand the convergence  properties of  $K_{t}^o, t=0,1,\ldots, n$, as $n\longrightarrow \infty$, and these properties depend on   the values of parameters $(c,K_W,\kappa)$. 
It is well-known, and easily  verified from \cite{kailath-sayed-hassibi,caines1988},  that  the  convergence properties of the mean-square error $K_{t}^o, t=0,1,\ldots, n$, as $n\longrightarrow \infty$, of  the generalized Kalman-filter, are fundamentally different from those  of the classical Kalman-filter.   In particular, even for the special case of  a stable   AR$(c)$ noise, i.e., $c\in (-1,1)$,   by   Section~\ref{sect:pre} (i.e.,  Theorem~\ref{thm_ric}.(1)), the conditions, known as detectability and stabilizability, are sufficient and/or necessary  conditions, for the  convergence of the  mean-square error $K_{t}^o, t=0,1,\ldots,n$, as $n\longrightarrow \infty$,     to a finite nonnegative, unique limit  $K^\infty\geq 0$, such that  $K^\infty$ satisfies the generalized algebraic Riccati equation (ARE), i.e., the steady state version of (\ref{Q_1_10_s1_new1_a}): 
\begin{align}
K^\infty =A^2 K^\infty  + K_{W} -\frac{ \Big( K_{W} + A K^\infty C^\infty \Big)^2}{ \Big(R^\infty + \Big(C^\infty \Big)^2 K^\infty\Big)}, \hst  K^\infty \geq 0. 
 \label{Q_1_10_s1_new1_ain}
 \end{align}
On the other hand, the following is a well-known property of classical Kalman-filters, which is   easily verified from the properties of DREs and AREs presented in    \cite[with $S=0$]{kailath-sayed-hassibi} and \cite{caines1988}: for classical time-invariant DREs and classical AREs, i.e., that  correspond to a stable state process,  to be estimated, driven by a Gaussian noise, and such that,  the noise is independent of  the Gaussian noise entering the observations,   then the detectability and stabilizability conditions are automatically satisfied. This implies   the mean-square estimation error of  Kalman-filter  converges,  to a finite nonnegative, unique  limit, which satisfies a classical ARE. The unique solution of the classical ARE is also  stabilizing.

{\bf Fact 3.} {\it On the zero variance of the innovations process of the channel input.}\\
Suppose  $K_Z^\infty$ in  (\ref{Q_1_10_s1_new1_a}) and (\ref{Q_1_10_s1_new1_ain}),  is replaced by $K_Z^\infty=0$, i.e., $R^\infty=K_W$. Then the resulting generalized ARE  (\ref{Q_1_10_s1_new1_ain}), with $K_Z^\infty$ is precisely the Riccati equation in the feedback characterization in \cite[Theorem~6.1, $\Sigma=K^\infty$]{kim2010} for the AR$(c)$ noise,  and this equation is  a quadratic polynomial in $K^\infty$, with two solutions:
\begin{align}
K^\infty=0,  \hst K^\infty\equiv K^\infty(\Lambda^\infty) =\frac{K_W\Big((\Lambda^\infty)^2-1\Big)}{\Big(\Lambda^\infty+ c\Big)^2}, \hst \Lambda^\infty \neq -c. \label{gre_1_a}
\end{align} 
The second   solution $K^\infty(\Lambda^\infty)$     is a  functional of  $\Lambda^\infty$, and gives rize to solutions:
\begin{align}
&K^\infty(\Lambda^\infty) =\frac{K_W\Big((\Lambda^\infty)^2-1\Big)}{\Big(\Lambda^\infty+ c\Big)^2}\geq 0, \hst \mbox{if and only if $|\Lambda^\infty|\geq 1$,} \label{gre_1}\\
&K^\infty(\Lambda^\infty) =\frac{K_W\Big((\Lambda^\infty)^2-1\Big)}{\Big(\Lambda^\infty+ c\Big)^2}\leq  0, \hst \mbox{if and only if $|\Lambda^\infty|\leq  1$.} \label{gre_2}
\end{align}
{\it The main question is then:} which one of the solutions $K^\infty$ is the unique limit of the sequence $K_{t}^o, t=0,1,\ldots,n$,  as $n\longrightarrow \infty$, which then defines the  unique limit of the entropy rate $\frac{1}{n}H(Y^{n,o}|v_0)=\frac{1}{n}\sum_{t=1}^nH(I_t^o)$ in (\ref{Q_1_4_s1_aaa}) and average power $\frac{1}{n}  {\bf E}_{v_0} \Big\{\sum_{t=1}^{n} \big(X_t\big)^2\Big\}$ in  (\ref{cp_e_ar2_s1}),  as $n\longrightarrow \infty$? \\
{\it The answer to this question is:} the solution that corresponds to $|\Lambda^\infty|<1$; this  follows from   the properties of generalized DREs and AREs, stated in Theorem~\ref{thm_ric}.(1),  (see also Lemma~\ref{lem_pr_are}.(3)),  because     $|\Lambda^\infty|<1$ is a necessary and sufficient condition for convergence of  $K_{t}^o, t=0,1,\ldots,n$ for all $K_0^o \geq 0$,  as $n\longrightarrow \infty$,   to a unique limit $K^\infty\geq 0$ which satisfies the generalized ARE (\ref{Q_1_10_s1_new1_ain}). Consequently, the unique nonnegative limit of the $K_{t}^o, t=0,1,\ldots,n \; \forall K_0^o\geq 0$,  as $n\longrightarrow \infty$, is   $K^\infty=0$, since both solutions (\ref{gre_1}) and (\ref{gre_2}) are ruled out, by the condition $|\Lambda^\infty|<1$. By the expression of the entropies $H(Y^{n,o}|v_0)-H(V^n|v_0)=\sum_{t=1}^n \big\{H(I_t^o)-H(W_t)\big\}$, inside the limit in  (\ref{int_fb}), if $K_Z^\infty=0$,   then   $C^\infty(\kappa,v_0)=0, \forall \kappa \in [0,\infty), \forall v_0$. This means the feedback capacity characterization in \cite[Theorem~6.1, $C_{FB}$]{kim2010} is zero. An alternative illustration is given in Counterexample~\ref{counterex}.

\CDC{
\begin{remark} 
{\bf Facts 1-3} state  that certain  fundamental technical issues need to be accounted for in the asymptotic analysis of entropy rates and of the average power in \cite{liu-han2017,liu-han2019,gattami2019,li-elia2019},  even if the noise $V^n$ is stable, stationary, etc.  to ensure optimal time-invariant channel input strategies  induce asymptotic stationarity of the channel  input  process, and  asymptotic stationarity of the output process (for stable noise). 
\end{remark}
}

\subsection{Converse Coding Theorem  for  AGN Channels}
By Theorem~\ref{thm_FTFI}, the characterization of $n-$FTFI capacity, $C_{n}(\kappa,v_0)$,  is expressed in terms of the mean-square error $K_t, t=1, \ldots,n$, that satisfies the time-varying generalized RDE  (\ref{Q_1_10_s1}).  We recall  the error recursion of the generalized Kalman-filter given by  (\ref{i_error_s1}), which satisfies a  linear time-varying recursion, and hence its convergence properties, in mean-square sense, i.e., $\lim_{n\longrightarrow \infty} K_n = \lim_{n\longrightarrow \infty}{\bf E}_{v_0}\big\{\big(E_n\big)^2 \big\}$  is determined by the properties of $F_t(K_{t-1},\Lambda_t,K_{Z_t})$ and  $M_t(K_{t-1}, \Lambda_t, K_{Z_t}),\Lambda_t, K_{Z_t}, t=1,2,  \ldots$. In general, $\lim_{n \longrightarrow \infty } K_n =\lim_{n \longrightarrow \infty } {\bf E}_{v_0}\big\{\big(E_n\big)^2 \big\}$ does not exist, for arbitrary  $F_t(K_{t-1},\Lambda_t,K_{Z_t})$, $M_t(K_{t-1}, \Lambda_t, K_{Z_t}), \Lambda_t, K_{Z_t}, t=1,2,  \ldots$. In view of the error recursion (\ref{i_error_s1}), we have the following theorem. \\

\begin{theorem} Converse coding theorem\\
Consider the feedback code ${\cal C}_{{\mathbb Z}^+}^{fb}$ of Definition~\ref{def_code}.(a). \\
{\it Converse Coding Theorem.}  If there exists a feedback code ${\cal C}_{{\mathbb Z}^+}^{fb}$, i.e., with $\epsilon_n \longrightarrow 0$, as $n \longrightarrow \infty$, then the code rate $R(v_0)$ satisfies:
\bea
R(v_0) \leq C(\kappa,v_0)\tri  \lim_{ n \longrightarrow \infty} \frac{1}{n} C_{n}(\kappa,v_0),    \hst C_{n}(\kappa,v_0) \hso \mbox{defined in Theorem~\ref{thm_FTFI}.(b)}   \label{fc}
\eea
provided the following conditions hold:\\
(C1) the maximizing element, denoted by  $(\Lambda_t^*, K_{Z_t}^*), t=1, \ldots, n$ which satisfies the average power constraint  exists, and \\
(C2)  the limit exists in $[0,\infty)$.
\end{theorem}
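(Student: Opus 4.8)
The plan is to run the classical Fano-inequality converse, invoking the per-block upper bound $I^e(W;Y^n\mid v_0)\le C_n(\kappa,v_0)$ that is already packaged in (\ref{cp_ub})--(\ref{FTFIC_1in}) and in Theorem~\ref{thm_FTFI}.(b), and then to divide by $n$ and pass to the per-unit-time limit, where hypotheses (C1) and (C2) are exactly what make the passage legitimate.

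First I would fix an arbitrary element $(n,\lceil M_n\rceil,v_0,\kappa,\epsilon_n)$ of the code sequence ${\cal C}_{{\mathbb Z}^+}^{fb}$, so that its encoding maps $e(\cdot)$ lie in ${\cal E}_{[0,n]}(\kappa)$ and its decoder has error probability at most $\epsilon_n$ for the fixed initial state $V_0=v_0$, as in (\ref{g_cp_4}). Since $W$ is uniform on ${\cal M}^{(n)}$, Fano's inequality conditioned on $V_0=v_0$ gives $H(W\mid Y^n,v_0)\le 1+\epsilon_n\log\lceil M_n\rceil$, hence
\begin{align}
\log\lceil M_n\rceil &= H(W\mid v_0) = I^e(W;Y^n\mid v_0)+H(W\mid Y^n,v_0) \nonumber \\
&\le I^e(W;Y^n\mid v_0)+1+\epsilon_n\log\lceil M_n\rceil ,
\end{align}
so that, for $n$ large enough that $\epsilon_n<1$, the coding rate obeys $r_n=\frac{1}{n}\log\lceil M_n\rceil \le \frac{1}{n(1-\epsilon_n)}\big(I^e(W;Y^n\mid v_0)+1\big)$. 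Next I would upper bound the mutual information term: because $e(\cdot)\in{\cal E}_{[0,n]}(\kappa)$, the maximum-entropy inequality (\ref{cp_ub}), the chain rule $I^e(W;Y^n\mid v_0)=\sum_{t=1}^n I^e(W;Y_t\mid Y^{t-1},v_0)$, the data-processing inequality, and the reduction (\ref{FTFIC_1a}) combine to give $I^e(W;Y^n\mid v_0)\le\sup_{{\cal E}_{[0,n]}(\kappa)}I^e(W;Y^n\mid v_0)\le C_n(\kappa,v_0)$, the quantity of Theorem~\ref{thm_FTFI}.(b); here hypothesis (C1) is what guarantees that the supremizing pair $(\Lambda_t^*,K_{Z_t}^*)$ exists inside the power-constrained set, so that $C_n(\kappa,v_0)$ is a genuine finite maximum. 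Substituting yields $r_n\le\frac{1}{n(1-\epsilon_n)}\big(C_n(\kappa,v_0)+1\big)$ for all large $n$; letting $n\to\infty$, using $\epsilon_n\to0$, $\frac{1}{n}\to0$, and hypothesis (C2) that $\frac{1}{n}C_n(\kappa,v_0)\to C(\kappa,v_0)$ with a finite limit, the right-hand side converges to $C(\kappa,v_0)$, so $\liminf_n r_n\le C(\kappa,v_0)$. Since the code sequence achieves rate $R(v_0)$, i.e.\ $\liminf_n r_n\ge R(v_0)$ by Definition~\ref{def_code}, we conclude $R(v_0)\le C(\kappa,v_0)$.

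\textbf{Main obstacle.} The information-theoretic chain above is routine once Theorem~\ref{thm_FTFI}.(b) is in hand; the genuinely substantive content is hidden in verifying hypotheses (C1) and (C2), and these are deliberately left as hypotheses of this theorem rather than proved here. Namely, (C1) is tied to the feasibility and convergence of the generalized DRE (\ref{Q_1_10_s1}) along the optimal strategy, and (C2) requires that the per-block optima $\frac{1}{n}C_n(\kappa,v_0)$ actually converge to a finite number independent of $v_0$, which (as the rest of the paper argues via the detectability/stabilizability analysis of the generalized DRE/ARE of Theorem~\ref{thm_ric}) holds under condition {\bf (C1)} on the input strategies and fails, e.g., for the unstable regime with too small a power budget. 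I would also note, but not belabor, the harmless normalization mismatch between the $\frac{1}{n}\sum_{t=1}^n$ power constraint in (\ref{AGN_in}) and the $\frac{1}{n+1}\sum_{i=0}^n$ appearing in the definition of ${\cal E}_{[0,n]}(\kappa)$, which contributes an $O(1/n)$ term that vanishes in the limit.
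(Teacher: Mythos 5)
Your proposal is correct and follows essentially the same route as the paper: the paper's own proof simply defers to the Cover--Pombra converse (citing \cite{cover-pombra1989}) together with the preceding discussion establishing $I^e(W;Y^n\mid v_0)\le C_n(\kappa,v_0)$, and your Fano-inequality chain is exactly the argument being invoked, written out explicitly. Your treatment of (C1)/(C2) as hypotheses that license the passage to the limit, and your remark on the harmless $O(1/n)$ normalization mismatch in the power constraint, are consistent with the paper's intent.
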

\begin{proof}  (C1) and (C2) follow from the above discussion; the converse coding theorem is similar to \cite{cover-pombra1989}.
\end{proof}

\begin{remark} By the average power  (\ref{cp_e_ar2_s1}) and  optimization problem  (\ref{tv_stra}),   it is necessary to identify sufficient and/or necessary conditions such that  the maximizing element, $(\Lambda_t^*, K_{Z_t}^*), t=1, \ldots, n$, exists in the set, and to  ensure  convergence of  $K_n = {\bf E}_{v_0}\big\{\big(E_n\big)^2 \big\}$ (that satisfies the time-varying DRE (\ref{Q_1_10_s1})),  as $n\longrightarrow \infty$,   such that the limit in (\ref{fc}) exists in $[0,\infty)$. However,  to ensure   $C(\kappa,v_0)$ is independent of $v_0$, it is necessary that the   limit is also  independent of $v_0$. On the other hand, if the limit  $C(\kappa,v_0)$ depends on $V_0=v_0$, then one needs to consider a formulation based on  compound capacity, by taking infimum over all initial states $V_0=v_0$, as done,  for example, in \cite{permuter-weissman-goldsmith2009}, for finite state feedback channels,  otherwise different $v_0$ give rise to different rates.  
\end{remark}

\subsection{Lower Bound on Characterization of $n-$FTwFI  Capacity for AGN Channels Driven by  AR$(c_t)$ Noise}
\label{sect:wf_ar1_r}
Next, we give a lower bound on the characterization of $n-$FTwFI  Capacity, for the  non-feedback code  of Definition~\ref{def_code}.(b),  which follows directly from  Theorem~\ref{thm_FTFI}. \\

\begin{corollary}  Lower bound on characterization of $n-$FTwFI Capacity for AGN Channels Driven by  AR$(c_t)$ Noise\\
\label{thm_FTWFI}
Consider the AGN channel (\ref{AGN_in}) driven by a time-varying AR$(c_t)$ noise, i.e., (\ref{ykt_ar_1}), and the code without feedback, of Definition~\ref{def_code}.(b). Define the information theoretic optimization problem of  capacity without feedback, i.e., the analog of (\ref{FTFIC_1in}),  by
\begin{align}
C_n^{nfb}(\kappa,v_0) \tri   \sup_{ {\bf P}_{X_t|X^{t-1}, V_0}, t=1, \ldots, n: \hso \frac{1}{n}  {\bf E}\big\{\sum_{t=1}^n \big(X_t\big)^2 \big\}\leq \kappa}  H(Y^n|v_0)-H(V^n|v_0) \label{FTFIC_1}
\end{align}
provided the supremum exists. Then the following hold.\\
(a) A lower bound on $C_n^{nfb}(\kappa,v_0)$ is obtained by the  conditionally Gaussian, time-varying  channel input distribution without feedback,  given by  
\begin{align}
{\bf P}_{X_t|X^{t-1},V_0}={\bf P}_{X_t|X_{t-1},V_0}, \hst t=1, \ldots, n \label{is_nfb}
\end{align}
which is induced by the time-varying jointly Gaussian channel input process $X^n$, with a representation 
\begin{align}
&X_t = \overline{\Lambda}_t {X}_{t-1} + Z_t,      \hst t=2, \ldots,n,  \label{Q_1_3_s1_w}  \\
&X_1 = Z_1, \label{Q_1_3_a_s1_w}   \\
&Z_t\in  N(0, K_{Z_t}), \hso t=1, \ldots, n \hso \mbox{a  Gaussian sequence,}\label{Q_1_5_s1_w} \\
&Z_t \hso  \mbox{independent of}  \hso  (V^{t-1},X^{t-1},Y^{t-1}, {V}_0),  \hso t=1, \ldots, n,\label{Q_1_6_s1_w}\\
&Z^n \hso  \mbox{independent of}  \hso  (V^{n},{V}_0),\\
&V_t=c_t V_{t-1} +W_t, \hso V_0=v_0,\hst c_t \in (-\infty,\infty), \hso t=1, \ldots, n, \label{Q_1_3_a_s1_a_w}  \\
&Y_t= X_t + V_t=  \Big(\overline{\Lambda}_t-c_t\Big) {X}_{t-1}+c_t Y_{t-1} +W_t + Z_t, \hso t=2, \ldots, n, \label{Q_1_4_s1_a_w}\\
&Y_1=Z_1+c V_0 +W_1,\hso V_0=v_0, \label{Q_1_4_s1_a_w_a}\\
&\frac{1}{n}  {\bf E}_{v_0} \Big\{\sum_{t=1}^{n} \big(X_t\big)^2\Big\}=\frac{1}{n}   \sum_{t=1}^n \Big\{\Big(\overline{\Lambda}_t\Big)^2  K_{X_{t-1}}  + K_{Z_t} \Big\}   \leq \kappa,     \label{cp_e_ar2_s1_w}\\
&(\overline{\Lambda}_t, K_{Z_t})\in (-\infty, \infty) \times [0,\infty) \hst \mbox{scalar-valued,  non-random,}\\
& K_{X_t}\tri {\bf E}_{v_0}\Big(X_{t}\Big)^2,\label{s1_24_aw}\\
&\widehat{X}_{t} \tri {\bf E}_{v_0}\Big\{X_{t} \Big| Y^{t}\Big\}, \hso \widehat{V}_{t} \tri {\bf E}_{v_0}\Big\{V_{t} \Big| Y^{t}\Big\}, \hso E_t^{nfb} \tri V_{t} - \widehat{V}_{t} \\
& K_{t}\tri {\bf E}_{v_0}\left\{\Big(X_{t} - \widehat{X}_{t}\Big)^2  \right\}={\bf E}_{v_0}\left\{\Big(V_{t} - \widehat{V}_{t}\Big)^2  \right\},  \hst  t=1, \ldots, n.\label{s1_24_w}
\end{align}  
Further,  $(\widehat{X}_t, K_t), t=1, \ldots, n$ are determined  by the generalized time-varying Kalman-filter  and generalized time-varying difference Riccati  equation (DRE), of estimating $X^n$ from $Y^n$, and $K_{X_t}, t=1, \ldots, n$ is determined by the time-varying Lyapunov difference equation, given below.

{\it  Generalized   Kalman-filter Recursion for
 (\ref{Q_1_3_s1_w})-(\ref{Q_1_4_s1_a_w_a}) :} 
\begin{align}
&\widehat{X}_{t} = \overline{\Lambda}_t \widehat{X}_{t-1} + M_t^{nfb}(K_{t-1}, \overline{\Lambda}_t, K_{Z_t}) I_t, \hso \widehat{X}_{1}=\widehat{x}_1, \hso t=2, \ldots, n \label{Q_1_8_s1_w} \\
&\hso = F_t^{nfb}(K_{t-1},\overline{\Lambda}_t,K_{Z_t}) \widehat{X}_{t-1} + M_t^{nfb}(K_{t-1}, \overline{\Lambda}_t, K_{Z_t})\Big( Y_t-c_tY_{t-1}\Big), \label{Q_1_8_s1_a_w}  \\
&I_t \tri Y_t - \Big(\overline{\Lambda}_t-c_t\Big) \widehat{X}_{t-1}-c_t Y_{t-1}, \hso I_1=Z_1+W_1,  \hso t=2, \ldots, n, \label{Q_1_9_s1_w}\\
& \hso = \Big(\overline{\Lambda}_t -c_t\Big)\Big(X_{t-1}- \widehat{X}_{t-1}\Big) + Z_t +W_t,\label{Q_1_9_s1_n_w} \\
&M_t^{nfb}(K_{t-1}, \overline{\Lambda}_t, K_{Z_t})  \tri  \Big( K_{Z_t} + \overline{\Lambda}_t  K_{t-1}\Big(\overline{\Lambda}_t - c_t \Big)\Big)\Big(K_{Z_t}+ K_{W_t} + \Big(\overline{\Lambda}_t - c \Big)^2 K_{t-1}\Big)^{-1}, \\
& F_t^{nfb}(K_{t-1},\overline{\Lambda}_t, K_{Z_t}) \tri \overline{\Lambda}_t -M_t^{nfb}(K_{t-1}, \overline{\Lambda}_t, K_{Z_t}) \Big(\overline{\Lambda}_t -c_t\Big)\\
& I_t, \hso t=1, \ldots, n, \hst \mbox{an orthogonal innovations process.}\label{Q_1_11_s1_w}
 \end{align}
 {\it Generalized Time-Varying Difference Riccati  Equation:}
\begin{align}
K_{t}= & \overline{\Lambda}_t^2 K_{t-1}  + K_{Z_t} -\frac{ \Big( K_{Z_t} + \overline{\Lambda}_t K_{t-1}\Big(\overline{\Lambda}_t - c_t \Big)\Big)^2}{ \Big(K_{Z_t}+ K_{W_t} + \Big(\overline{\Lambda}_t - c_t \Big)^2 K_{t-1}\Big)}, \hst  K_t \geq0,  \hso K_{0}=0, \hst t=1, \ldots, n, 
 \label{Q_1_10_s1_w}
 \end{align}
  {\it Time-Varying Difference Lyapunov   Equation:}
\begin{align}
K_{X_t}= \overline{\Lambda}_t^2 K_{X_{t-1}}  + K_{Z_t}, \hst  K_{X_t} \geq0,  \hso K_{X_0}=0, \hst t=1, \ldots, n, 
 \label{Q_1_10_s1_lw}
 \end{align}
 {\it Error Recursion of the Generalized Kalman-filter, $E_t^{nfb}\tri X_t-\widehat{X}_t, t=1, \ldots, n$ :}
 \begin{align} 
E_t^{nfb}=&F_t^{nfb}(K_{t-1},\overline{\Lambda}_t,K_{Z_t}) E_{t-1}-M_t^{nfb}(K_{t-1}, \overline{\Lambda}_t, K_{Z_t})  \Big(Z_t+ W_t\Big)+Z_t,  \; E_0^{nfb}=\mbox{given}, \; t=1, \ldots, n. \label{i_error_s1_nfb}
\end{align} 
(b) The lower bound characterization of the $n-$FTwFI capacity $C_n^{nfb}(\kappa,v_0)$, defined by   (\ref{FTFIC_1}),  is
\begin{align}
&C_{n}^{nfb}(\kappa,v_0) \geq C_{n,LB}^{nfb}(\kappa,v_0)
\tri  \sup_{\big(\overline{\Lambda}_t, K_{Z_t} \big), t=1,\ldots, n: \hso \frac{1}{n} \sum_{t=1}^n \big\{\big(\overline{\Lambda}_t\big)^2 K_{X_{t-1}}+ K_{Z_t}\big\}   \leq \kappa}  \frac{1}{2}  \sum_{t=1}^n   \log\Big( \frac{\big(\overline{\Lambda}_t-c_t\big)^2 K_{t-1}  + K_{Z_t} +K_{W_t}}{K_{W_t}}\Big)\\
&\mbox{subject to: $K_t, K_{X_t},  t=1, \ldots,n$ satisfy recursions (\ref{Q_1_10_s1_w}), (\ref{Q_1_10_s1_lw}),  and $K_{Z_t}\geq 0 , t=1, \ldots, n$.}
\end{align} 
\end{corollary}
\begin{proof} (a) Similar to the feedback capacity of Theorem~\ref{thm_FTWFI}, by the maximum entropy of Gaussian distributions, the maximizing distributions ${\bf P}_{X_t|X^{t-1}, V_0}, t=1, \ldots, n$ for the optimization problem (\ref{FTFIC_1}) are conditionally Gaussian, such that $(X^n,Y^n)$ for $V_0=v_0$,  is jointly Gaussian,  the average power constraint is satisfied, and condition (\ref{g_cp_3}) is respected. Clearly, the restriction to distributions  that satisfy (\ref{is_nfb}) result in a lower bound on  $C_n^{nfb}(\kappa,v_0)$ defined by  (\ref{FTFIC_1}). Note that the restriction to (\ref{is_nfb}) is precisely the restriction of feedback distributions  (\ref{tv_is}) to non-feedback distributions. 
The rest of the equations follow, similarly to  Theorem~\ref{thm_FTFI}.(a),  and in particular,  (\ref{Q_1_3_s1}), if the channel is used without feedback, i.e.,  $X_t=\overline{\Lambda}_t X_{t-1} +Z_t$.  
The rest of the expression of part (a) are obtained as in Theorem~\ref{thm_FTFI}.(a), and the generalized Kalman-filter recursions follow from \cite{caines1988,kailath-sayed-hassibi}. (b) Due to  the expressions of part (a). 
\end{proof}

\ \

\begin{remark}
Corollary~\ref{thm_FTWFI} is useful, because  the  lower bound is much easier to compute, compared to $C_n^{nfb}(\kappa,v_0)$, defined by  (\ref{FTFIC_1}), where  the supremum is taken over all jointly Gaussian   channel input processes $X^n,n=1,2,\ldots,$ without feedback  or distributions without feedback, ${\bf P}_{X_t|X^{t-1}, V_0}, t=1,2, \ldots$.
\end{remark}

\section{New Formulas of Capacity of AGN Channels Driven by Stable and Unstable AR$(c)$ Noise and  Generalized Riccati Equations}
\label{sect:q1}
In this section we derive a closed form  formula for feedback capacity $C^{\infty}(\kappa,v_0)$,  defined by  (\ref{inter}),   and lower bounds on capacity without feedback $C^{\infty,nfb}(\kappa,v_0)$, defined by  (\ref{inter_nfb}), of AGN channels driven by AR$(c)$, stable and unstable noise, when  channel input strategies or distributions are  time-invariant.
This section includes material on basic  properties of generalized DREs,   AREs, and  definitions and implications of the notions of detectability and stabilizability, which  are  discussed in Section~\ref{prel_ric}.

\subsection{Characterization of Feedback Capacity for Time-Invariant Channel Input Distributions}
\label{sect:char_fb} 
We  restrict the class of  channel input distributions of Theorem~\ref{thm_FTFI}  to the class of time-invariant distributions. We note that our  restriction is weaker  than the analysis in \cite{kim2010}, which  presupposes  stationarity or asymptotic joint stationarity of the joint Gaussian process $(X^n, Y^n), n=1, 2, \ldots$ (the author also considers a double sided joint process). However, unlike  \cite{yang-kavcic-tatikonda2007,kim2010}, we do not assume the AR$(c)$ noise is stable. 
 \\
By Theorem~\ref{thm_FTFI}, and  restricting the channel input strategies to the time-invariant channel input strategies, $(\Lambda_t,K_{Z_t})=(\Lambda^\infty, K_Z^\infty), t=1, \ldots, n$, (not necessarily stationary)    then we have the following  representation\footnote{The variation of notation is judged necessary to distinguish it from the time-varying channel input strategies $(\Lambda_t, K_{Z_t})$ and corresponding   distributions ${\bf P}_{X_t|V_{t-1}, Y^{t-1}}={\bf P}_t(dx_t|v_{t-1},y^{t-1}),  t=1, \ldots, n$.}.
\begin{align}
&X_t^o = \Lambda^\infty \Big({V}_{t-1} - \widehat{V}_{t-1}^o\Big) + Z_t^o,   \hso X_1^o=Z_1^o,    \hst t=2, \ldots,n,  \label{Q_1_3_s1_new}  \\
&V_t=c V_{t-1} +W_t, \hso V_0=v_0, \hso t=1,\ldots, n, \label{Q_1_3_a_s1_a_new}  \\
&Y_t^o= X_t^o + V_t=  \Lambda^\infty \Big({V}_{t-1} - \widehat{V}_{t-1}^o\Big)+c V_{t-1} +W_t + Z_t^o, \hso t=2, \ldots,n, \label{Q_1_4_s1_new}\\
&Y_1^o=Z_1^o+c V_0 +W_1, \hso V_0=v_0,\label{Q_1_4_s1_a_new}\\
&Z_t^o\sim N(0, K_{Z}^\infty), \hso t=1, \ldots, n \hso \mbox{is a  Gaussian sequence,}\label{Q_1_5_s1_new} \\
&Z_t^o \hso  \mbox{is independent of}  \hso  (V^{t-1},X^{o,t-1},Y^{o,t-1}, {V}_0),  \hso t=1, \ldots, n,\label{Q_1_6_s1_new}\\
&Z^{o,n} \hso  \mbox{is independent of $(V^n,V_0)$}, \\
&cov( \left[ \begin{array}{c} W_t  \\ W_t +Z_t^o\end{array} \right], \left[ \begin{array}{c} W_t  \\ W_t +Z_t^o\end{array} \right]^T) = \left[ \begin{array}{cc} K_{W} & K_{W}  \\ K_{W} &K_{W} +K_{Z}^\infty\end{array} \right], \label{cp_e_ar2_s1_a_new}\\
&\frac{1}{n}  {\bf E}_{v_0} \Big\{\sum_{t=1}^{n} \big(X_t^o\big)^2\Big\}=\frac{1}{n}   \sum_{t=1}^n \big(\Lambda^\infty\big)^2  K_{t-1}^o  + K_{Z}^\infty  \leq \kappa,     \label{cp_e_ar2_s1_new}\\
&(\Lambda^\infty, K_{Z}^\infty)\in (-\infty, \infty) \times [0,\infty) \hst \mbox{are   non-random}, \label{Q_1_3_s1_e}\\
&{\bf P}_{X_t^o|V_{t-1}, Y^{o,t-1}}={\bf P}^\infty(dx_t|v_{t-1},y^{t-1}), \hso t=1, \ldots, n, \hso \mbox{that is, the distribution is time-invariant}
\end{align}  
where  $(\widehat{V}_t^o, K_t^o), t=1, \ldots, n$ satisfy  the generalized Kalman-filter  and time-invariant DRE, given below.

{\it  Generalized  Kalman-filter Recursion:} 
\begin{align}
&\widehat{V}_{t}^o = c \widehat{V}_{t-1}^o + M(K_{t-1}^o, \Lambda^\infty, K_{Z}^\infty) I_t^o, \hso \widehat{V}_{0}^o=v_0, \label{Q_1_8_s1_new} \\
&\hso = F(K_{t-1},\Lambda^\infty,K_{Z}^\infty) \widehat{V}_{t-1}^o + M(K_{t-1}^o, \Lambda^\infty, K_{Z}^\infty) Y_t^o, \hso \widehat{V}_{0}^o=v_0, \label{Q_1_8_s1_a_new}  \\
&I_t^o \tri Y_t^o - c \widehat{V}_{t-1}^o, \hso I_1^o=Z_1^o+W_1,  \hso t=2, \ldots, n, \label{Q_1_9_s1_new}\\
& \hso = \Big(\Lambda^\infty +c\Big)\Big(V_{t-1}- \widehat{V}_{t-1}^o\Big) + Z_t^o +W_t, \\
&M(K_{t-1}^o, \Lambda^\infty, K_{Z}^\infty)  \tri  \Big( K_{W} + c  K_{t-1}^o\Big(\Lambda^\infty + c \Big)\Big)\Big(K_{Z}^\infty+ K_{W} + \Big(\Lambda^\infty + c \Big)^2 K_{t-1}^o\Big)^{-1}, \label{Q_1_9_s1_new_g}  \\
& F(K_{t-1}^o,\Lambda^\infty, K_{Z}^\infty) \tri c -M(K_{t-1}^o, \Lambda^\infty, K_{Z}^\infty) \Big(\Lambda^\infty +c\Big), \label{Q_1_9_s1_new_gg}\\
& I_t^o, \hso t=1, \ldots, n, \hst \mbox{an orthogonal innovations process.}\label{Q_1_11_s1_new}
 \end{align}
 {\it Generalized Time-Invariant Difference Riccati  Equation:}
\begin{align}
K_{t}^o= & c^2 K_{t-1}^o  + K_{W} -\frac{ \Big( K_{W} + c K_{t-1}^o\Big(\Lambda^\infty + c \Big)\Big)^2}{ \Big(K_{Z}^\infty+ K_{W} + \Big(\Lambda^\infty + c \Big)^2 K_{t-1}^o\Big)}, \hst  K_t^o \geq0,  \hso K_{0}^o=0, \hst t=1, \ldots, n, 
 \label{Q_1_10_s1_new}
 \end{align}
We note that the Kalman-filter recursion (\ref{Q_1_8_s1_new}) is time-varying, but the DRE (\ref{Q_1_10_s1_new}) is time-invariant. \\ 
The analog of the error recursion (\ref{i_error_s1}),  for  time-invariant strategies, is the following.  

{\it Error Recursion of the Generalized Kalman-filter, $E_t^o\tri V_t-\widehat{V}_t^o, t=1, \ldots, n$:}
 \begin{align} 
&E_t^o=F(K_{t-1}^o,\Lambda^\infty,K_{Z}^\infty) E_{t-1}^o-M(K_{t-1}^o, \Lambda^\infty, K_Z^\infty)  \Big(Z_t^o+ W_t\Big)+W_t,  \; E_0^o=0, \; t=1, \ldots, n, \label{i_error_s1_ti}\\
&Z_t^o \in N(0, K_{Z}^\infty), \; t=1, 2, \ldots, n.
\end{align}
We note that recursion (\ref{i_error_s1_ti}) is linear time-varying.  Hence,   $\lim_{n \longrightarrow \infty } K_n^o =\lim_{n \longrightarrow \infty } {\bf E}_{s_0}\big\{\big(E_n^o\big)^2 \big\}$ is not expected to exist,  for arbitrary  $(F(K_{t-1}^o,\Lambda^\infty,K_{Z}^\infty),M(K_{t-1}^o, \Lambda^\infty, K_{Z}^\infty) ), t=1,2,  \ldots$. Indeed, the convergence properties of the  sequence $K_0^o, K_1^o, \ldots, K_n^o$ generated  by (\ref{Q_1_10_s1_new}), 
 as $n \longrightarrow \infty$,  are characterized by the   detectability and  stabilizability conditions  \cite{caines1988,kailath-sayed-hassibi} (which we introduce shortly). These conditions ensure existence of a unique nonnegative  limit, 
 $\lim_{n \longrightarrow \infty } K_n^o =K^\infty$, such that $K^\infty\geq 0$ is the unique solution of a generalized ARE and satisfies the  stability property:  
$\lim_{n \longrightarrow\infty} F(K_{n-1}^o,\Lambda^\infty,K_{Z}^\infty)=F(K^\infty,\Lambda^\infty,K_{Z}^\infty)\in (-1,1)$.

Next, we define the  characterization of the $n-$FTFI capacity, its per unit time limit, and the alternative definition, with the per unit time limit and maximization interchanged. \\

\begin{definition} Characterizations of asymptotic limits \\
\label{def_cc}
Consider the  characterization of the $n-$FTFI capacity of Theorem~\ref{thm_FTFI},    restricted to  the time-invariant strategies $(\Lambda_t=\Lambda^\infty, K_{Z_t}=K^\infty), t=1, \ldots, n$, as defined by (\ref{Q_1_3_s1_new})-(\ref{Q_1_10_s1_new}). \\
(a) The characterization of the $n-$FTFI capacity for time-invariant strategies is defined by   
\begin{align}
&C_{n}^o(\kappa,v_0) 
\tri  \sup_{\big(\Lambda^\infty, K_{Z}^\infty\big): \hso \frac{1}{n} \sum_{t=1}^n\big(\Lambda^\infty\big)^2 K_{t-1} ^o+ K_{Z}^\infty   \leq \kappa}    \sum_{t=1}^n   \frac{1}{2}\log\Big( \frac{\big(\Lambda^\infty+c\big)^2 K_{t-1}^o  + K_{Z}^\infty +K_{W}}{K_{W}}\Big)\label{Q_1_2_s1_new}\\
&\mbox{subject to: $K_t^o, t=1, \ldots,n$ satisfies recursion (\ref{Q_1_10_s1_new}) and $K_{Z}^\infty\geq 0 , t=1, \ldots, n$}\label{Q_1_2_s1_a_new}
\end{align}
provided the supremum exists in the set. The per unit time-limit is then defined by 
\begin{align}
&C^o(\kappa,v_0) 
\tri \lim_{n \longrightarrow\infty} \frac{1}{n} C_{n}^o(\kappa,v_0) .\label{Q_1_2_s1_new_a}
\end{align}
provided the supremum exists and the limit exists in $[0,\infty)$.  \\
(b) The characterization of the $n-$FTFI capacity for time-invariant strategies, with limit and maximization interchanged is defined by 
\begin{align}
C^\infty(\kappa,v_0) 
\tri & \sup_{\big(\Lambda^\infty, K_{Z}^\infty\big): \hso \lim_{n \longrightarrow \infty} \frac{1}{n} \sum_{t=1}^n  \big(\Lambda^\infty\big)^2 K_{t-1} ^o+ K_{Z}^\infty   \leq \kappa} \lim_{n \longrightarrow \infty}\frac{1}{n}   \sum_{t=1}^n   \frac{1}{2}\log\Big( \frac{\big(\Lambda^\infty+c\big)^2 K_{t-1}^o  + K_{Z}^\infty +K_{W}}{K_{W}}\Big) \label{Q_1_2_s1_new_b}
\end{align}
provide the limit exists in $[0,\infty)$ and the supremum also exists in the set.
\end{definition}

To ensure $C^\infty(\kappa,v_0)$ defined by  (\ref{Q_1_2_s1_new_b}) is well defined,   i.e.,  that the  optimal time-invariant channel input strategy or distribution ensures the limit exits and   $C^\infty(\kappa,v_0)$ is independent of $v_0$, we shall impose condition {\bf (C1)}.    We shall express  condition {\bf (C1)} in terms of   properties of generalized time-invariant  DREs and AREs, introduced  in the next section, from which answers to questions of Problem~\ref{prob_1_in} are obtained.

\subsection{Convergence Properties of Time-Invariant Generalized RDEs}
\label{sect:pre}
We recall that in  the study of mean-square estimation, and in particular, the  filtering theory, of time-invariant jointly Gaussian processes described  by linear recursions, driven by Gaussian noise processes, and of  jointly stationary Gaussian processes, the concepts of detectability and stabilizability, have been very effective \cite{kailath-sayed-hassibi,caines1988}. In this section, we summarize these concepts in relation to  the properties of generalized DREs and AREs. 

Let   $\{K_t, t=1, 2, \ldots, n\}$ denote a sequence that satisfies the time-invariant  generalized DRE with arbitrary initial condition 
\begin{align}
&K_{t}= c^2 K_{t-1}  + K_{W} -\frac{ \Big( K_{W} + c  K_{t-1}\Big(\Lambda + c \Big)\Big)^2}{ \Big(K_{Z}+ K_{W} + \Big(\Lambda + c \Big)^2 K_{t-1}\Big)}, \hst K_{0}=\mbox{given}, \hst t=1, \ldots, n.  \label{DRE_TI_g} 
\end{align}
We note that a solution  of (\ref{DRE_TI_g}) is a functional of the parameters of the right hand side, that is, $K_t\equiv K_t(c, K_W, \Lambda, K_Z, K_0), t=1, \ldots, n$. 
To discuss  the properties of the generalized DRE (\ref{DRE_TI_g}),  we introduce, as often done in the analysis of generalized DREs \cite{caines1988} and \cite[Section~14.7, page 540]{kailath-sayed-hassibi}, the following definitions. 
\begin{align}
&A\tri c,\hst C\tri \Lambda + c, \hst A^* \tri  c-K_W R^{-1}C, \hst B^{*,\frac{1}{2}} \tri   K_W^{\frac{1}{2}} B^{\frac{1}{2}}\label{ma_1}\\
&R \tri K_{Z} + K_W,  \hso  B \tri  1-K_W\big(K_Z+K_W\big)^{-1}.\label{m_2}
\end{align}
By (\ref{Q_1_9_s1_new_g}) and (\ref{Q_1_9_s1_new_gg}), we also have 
\begin{align} 
&M(K, \Lambda, K_Z) \tri \Big( K_W + A  K C\Big)\Big(R + \big(C\big)^2 K\Big)^{-1}, \label{ma_3a} \\
&F(K,\Lambda,K_Z) =A -M(K, \Lambda, K_Z)C.\label{ma_3}
\end{align}
The generalized algebraic Riccati equation (ARE) corresponding to  (\ref{DRE_TI_g}) is 
 \begin{align}
&K= c^2 K  + K_{W} -\frac{ \Big( K_{W} + c  K\Big(\Lambda + c \Big)\Big)^2}{ \Big(K_{Z}+ K_{W} + \Big(\Lambda + c \Big)^2 K\Big)}, \hst K\geq 0. \label{DRE_TI_gae} 
\end{align}

Next, we introduce the definition of asymptotic stability of the error recursion  (\ref{i_error_s1_ti}).\\

\begin{definition} Asymptotic stability\\
A solution $K\geq 0$ to the generalized ARE (\ref{DRE_TI_gae}), assuming it exists, is called stabilizing if $|F(K,\Lambda,K_Z)|<1$. In this case, we  say $F(K,\Lambda,K_Z)$ is asymptotically stable, that is, $|F(K,\Lambda,K_Z)|<1$. 
\end{definition}

With respect to any of the above  generalized DRE and ARE, we  define    the important notions of detectability, unit circle controllability, and stabilizability.  \\

\begin{definition} Detectability, Stabilizability, Unit Circle controllability \\
\label{def:det-stab}
(a) The pair  $\big\{A,C\big\}$ is called detectable if there exists a $G \in {\mathbb R}$ such that  $|A- G C|<1$ (stable).\\
(b) The pair $ \big\{A^*, B^{*,\frac{1}{2}}\big\}$ is called unit circle controllable if  there exists a $G \in {\mathbb R}$ such that   $|A^*- B^{*,\frac{1}{2}}G|\neq 1$.\\
(c) The pair $\big\{A^*, B^{*,\frac{1}{2}}\big\}$ is called stabilizable if  there exists a $G \in {\mathbb R}$ such that   $|A^*- B^{*,\frac{1}{2}}G|< 1$.
\end{definition}

The next theorem characterizes, detectability,   unit circle controllability, and stabilizability \cite{kailath-sayed-hassibi,vanschuppen2010}.\\
 
\begin{lemma}\cite{kailath-sayed-hassibi,vanschuppen2010}  Necessary and sufficient conditions for detectability, unit circle controllability, stabilizability\\
(a) The pair $\big\{A,C\big\}$ is detectable  if and only if there exists no eigenvalue, eigenvector $\{\lambda, x\}$, of $A$, i.e.,  $Ax=\lambda x$ such that $|\lambda|\geq 1$, and such that $Cx =0$\\
(b) The pair $\big\{A^*, B^{*,\frac{1}{2}}\big\}$ is unit circle controllable if and only if  there exists no  eigenvalue, eigenvector $\{\lambda, x\}$, $x A^*=x\lambda$,  such that   $|\lambda|=1$, and   such that  that $x B^{*,\frac{1}{2}} =0$.  \\
(c) The pair $\big\{A^*, B^{*,\frac{1}{2}}\big\}$ is stabilizable  if and only if there exists no eigenvalue, eigenvector $\{\lambda, x\}$, $x A^*=x \lambda$ such that $|\lambda|\geq 1$, and such that $x B^{*,\frac{1}{2}} =0$.
\end{lemma}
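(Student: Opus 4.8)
The plan is to establish each assertion by the Popov--Belevitch--Hautus (PBH) eigenvector test, which is the standard tool for such modal characterizations; the ``only if'' directions are short and self-contained, whereas for the ``if'' directions I would invoke the pole-assignment results of \cite{kailath-sayed-hassibi,vanschuppen2010}, specialized to the present scalar setting in which $A=c$, $C=\Lambda+c$, $A^*=c-K_W R^{-1}C$ and $B^{*,\frac{1}{2}}=\big(K_W K_Z/(K_Z+K_W)\big)^{\frac{1}{2}}$ are all real numbers, and in which every nonzero vector is an eigenvector.

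For part (a), the contrapositive of the ``only if'' direction is immediate: if $Ax=\lambda x$ with $x\neq 0$, $|\lambda|>1$ and $Cx=0$, then $(A-GC)x=Ax-G\,Cx=\lambda x$ for every $G\in{\mathbb R}$, so $|\lambda|$ lies in the spectrum of $A-GC$ and hence $|A-GC|\geq|\lambda|>1$, contradicting detectability. For the converse I would note that in the scalar case the unique eigenvalue is $\lambda=c$, so the absence of a ``bad'' eigenpair means either $C=\Lambda+c\neq 0$, in which case $G=c/(\Lambda+c)$ gives $A-GC=0$, or else $|c|\leq 1$; both sub-cases produce a stabilizing gain, the statement for the general (vector) situation being precisely the detectability characterization in \cite{kailath-sayed-hassibi}.

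Parts (b) and (c) are the controllability/stabilizability duals and are handled by the identical argument applied to left eigenvectors of $A^*$: if $xA^*=\lambda x$ with $xB^{*,\frac{1}{2}}=0$, then $x\big(A^*-B^{*,\frac{1}{2}}G\big)=\lambda x-(xB^{*,\frac{1}{2}})G=\lambda x$ for all $G$, so $\lambda$ is trapped in the spectrum of $A^*-B^{*,\frac{1}{2}}G$; this obstructs $|A^*-B^{*,\frac{1}{2}}G|\neq 1$ when $|\lambda|=1$ (part (b)) and obstructs $|A^*-B^{*,\frac{1}{2}}G|<1$ when $|\lambda|\geq 1$ (part (c)). The converses again reduce in the scalar case to the observation that $B^{*,\frac{1}{2}}\neq 0$ (equivalently $K_Z>0$) permits the assignment $G=A^*/B^{*,\frac{1}{2}}$, while $B^{*,\frac{1}{2}}=0$ forces the spectrum of $A^*-B^{*,\frac{1}{2}}G$ to be the fixed singleton $\{A^*\}$ for every $G$; the vector-valued versions are quoted from \cite{kailath-sayed-hassibi,vanschuppen2010}.

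The main difficulty is not the argument itself but the bookkeeping of the boundary conventions: one must match the strict inequality $|\lambda|>1$ in (a), the equality $|\lambda|=1$ in (b), and the non-strict $|\lambda|\geq 1$ in (c) against the strict and non-strict radii appearing in Definition~\ref{def:det-stab}, and in the degenerate scalar situations $C=0$ and $B^{*,\frac{1}{2}}=0$ the spectrum is insensitive to the gain $G$, so the PBH conditions must be read as statements about this single fixed eigenvalue. I would defer these routine verifications, together with the general pole-assignment steps, to \cite{kailath-sayed-hassibi,vanschuppen2010}, and retain only the scalar specializations that are actually used in the asymptotic analysis of Section~\ref{sect:q1}.
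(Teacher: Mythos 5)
The paper itself offers no proof of this lemma: it is quoted from \cite{kailath-sayed-hassibi,vanschuppen2010} and used as a black box, so your strategy of proving the ``only if'' directions by the PBH obstruction argument and deferring the converses to the same references is, in substance, the same route the paper takes. Your scalar specializations are the right ones and are what the paper actually relies on later: if $C=\Lambda+c\neq 0$ the gain $G=c/(\Lambda+c)$ is dead-beat; $B^{*,\frac{1}{2}}\neq 0$ if and only if $K_Z>0$, in which case $G=A^*/B^{*,\frac{1}{2}}$ is dead-beat; and when $C=0$ or $B^{*,\frac{1}{2}}=0$ the closed-loop spectrum is frozen at $\{c\}$ or $\{A^*\}$, so the PBH condition collapses to a statement about that single eigenvalue. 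You also silently repair the wording of part (b), whose literal reading (``uncontrollable if and only if there exists \emph{no} bad eigenpair'') inverts the intended equivalence; that is the correct reading.

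The one item you cannot defer as ``routine bookkeeping'' is the boundary convention in part (a). With the strict inequality $|\lambda|>1$ as written, your sentence ``both sub-cases produce a stabilizing gain'' fails in the degenerate configuration $C=\Lambda+c=0$ with $|c|=1$: the lemma's criterion is then vacuously satisfied, since the only eigenvalue has modulus exactly $1$ and not $>1$, yet $A-GC=c$ for every $G$, so $|A-GC|=1$ and the pair is \emph{not} detectable in the sense of Definition~\ref{def:det-stab}(a), which requires $|A-GC|<1$. The standard discrete-time PBH detectability test ranges over $|\lambda|\geq 1$; with that convention your two-case argument closes, whereas with $|\lambda|>1$ part (a) admits this counterexample and no proof can repair it. The slip is harmless for the rest of the paper, which only invokes detectability for $c\in(-1,1)$ or with $\Lambda^\infty\neq -c$ (cf.\ Lemma~\ref{lem_pr_are}), but it should be stated as a correction to the quoted statement rather than absorbed into the deferred verifications.
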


In the next theorem we summarize known results on sufficient and/or necessary conditions for convergence of solutions $\{K_t, t=1, 2, \ldots, n\}$ of the generalized time-invariant DRE, as $n \longrightarrow \infty$,  to a nonnegative  $K$,  which is the unique stabilizing solution of a corresponding generalized ARE.    \\

\begin{theorem}\cite{kailath-sayed-hassibi,caines1988} Convergence of time-invariant generalized DRE\\
\label{thm_ric}
Let  $\{K_t, t=1, 2, \ldots, n\}$ denote a sequence that satisfies the time-invariant  generalized DRE (\ref{DRE_TI_g}) with arbitrary initial condition, and $(A, C, A^*, B^{*, \frac{1}{2}})$ defined by (\ref{ma_1}), (\ref{m_2}). Then the following hold.\\
(1) Consider the generalized RDE (\ref{DRE_TI_g})  with zero initial condition, i.e., $K_{0}=0$, and assume,  the pair $\big\{A,C\big\}$ is detectable, and  the pair $\big\{A^*, B^{*,\frac{1}{2}}\big\}$ is unit circle controllable.\\
Then the  sequence $\{K_{t}: t=1, 2, \ldots, n\}$ that satisfies the generalized DRE (\ref{DRE_TI_g}),  with zero initial condition $K_{0}=0$,  converges to $K$, i.e., $\lim_{n \longrightarrow \infty} K_{n} =K$, where  $K$ satisfies   the ARE 
\bea
K= c^2 K + K_{W} -\frac{ \Big( K_W + c  K\big(\Lambda + c \big)\Big)^2}{ \Big(K_Z+ K_W + \big(\Lambda + c \big)^2 K\Big)} \label{ric_22}
\eea
 if and only if the pair $\big\{A^*, B^{*,\frac{1}{2}}\big\}$ is stabilizable.\\
(2) Assume,  the pair $\big\{A,C\big\}$ is detectable, and  the pair $\big\{A^*, B^{*,\frac{1}{2}}\big\}$ is unit circle controllable.  Then there exists a unique stabilizing solution $K\geq 0$ to the generalized ARE (\ref{DRE_TI_g}), i.e.,  such that,  $|F(K,\Lambda,K_Z)|<1$, if and only if  $\{A^*, B^{*,\frac{1}{2}}\}$ is stabilizable.\\
(3) If $\{A, C\}$ is detectable and $\{A^*, B^{*,\frac{1}{2}}\}$ is stabilizable,  then any solution $K_{t}, t=1, 2, \ldots,n$ to the generalized RDE (\ref{DRE_TI_g})  with arbitrary  initial condition, $K_{0}$ is such that $\lim_{n \longrightarrow \infty} K_{n} =K$, where $K\geq 0$ is the  unique solution of  the generalized ARE (\ref{DRE_TI_g}) with  $|F(K, \Lambda,K_Z)|<1$, i.e., it is stabilizing.
\end{theorem}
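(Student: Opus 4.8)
The plan is to reduce the generalized (cross-correlated) scalar DRE \eqref{DRE_TI_g} to a \emph{classical} scalar DRE — one with uncorrelated state and observation noise — for which the convergence statements (1)--(3) are the standard Riccati theorems of \cite{kailath-sayed-hassibi,caines1988}, and then to translate each hypothesis through the reduction. Concretely, with $A=c$, $C=\Lambda+c$, $Q=S=K_W$ and $R=K_Z+K_W$, I would introduce the modified data $A^*\tri A-SR^{-1}C$ and $Q^*\tri Q-SR^{-1}S=K_WB=\big(B^{*,\frac{1}{2}}\big)^2$, with $A^*,B,B^{*,\frac{1}{2}}$ exactly as in \eqref{ma_1}--\eqref{m_2}, and verify by a completing-the-square computation that \eqref{DRE_TI_g} is \emph{identical} to the classical filtering DRE for the data $\{A^*,C,Q^*,R\}$. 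The point that makes the word ``stabilizing'' unambiguous is that the closed-loop map is invariant under this change of variables: using \eqref{ma_3a}--\eqref{ma_3} one checks $F(K,\Lambda,K_Z)=A-M(K,\Lambda,K_Z)C=(R+C^2K)^{-1}(AR-SC)$, which is also the closed-loop map of the reduced classical filter. Since output injection does not change detectability, $\{A,C\}$ detectable $\Leftrightarrow\{A^*,C\}$ detectable, so the three hypotheses of Definition~\ref{def:det-stab} are precisely the detectability/observability-of-noise conditions for the classical problem $\{A^*,C,Q^*,R\}$.

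With the reduction in hand, part (1) is the classical monotone-convergence argument for the filtering DRE started at $K_0=0$: the Riccati map is order preserving on $[0,\infty)$ and $K_1=Q^*\ge K_0=0$, so $\{K_t\}$ is nondecreasing; if $\{A^*,B^{*,\frac{1}{2}}\}$ is stabilizable, comparing the DRE with the Lyapunov recursion generated by a stabilizing output-injection gain (which exists by detectability) bounds $\{K_t\}$ from above, hence it converges to a solution $K\ge 0$ of the ARE \eqref{ric_22}, and detectability together with stabilizability then forces $K$ to be the stabilizing solution; conversely, under the standing unit-circle-controllability assumption, a failure of stabilizability produces an uncontrollable-from-noise mode with $|\lambda|>1$ along which the (unforced, expanding) error recursion obstructs convergence to a solution of the ARE, so stabilizability is necessary.

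Part (2) then follows: existence of a nonnegative stabilizing solution is the limit $K$ just produced, and uniqueness among stabilizing solutions comes from the Lyapunov identity obtained by subtracting \eqref{DRE_TI_gae} at two stabilizing solutions $K_1,K_2$, namely $K_1-K_2=F(K_1,\cdot)\,(K_1-K_2)\,F(K_2,\cdot)$ with both factors of modulus $<1$, whence $K_1=K_2$. For part (3), starting from an arbitrary $K_0\ge 0$, I would use monotonicity of the Riccati map in its initial datum to sandwich the trajectory $K_t$ between the trajectory $\underline K_t$ issued from $0$ and a trajectory $\overline K_t$ issued from any value dominating both $K_0$ and the stabilizing solution $K$; part (1) gives $\underline K_t\to K$, the same monotone-from-above reasoning gives $\overline K_t\to K$, and the squeeze yields $K_t\to K$. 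Equivalently one propagates the difference of two DRE trajectories through the products of the closed-loop maps $F(K_s,\cdot)$, which become contractions of modulus $<1$ once the trajectories are near $K$.

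The step I expect to be the main obstacle is the reduction bookkeeping together with the precise matching of hypotheses: one must be careful that \eqref{DRE_TI_g} really is the classical DRE of $\{A^*,C,Q^*,R\}$ and not of some other triple, that $F$ is genuinely unchanged, and that detectability, unit circle controllability and stabilizability as stated in Definition~\ref{def:det-stab} translate into exactly the hypotheses under which the classical theorems are formulated. The scalar and degenerate situations must also be checked by hand — in particular $C=\Lambda+c=0$, where detectability of $\{A,C\}$ collapses to $|c|<1$, and $K_Z=0$, where $B^{*,\frac{1}{2}}=0$ and stabilizability of $\{A^*,B^{*,\frac{1}{2}}\}$ reduces to $|A^*|=|\Lambda|<1$ (cf.\ {\bf Fact 3}). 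Once these checks are done, parts (1)--(3) are the classical generalized-Riccati convergence results of \cite{kailath-sayed-hassibi,caines1988}.
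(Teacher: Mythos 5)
The paper does not actually prove Theorem~\ref{thm_ric}: it is imported from \cite{kailath-sayed-hassibi,caines1988}, with the remark (just below the statement) that part (1) follows by combining the classical Lemma~14.2.1 with the Section~14.7 treatment of generalized DREs, part (2) is Theorem~E.6.1, and part (3) is Theorem~4.2 of \cite{caines1988}. Your proposal reconstructs exactly the route those pointers describe: the substitution $A^*=A-SR^{-1}C$, $Q^*=Q-SR^{-1}S=\big(B^{*,\frac{1}{2}}\big)^2$ does turn \eqref{DRE_TI_g} into the classical filtering DRE for the data $\{A^*,C,Q^*,R\}$ (the completing-the-square identity checks out by direct algebra in this scalar case), the closed-loop map $F(K,\Lambda,K_Z)=(AR-SC)(R+C^2K)^{-1}$ is indeed unchanged by the reduction, detectability is preserved under output injection, and the monotone-convergence, Lyapunov-comparison and difference-of-solutions arguments you invoke are the standard proofs of the cited classical statements. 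So the strategy is correct and is the one the paper's citations point to.

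One wrinkle is worth recording, because it concerns precisely the case the paper exploits downstream (Fact~3 and Counterexample~\ref{counterex}). In the scalar setting, stabilizability of $\{A^*,B^{*,\frac{1}{2}}\}$ can only fail when $B^{*,\frac{1}{2}}=0$, i.e.\ $K_Z=0$, together with $|A^*|=|\Lambda|>1$ (given unit-circle controllability). In that case the trajectory of \eqref{DRE_TI_g} from $K_0=0$ is identically zero and trivially converges to $K=0$, which \emph{is} a solution of the ARE \eqref{ric_22} --- just not the stabilizing one, since $F(0,\Lambda,0)=-\Lambda$. Hence your necessity argument for part (1), that failure of stabilizability ``obstructs convergence to a solution of the ARE,'' does not literally go through; what fails is convergence to the \emph{stabilizing} solution, which by part (2) does not exist absent stabilizability. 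This is as much an imprecision in the statement as transcribed as in your proof, but if you keep the ``only if'' you should phrase the limit in part (1) as the stabilizing solution of \eqref{ric_22}; that is also the reading under which the paper applies the theorem.
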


Theorem~\ref{thm_ric}.(1) follows by combining \cite[Lemma~14.2.1, page 507]{kailath-sayed-hassibi} of classical DREs and AREs  with  \cite[Section~14.7]{kailath-sayed-hassibi} of generalized DREs and AREs.  Theorem~\ref{thm_ric}.(2) is given in  \cite[Theorem~E.6.1, page 784]{kailath-sayed-hassibi}.  Theorem~\ref{thm_ric}.(3) is obtained from \cite[Theorem~4.2, page 164]{caines1988}, and also   \cite{kailath-sayed-hassibi}.

From  Theorem~\ref{thm_ric}, we can easily re-confirm {\bf Facts 2, 3} of Section~\ref{prel_ric}, as  shown  in the next lemma. \\

\begin{lemma} Properties of Solutions of DREs and AREs for different cases\\
\label{lem_pr_are}
Let $(A, C, A^*, B^{*, \frac{1}{2}})$ be  defined by (\ref{ma_1}), (\ref{m_2}). \\
(1) Suppose  $c \in (-1,1)$.  Then the pair $\{A, C\}$ is  detectable. \\
(2) Suppose $K_Z=0$.  Then the pair $\{A^*,B^{*,\frac{1}{2}}\}$ is unit circle controllable if and only if $|\Lambda| \neq 1$. \\
(3) Suppose $K_Z=0$. Then the pair $\{A^*,B^{*,\frac{1}{2}}\}$ is stabilizable  if and only if $|\Lambda| < 1$.\\
(4) Suppose $c\in (-1,1), K_Z=0$. The sequence    $\{K_t, t=1, 2, \ldots, n\}$ that satisfies  the generalized DRE with zero initial condition, i.e., 
\begin{align}
&K_{t}= c^2 K_{t-1}  + K_{W} -\frac{ \Big( K_{W} + c  K_{t-1}\big(\Lambda + c \big)\Big)^2}{ \Big( K_{W} + \big(\Lambda + c \big)^2 K_{t-1}\Big)}, \hst K_{0}=0, \hst t=1, \ldots, n \label{i_Q_1_10} 
\end{align}
converges to $K\geq 0$, i.e., $\lim_{n \longrightarrow \infty} K_n=K$, where  $K$ satisfies the generalized ARE (\ref{DRE_TI_gae}) if and only if the $\{A^*, B^{*,\frac{1}{2}}\}$ is stabilizable, equivalently, $|\Lambda|<1$.\\
(5) Suppose   $K_Z=0$, and $|\Lambda|\neq 1$,   with  the  corresponding ARE,     
\begin{align}
K= c^2 K + K_W -\frac{ \Big( K_W + c  K\big(\Lambda + c \big)\Big)^2}{ \Big( K_W + \big(\Lambda + c \big)^2 K\Big)}. \label{i_ric_3_n}
\end{align}
Then the  two solution, without the restriction $K\geq 0$, are  given by    
\bea
K=0, \hst K=\frac{K_W\Big(\big(\Lambda\big)^2-1\Big)}{\Big(\Lambda+ c\Big)^2}, \hso c\neq -\Lambda \label{i_rae_nu}
\eea
Moreover,  $K=0$ is the  unique and stabilizing solution $K\geq 0$ to (\ref{i_ric_3_n}), i.e., such that $|F(K,\Lambda,K_Z)|<1$, if and only if $|\Lambda|<1$.
\end{lemma}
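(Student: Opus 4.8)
The plan is to handle the five items in sequence, reducing each to the scalar eigenvalue criteria stated just before Theorem~\ref{thm_ric} and to the inequalities in Definition~\ref{def:det-stab}, after first recording what the auxiliary quantities in (\ref{ma_1})--(\ref{m_2}) become when $K_Z=0$. For (1) the argument is immediate: the recursion is scalar, $A=c$ has the single eigenvalue $c$, and since $c\in(-1,1)$ there is no eigenvalue of modulus exceeding $1$, so detectability holds vacuously (equivalently $G=0$ gives $|A-GC|=|c|<1$). The key reduction for the remaining parts is that $K_Z=0$ forces $R=K_{W}$, hence $B=1-K_{W}R^{-1}=0$ and $B^{*,\frac{1}{2}}=K_{W}^{\frac{1}{2}}B^{\frac{1}{2}}=0$, while $A^*=c-K_{W}R^{-1}C=c-(\Lambda+c)=-\Lambda$; moreover $M(0,\Lambda,K_Z)=K_{W}R^{-1}=1$ by (\ref{ma_3a}), so $F(0,\Lambda,K_Z)=A-M(0,\Lambda,K_Z)C=-\Lambda$ and $|F(0,\Lambda,K_Z)|=|\Lambda|$.

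For (2) and (3): since $B^{*,\frac{1}{2}}=0$, one has $A^*-B^{*,\frac{1}{2}}G=-\Lambda$ for every $G\in{\mathbb R}$, so $|A^*-B^{*,\frac{1}{2}}G|=|\Lambda|$ identically. Then the condition in Definition~\ref{def:det-stab}(b) (unit circle controllability: some $G$ with this modulus $\neq 1$) is satisfiable iff $|\Lambda|\neq 1$, and the condition in Definition~\ref{def:det-stab}(c) (stabilizability: some $G$ with this modulus $<1$) is satisfiable iff $|\Lambda|<1$. The eigenvalue characterisation stated before Theorem~\ref{thm_ric} gives the same conclusions, since the unique eigenvalue of $A^*$ is $-\Lambda$ and $B^{*,\frac{1}{2}}=0$ makes the side condition on the left eigenvector automatic.

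For (4): with $c\in(-1,1)$ and $K_Z=0$, part (1) supplies detectability of $\{A,C\}$ and, when $|\Lambda|\neq 1$, part (2) supplies unit circle controllability of $\{A^*,B^{*,\frac{1}{2}}\}$, so Theorem~\ref{thm_ric}.(1) applies verbatim: the sequence generated by (\ref{i_Q_1_10}) from $K_0=0$ converges to a (necessarily stabilizing) solution of the generalized ARE (\ref{DRE_TI_gae}) iff $\{A^*,B^{*,\frac{1}{2}}\}$ is stabilizable, i.e.\ by part (3) iff $|\Lambda|<1$. The only case not covered by the hypotheses of Theorem~\ref{thm_ric}.(1) is the boundary $|\Lambda|=1$, which I would dispose of directly: there $B^{*,\frac{1}{2}}=0$ makes (\ref{i_ric_3_n}) a perfect square with double root $K=0$, the DRE (\ref{i_Q_1_10}) started at $K_0=0$ is stationary at $0$, and $|F(0,\Lambda,K_Z)|=|\Lambda|=1$ shows the limit is not a stabilizing solution, consistent with the stated equivalence.

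For (5): writing $\mu=\Lambda+c=C$ and clearing the denominator $K_{W}+\mu^{2}K$ in (\ref{i_ric_3_n}), the equation collapses to $\mu^{2}K^{2}=K_{W}(\Lambda^{2}-1)K$, i.e.\ $K\bigl(\mu^{2}K-K_{W}(\Lambda^{2}-1)\bigr)=0$; this yields the roots $K=0$ and $K=K_{W}(\Lambda^{2}-1)/(\Lambda+c)^{2}$ when $c\neq-\Lambda$, and only $K=0$ when $c=-\Lambda$ (the equation then reading $(1-c^{2})K=0$ with $c^{2}=\Lambda^{2}\neq 1$), which is (\ref{i_rae_nu}). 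By (\ref{gre_1})--(\ref{gre_2}) (equivalently \textbf{Fact~3} of Section~\ref{prel_ric}) the second root is strictly negative for $|\Lambda|<1$ and strictly positive for $|\Lambda|>1$, so among the admissible range $|\Lambda|\neq 1$ the value $K=0$ is the unique nonnegative solution exactly when $|\Lambda|<1$; and by the reduction above $K=0$ is stabilizing iff $|F(0,\Lambda,K_Z)|=|\Lambda|<1$. Combining the two, $K=0$ is the unique nonnegative and stabilizing solution of (\ref{i_ric_3_n}) iff $|\Lambda|<1$, whereas for $|\Lambda|>1$ it is neither. I expect the only delicate point to be the boundary $|\Lambda|=1$ in part (4), which lies outside the scope of Theorem~\ref{thm_ric}.(1) and needs the direct argument just indicated; everything else follows quickly from the $K_Z=0$ reductions $A^*=-\Lambda$, $B^{*,\frac{1}{2}}=0$, $F(0,\cdot,\cdot)=-\Lambda$.
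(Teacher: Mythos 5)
Your proof is correct and follows essentially the same route as the paper's: the reductions $A^*=-\Lambda$, $B^{*,\frac{1}{2}}=0$, $F(0,\Lambda,0)=-\Lambda$ under $K_Z=0$, the appeal to Theorem~\ref{thm_ric}.(1) for part (4), and the factorization $K\bigl((\Lambda+c)^2K-K_W(\Lambda^2-1)\bigr)=0$ for part (5) all match the appendix argument. Your explicit disposal of the boundary case $|\Lambda|=1$ in part (4) (where unit circle controllability fails and Theorem~\ref{thm_ric}.(1) does not apply, but the DRE is stationary at $0$ with $|F|=1$) is a small point of care that the paper's proof passes over silently.
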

\begin{proof} See Appendix~\ref{lem_pr_are_AP}. 
\end{proof} 

In the next remark we  make some comments on \cite[Theorem~6.1, see also Lemma~6.1]{kim2010}, i.e.,  that a zero variance of the innovations process of the channel input process is not optimal.\\

\begin{remark} Asymptotic stationarity of optimal process of \cite{kim2010}  \\
\label{rem_g-kalman_1}
 Consider the characterization of feedback capacity given in \cite[Theorem~6.1, $\Sigma$ satisfying eqn(61)]{kim2010}, in which the variance of the innovations process is replaced by a zero value (see comment below \cite[Theorem~6.1]{kim2010}). Then $\Sigma=0$ is one solution of  \cite[$\Sigma$ satisfying eqn(61)]{kim2010}. \\
We ask: what are necessary and/or sufficient conditions for  convergence $\lim_{n\longleftrightarrow} \Sigma_n= \Sigma$, where $\Sigma \geq 0$ is the unique limit that stabilizes the estimation error of the noise?\\
By  the multidimensional version of   Theorem~\ref{thm_ric}.(1), and Lemma~\ref{lem_pr_are}.(3),   then the  limit  $\lim_{n\longrightarrow \infty}\Sigma_n$ converges if and only if the stabilizability condition holds. For the AR$(c)$ noise model,  since the characterization of feedback capacity given   \cite[Theorem~6.1]{kim2010}, presupposes a zero variance of the innovations process, i.e., $K_Z^\infty=0$, then the value of feedback capacity \cite[Theorem~6.1, $C_{FB}=0,\forall \kappa \in [0,\infty)$]{kim2010} (see also  (\ref{i_kim_1a})-(\ref{i_ric_3_na}) with  $K^\infty=0$, which implies  $C^K(\kappa)=0, \forall \kappa \in [0,\infty)$).
\end{remark}

\subsection{Feedback Capacity  of  AGN Channels Driven by Time-Invariant Stable/Unstable AR$(c)$ Noise}
\label{sect:cor-solu}
In this section we analyze the asymptotic per unit time limit of the $n-$FTFI capacity of Definition~\ref{def_cc}, by making use of  the properties of generalized DREs and AREs of Section~\ref{sect:pre} to identify sufficient and necessary conditions,  such that  condition {\bf (C1)} holds.    Then we derive closed form expressions for $C^\infty(\kappa,v_0)=C^\infty(\kappa), \forall v_0$ defined by (\ref{Q_1_2_s1_new_b}), for Regime 1 given by (\ref{reg_1}), and  we show that feedback does not increase  $C^\infty(\kappa)$, for Regimes 2 and 3 given by (\ref{reg_2}) and (\ref{reg_3}). 

First,  we define the  main problem of  asymptotic analysis. \\

\begin{problem} Problem of feedback capacity $C^\infty(\kappa,s_0)$ for stable/unstable time-invariant AR$(c)$ noise \\
\label{prob_1}
Consider the  characterization of the $n-$FTFI capacity of Theorem~\ref{thm_FTFI}, 
and restrict the admissible strategies or distributions  to  the time-invariant  strategies or distributions, defined by (\ref{Q_1_3_s1_new})-(\ref{Q_1_3_s1_e}), which generate $(X^{o,n}, Y^{o,n})$.\\
Define the per unit time limit and maximum by 
\begin{align}
C^\infty(\kappa,v_0) \tri & \max_{{\cal P}_{[0,\infty]}^{\infty}(\kappa) }\lim_{n \longrightarrow\infty}\frac{1}{2n}     \sum_{t=1}^n  \log\Big( \frac{\big(\Lambda^\infty+c\big)^2 K_{t-1}  + K_{Z}^\infty +K_{W}}{K_{W}}\Big) \label{i_ll_2}
\end{align}
where the average power constraint is defined by 
\begin{align}
&{\cal P}_{[0,\infty]}^{\infty}(\kappa)\tri   \Big\{(\Lambda^\infty, K_Z^\infty): X_t^o=\Lambda^\infty \big({V}_{t-1} - \widehat{V}_{t-1}\big) + Z_t^o, \; X_1 =  Z_t^o, \hso t=2, \ldots,n, \nonumber \\
& Z_t^o \in N(0, K_{Z}^\infty), \hso K_Z^\infty\geq 0,  \hso \lim_{n \longrightarrow\infty} \frac{1}{n} {\bf E}_{v_0}\Big( \sum_{t=1}^n \big(X_t^o\big)^2\Big)  =\lim_{n \longrightarrow\infty}\frac{1}{n} \sum_{t=1}^n (\Lambda^\infty)^2 K_{t-1} + K_{Z}^\infty  \leq \kappa \Big\}.  \label{Q_1_9_nn}
\end{align}
Determine sufficient and/or necessary conditions such that \\
(a) the per unit time limit    exists, i.e., condition {\bf (C1)} holds,  and \\
(b) the maximum over $(\Lambda^\infty, K_Z^\infty)$ exists, and  the optimal strategy is such that  $C^\infty(\kappa,v_0)=C^\infty(\kappa)$ is  independent of  the initial state $v_0$.  
\end{problem}

In the next theorem we provide the answer to Problem~\ref{prob_1}, by invoking  Theorem~\ref{thm_ric}. \\

 \begin{theorem} Feedback capacity $C^\infty(\kappa,s_0)$ \\ 
 \label{lem_cov}
Consider the Problem~\ref{prob_1}, defined by (\ref{i_ll_2}), (\ref{Q_1_9_nn}). \\
Define the set 
\begin{align}
{\cal P}^\infty \tri & \Big\{(\Lambda^\infty, K_Z^\infty)\in (-\infty, \infty)\times [0,\infty): \nonumber \\
& \mbox{(i) the pair  $\{A, C\}\equiv \{A, C(\Lambda^\infty)\}$ is detectable,} \nonumber \\
& \mbox{(ii) the pair $\{A^*, B^{*,\frac{1}{2}}\}\equiv \{A^*(K_Z^\infty), B^{*,\frac{1}{2}}(K_Z^\infty)\}$ is stabilizable}\Big\}. \label{adm_set}
\end{align}
Then 
\begin{align}
& C^\infty(\kappa,v_0)=C^\infty(\kappa)\tri    \max_{\big(\Lambda^\infty, K_{Z}^\infty\big) \in {\cal P}^\infty(\kappa)} \frac{1}{2} \log\Big( \frac{\big(\Lambda^\infty+c\big)^2 K^\infty  + K_{Z}^\infty +K_{W}}{K_{W}}\Big) \label{ll_3}
\end{align}
that is, $C^\infty(\kappa,v_0)$ is independent of $v_0$, where  
\begin{align}
{\cal P}^\infty(\kappa)\tri& \Big\{(\Lambda^\infty, K_Z^\infty)\in {\cal P}^\infty: K_Z^\infty\geq 0, \hso \big(\Lambda^\infty\big)^2 K^\infty + K_{Z}^\infty \leq \kappa, \nonumber \\
& K^\infty= c^2 K^\infty + K_W -\frac{ \Big( K_W + c  K^\infty\big(\Lambda^\infty + c \big)\Big)^2}{ \Big(K_Z^\infty+ K_W + \big(\Lambda^\infty + c \big)^2 K^\infty\Big)}\nonumber \\
&\mbox{$K^\infty\geq 0$ is unique and stabilizable, i.e.,  $|F(K^\infty,\Lambda^\infty,K_Z^\infty)|<1$} \Big\} \label{ll_4}, \\
 F(K^\infty,\Lambda^\infty, K_{Z}^\infty) \tri & c -M(K^\infty, \Lambda^\infty, K_{Z}^\infty) \Big(\Lambda^\infty +c\Big), \label{ll_4_a} \\
M(K^\infty, \Lambda^\infty, K_{Z}^\infty)  \tri  & \Big( K_{W} + c  K^\infty\Big(\Lambda^\infty + c \Big)\Big)\Big(K_{Z}^\infty+ K_{W} + \Big(\Lambda^\infty + c \Big)^2 K^\infty\Big)^{-1} \label{ll_4_b}
\end{align}
provided there exists $\kappa\in [0,\infty)$ such that the set ${\cal P}^\infty(\kappa)$ is non-empty.\\
\CDC{Moreover, the maximum element $(\Lambda^\infty, K_Z^\infty) \in {\cal P}^\infty(\kappa)$, is such that, \\
 (i) if the noise is stable, i.e., $c\in (-1,1)$ then the input and the output processes $(X_t^o, Y_t^o), t=1, \ldots$ are asymptotic stationary,  and \\
 (ii) if the noise is unstable i.e., $c\notin (-1,1)$ then the input and the innovations  processes $(X_t^o, I_t^o), t=1, \ldots$ are asymptotic stationary. }
\end{theorem}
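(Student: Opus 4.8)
The plan is to show that condition \textbf{(C1)} singles out exactly the time-invariant strategies lying in $\mathcal{P}^\infty$, and that on this set the per-unit-time limit in \eqref{i_ll_2} collapses to the scalar expression \eqref{ll_3} via the convergence theory of Theorem~\ref{thm_ric}. I would begin with two elementary reductions that already remove the dependence on $v_0$: along the representation \eqref{Q_1_3_s1_new}--\eqref{Q_1_10_s1_new}, the generalized DRE \eqref{Q_1_10_s1_new}, and hence the whole sequence $\{K_t^o\}$ with $K_0^o=0$, carries no dependence on $v_0$; moreover $H(V^n|v_0)=\sum_{t=1}^n H(W_t)=\tfrac{n}{2}\log(2\pi e K_W)$ is also $v_0$-free because $V_t=cV_{t-1}+W_t$. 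Consequently both the objective $\tfrac1{2n}\sum_{t=1}^n\log\big((\Lambda^\infty+c)^2K_{t-1}^o+K_Z^\infty+K_W\big)/K_W$ and the limiting power functional $\lim_n\tfrac1n\sum_{t=1}^n (\Lambda^\infty)^2K_{t-1}^o+K_Z^\infty$ depend on $(\Lambda^\infty,K_Z^\infty)$ only, so the sole role of \textbf{(C1)} is to guarantee that these limits exist, are finite, and yield an ergodic (hence $v_0$-free) output.

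Next I would establish the equivalence: the strategy $(\Lambda^\infty,K_Z^\infty)$ satisfies \textbf{(C1)} if and only if $(\Lambda^\infty,K_Z^\infty)\in\mathcal{P}^\infty$. For the ``if'' direction, membership in $\mathcal{P}^\infty$ means $\{A,C(\Lambda^\infty)\}$ is detectable and $\{A^*(K_Z^\infty),B^{*,1/2}(K_Z^\infty)\}$ is stabilizable (which in particular implies unit-circle controllability), so Theorem~\ref{thm_ric}.(1),(3) gives $\lim_n K_n^o=K^\infty$, the unique nonnegative stabilizing solution of the generalized ARE, with $|F(K^\infty,\Lambda^\infty,K_Z^\infty)|<1$. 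Ces\`aro averaging of a convergent sequence, together with continuity of $x\mapsto\log((\cdot)+K_W)/K_W$ at $K^\infty$ (and $K_W>0$), then forces the objective to $\tfrac12\log\big((\Lambda^\infty+c)^2K^\infty+K_Z^\infty+K_W\big)/K_W$ and the power functional to $(\Lambda^\infty)^2K^\infty+K_Z^\infty$, matching the constraint set $\mathcal{P}^\infty(\kappa)$. Furthermore the error recursion \eqref{i_error_s1_ti} has time-varying gain $F(K_{t-1}^o,\Lambda^\infty,K_Z^\infty)\to F(K^\infty,\Lambda^\infty,K_Z^\infty)\in(-1,1)$ with bounded driving-noise variance, hence $E_t^o$ converges in distribution to a stationary Gaussian of variance $K^\infty$, the innovations $I_t^o=(\Lambda^\infty+c)E_{t-1}^o+Z_t^o+W_t$ converge to a stationary Gaussian of variance $(\Lambda^\infty+c)^2K^\infty+K_Z^\infty+K_W$, and therefore $\tfrac1n H(Y^{o,n}|v_0)=\tfrac1n\sum_{t=1}^n H(I_t^o)$ has a finite $v_0$-free limit with a unique invariant (one-step prediction / innovations) distribution, i.e.\ \textbf{(C1)} holds. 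Combining this with the ``only if'' direction below yields \eqref{ll_3}, the supremum being attained (a maximum) either by compactness of the $(\Lambda^\infty,K^\infty)$-section of the constraint set or, more concretely, by exhibiting the explicit optimizer in Section~\ref{sect:cor-solu} (cf.\ (R2)); and $C^\infty(\kappa,v_0)=C^\infty(\kappa)$ by the first paragraph. The last assertion of the theorem is then immediate: since the maximizer lies in $\mathcal{P}^\infty(\kappa)$ it is detectable and stabilizable, so by the ``if'' analysis it induces asymptotic stationarity of the output innovations and a unique $v_0$-independent invariant distribution, i.e.\ an asymptotically stationary and ergodic output process.

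The main obstacle is the ``only if'' direction, where one must rule out any gain from strategies violating detectability or stabilizability; this is exactly the point at which the generalized Riccati theory of Theorem~\ref{thm_ric} and Lemma~\ref{lem_pr_are} departs from the classical (automatically detectable/stabilizable) picture, and it requires a short case analysis. If $\{A,C\}$ is undetectable then $c^2>1$ and $\Lambda^\infty=-c$, which forces $F=c$; then for $K_Z^\infty>0$ the DRE iterate $K_n^o\to\infty$ so the power functional $(\Lambda^\infty)^2K_{t-1}^o+K_Z^\infty$ diverges and the strategy is infeasible, while for $K_Z^\infty=0$ one gets $K_t^o\equiv0$, $X_t^o\equiv0$ and zero rate. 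If $\{A,C\}$ is detectable but $\{A^*,B^{*,1/2}\}$ is not stabilizable, then by Theorem~\ref{thm_ric}.(2) no stabilizing solution exists, the error recursion \eqref{i_error_s1_ti} is not asymptotically stable, and (by Fact~3 and Lemma~\ref{lem_pr_are}.(3)--(5)) $\{K_t^o\}$ started from $K_0^o=0$ either diverges or settles at the non-stabilizing root $K^\infty=0$; in the latter case the innovations have asymptotic variance $K_W$ and the rate is again $0$. Since every element of $\mathcal{P}^\infty(\kappa)$ is feasible with rate $\ge\tfrac12\log 1=0$, these excluded strategies cannot beat the supremum over $\mathcal{P}^\infty(\kappa)$, so, provided $\mathcal{P}^\infty(\kappa)$ is non-empty (which holds, e.g.\ $(0,\kappa)\in\mathcal{P}^\infty(\kappa)$ for $c\neq0$), the restriction to $\mathcal{P}^\infty(\kappa)$ is without loss. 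A secondary subtlety to handle carefully is the meaning of ``asymptotic stationarity of the output'' for the unstable noise: since $\mathbf{E}(V_t)^2\to\infty$ when $c^2>1$ the raw output has unbounded variance, so the stationarity/ergodicity claim must be read at the level of the innovations (equivalently, the one-step prediction error) process, which is precisely what governs $\lim_n\tfrac1n H(Y^{o,n}|v_0)$.
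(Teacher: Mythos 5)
Your proposal is correct, and its core coincides with the paper's own argument: for $(\Lambda^\infty,K_Z^\infty)\in{\cal P}^\infty$, detectability plus stabilizability (the latter implying unit-circle controllability) let one invoke Theorem~\ref{thm_ric} to conclude $\lim_n K_n^o=K^\infty$, the unique stabilizing ARE solution, after which Ces\`aro convergence of the power and log summands yields \eqref{ll_3}, with $v_0$-independence and ergodicity coming from the Kalman-filter error recursion being asymptotically stable ($|F(K^\infty,\Lambda^\infty,K_Z^\infty)|<1$, $K_W>0$). Where you go beyond the paper is the ``only if'' case analysis: the paper's proof simply restricts to ${\cal P}^\infty$ (condition \textbf{(C1)} being built into the definition of $C^\infty(\kappa,v_0)$) and defers the exclusion of non-detectable/non-stabilizable strategies to later material (Remark~\ref{rem_mt}, Lemma~\ref{lemma_nc}, \textbf{Fact~3}, and the proof of Theorem~\ref{thm_sol}), whereas you prove directly that undetectable pairs force $\Lambda^\infty=-c$ with $|c|>1$ and are either power-infeasible ($K_Z^\infty>0$) or zero-rate ($K_Z^\infty=0$), and that non-stabilizable pairs (which occur only when $K_Z^\infty=0$ and $|\Lambda^\infty|\ge1$) lead to the non-stabilizing root $K^\infty=0$ and zero rate. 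This buys a self-contained justification that the restriction to ${\cal P}^\infty(\kappa)$ is without loss, at the cost of a slightly informal treatment of attainment of the maximum (the appeal to compactness is loose since $\Lambda^\infty$ is unbounded a priori, though deferring to the explicit optimizer of Section~\ref{sect:cor-solu} is legitimate and matches what the paper itself does). Your closing remark that for $c^2>1$ ``asymptotic stationarity of the output'' must be read at the level of the innovations process is a genuine clarification the paper leaves implicit.
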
 
 \begin{proof} The   sequence $\{K_{t}^o: t=1, 2, \ldots, n\}$  satisfies the time-invariant generalized DRE (\ref{Q_1_10_s1_new}),  with zero initial condition, $K_{0}^o=0$. Then for elements in the set ${\cal P}^\infty$,   an application of 
 Theorem~\ref{thm_ric}.(1), (2), states that the sequence generated by  (\ref{Q_1_10_s1_new})  converges, i.e.,  $\lim_{n \longrightarrow \infty} K_{n}^o =K^\infty$, where  $K^\infty=K^\infty(\Lambda^\infty, K_Z^\infty)\geq  0$ is the unique stabilizing solution of the  generalized ARE given in  (\ref{ll_4}). Hence, the following summands converge, and so the limits exist in $[0,\infty)$.
 \begin{align}
&\lim_{n \longrightarrow\infty}\frac{1}{n} \sum_{t=1}^n \Big(\big(\Lambda^\infty\big)^2 K_{t-1}^o + K_{Z}^\infty\Big)=\big(\Lambda^\infty\big)^2 K^\infty + K_{Z}^\infty, \label{conve_1} \\
 & \lim_{n \longrightarrow\infty}\frac{1}{2n}     \sum_{t=1}^n  \log\Big( \frac{\big(\Lambda^\infty+c\big)^2 K_{t-1}^o  + K_{Z}^\infty +K_{W}}{K_{W}}\Big)=\frac{1}{2} \log\Big( \frac{\big(\Lambda^\infty+c\big)^2 K^\infty  + K_{Z}^\infty +K_{W}}{K_{W}}\Big).\label{conve_2}
 \end{align} 
 This establishes the characterization of  the right hand side of   (\ref{ll_3}), and its   independence on $v_0$. \CDC{The last part of the theorem  follows from the asymptotic properties of the Kalman-filter, as follows.  For (i).  $E_t^o, t=1, \ldots$ is asymptotically stationary, which implies $X_t^o=\Lambda^\infty E_{t-1}^o + Z_t^o, t=1,\ldots$, the innovations process $I_t^o, t=1, \ldots$ and $Y_t^o=X_t^o+ V_t, t=1, \ldots$ are   asymptotically stationary.  Similarly for (ii), with the exception that $Y_t^o=X_t^o+ V_t, t=1, \ldots$ is not asymptotically stationary, because $V_t, t=1, \ldots $ is unstable.} 
 \end{proof}

Clearly,  the set ${\cal P}^\infty$, defined in  Theorem~\ref{lem_cov} characterizes  condition {\bf (C1)}, and  (\ref{ll_3}) characterizes the asymptotic limit of feedback capacity defined by (\ref{inter}).\\
In the next remark, we discuss some aspects of  Theorem~\ref{lem_cov}, and we show that ${\cal P}^\infty(\kappa) \subseteq {\cal P}^\infty$ is non-empty for some values of $\kappa \in [0,\infty)$.\\

\begin{remark} Comments on Theorem~\ref{lem_cov}\\
\label{rem_mt}
(1)  Theorem~\ref{lem_cov} characterizes the feedback capacity  $C^\infty(\kappa,s_0)=C^\infty(\kappa)$, 
 independently of $v_0$,  for AGN channels, driven by  stable or unstable AR$(c)$ noise, i.e., $c \in (-\infty,\infty)$. \\
(2) Let $(\Lambda^{\infty,*} K_Z^{\infty,*})\in {\cal P}^\infty(\kappa)$ denote the optimal pair for the optimization problem $C^\infty(\kappa)$. Then we need to characterize the set of all $\kappa \in [0,\infty)$ such that  $(\Lambda^{\infty,*} K_Z^{\infty,*}) \in {\cal P}^\infty(\kappa)$. \\
\CDC{
{\bf Case 1-Stable.} Suppose $c\in (-1,1)$.\\ (i)  Suppose $K_{Z}^\infty=0$.  By Lemma~\ref{lem_pr_are},   $\{A, C\}$ is detectable and $\{A^*, B^{*\frac{1}{2}}\}$ is stabilizable  if and only if $|\Lambda^\infty|<1$. For such a choice of  $(\Lambda^\infty, K_Z^\infty)\in {\cal P}^\infty$, defined by (\ref{adm_set}) then $K^\infty=0$, and   ${\cal P}^\infty(\kappa)$ in non-empty for all $\kappa \in [0,\infty)$.    \\
(ii) Suppose $\Lambda^\infty=0$. Then $K_{Z}^\infty=\kappa$, and  similar to  Lemma~\ref{lem_pr_are},  $\{A, C\}$ is detectable and $\{A^*, B^{*\frac{1}{2}}\}$ is stabilizable for all $\kappa \in (0,\infty)$, and the non-feedback channel input $X_t^o=Z_t^o, t=1,2, \ldots$  induces a strictly positive achievable rate.         \\
{\bf Case 2-Unstable.} Suppose $|c|\geq 1$. \\ 
(i) Suppose $K_{Z}^\infty=0$. By Lemma~\ref{lem_pr_are},  $\{A^*, B^{*\frac{1}{2}}\}$ is stabilizable  if 
and only if $|\Lambda^\infty|<1$, and 
there exists a $G \in (-\infty, \infty)$ such that $|A- GC|=|c- G(\Lambda^\infty+c)|<1$, i.e., for $G=1$, then 
$\{A,C\}$ detectable if   $|\Lambda^\infty|<1$. For such a choice of  $(\Lambda^\infty, K_Z^\infty)\in {\cal P}^\infty$, then   $K^\infty=0$, and hence   ${\cal P}^\infty(\kappa)$ in non-empty for all $\kappa \in [0,\infty)$.  \\
(ii) Suppose $\Lambda^\infty=0$. Then $K_{Z}^\infty=\kappa$, and   $\{A, C\}$ is detectable and $\{A^*, B^{*\frac{1}{2}}\}$ is stabilizable for all $\kappa \in (0,\infty)$, and the non-feedback channel input $X_t^o=Z_t^o, t=1,2, \ldots$  induces a strictly positive achievable rate.        \\
However, from  Case 1.(i) and Case 2.(i), since  $K_Z^{\infty,*}=0$ then $C^\infty(\kappa)=0, \forall \kappa \in [0,\infty)$. On the other hand, Case 1.(ii) and Case 2.(ii), with $\Lambda^\infty=0$,  $K_{Z}^\infty=\kappa \in (0,\infty)$ gives  a strictly positive non-feedback achievable rate. This is re-visited in Theorem~\ref{thm_nfb}.} 
\end{remark}

In the next lemma we give  necessary conditions for the optimization problem $C^\infty(\kappa)$ defined by (\ref{ll_3}).\\

 \begin{lemma} Necessary conditions for the optimization problem of Theorem~\ref{lem_cov}\\
\label{lemma_nc}
Suppose there exists a policy   $(\Lambda^{\infty,*}, K_Z^{\infty,*})\in {\cal P}^{\infty}(\kappa)$ for the optimization problem  $C^\infty(\kappa)$ in (\ref{ll_3}).  \\ 
 Define the Lagrangian by 
\begin{align}
&{\cal L}(\Lambda^{\infty}, K_Z^{\infty},K^\infty,\lambda)\tri \big(\Lambda^\infty+c\big)^2 K^\infty  + K_{Z}^\infty +K_{W} \nonumber \\
&\hst -\lambda_1 \Big\{\Big(K^\infty-c^2 K^\infty - K_W\Big)\Big(K_Z^\infty+ K_W + \big(\Lambda^\infty + c \big)^2 K^\infty\Big) +\Big( K_W + c  K^\infty\big(\Lambda^\infty + c \big)\Big)^2\Big\} \nonumber \\
& \hst    -\lambda_2\Big((\Lambda^\infty)^2 K^\infty + K_{Z}^\infty-\kappa\Big)-\lambda_3\Big(- K^\infty\Big) -\lambda_4\Big(-K_Z^\infty\Big), 
\label{nc_1}  \\
&\lambda\tri (\lambda_1,\lambda_2,\lambda_3 \lambda_4) \in {\mathbb R}^4.
\end{align} 
Then the following hold.\\
(i) Stationarity:
\begin{align}
&\frac{\partial}{\partial K_Z^\infty}{\cal L}(\Lambda^{\infty}, K_Z^{\infty},K^\infty,\lambda)\Big|_{\Lambda^{\infty}=\Lambda^{\infty,*}, K_Z^{\infty}=K_Z^{\infty,*},K^\infty=K^{\infty,*},\lambda=\lambda^{*}}=0,\label{nc_3}  \\
&\frac{\partial}{\partial \Lambda^\infty}{\cal L}(\Lambda^{\infty}, K_Z^{\infty},K^\infty,\lambda)\Big|_{\Lambda^{\infty}=\Lambda^{\infty,*}, K_Z^{\infty}=K_Z^{\infty,*},K^\infty=K^{\infty,*},\lambda=\lambda^{*}}=0,\label{nc_4}  \\
&\frac{\partial}{\partial K^\infty}{\cal L}(\Lambda^{\infty}, K_Z^{\infty},K^\infty,\lambda)\Big|_{\Lambda^{\infty}=\Lambda^{\infty,*}, K_Z^{\infty}=K_Z^{\infty,*},K^\infty=K^{\infty,*}, \lambda=\lambda^{*}}=0.   
 \label{nc_5} 
\end{align} 
Complementary Slackness:
\begin{align} 
&\lambda_2^*\Big(\Lambda^{\infty,*})^2 K^{\infty,*} + K_{Z}^{\infty,*} - \kappa\Big)=0, \hso \lambda_3^* K^{\infty,*}=0, \hso \lambda_4^* K_Z^{\infty,*}=0,\label{nc_6} \\
&\lambda_1^* \Big\{\Big(K^{\infty,*}-c^2 K^{\infty,*} - K_W\Big)\Big(K_Z^{\infty,*}+ K_W + \big(\Lambda^{\infty,*} + c \big)^2 K^{\infty,*}\Big) +\Big( K_W + c  K^{\infty,*}\big(\Lambda^{\infty,*} + c \big)\Big)^2\Big\}=0.
\end{align}
Primal Feasibility:
\begin{align}
&(\Lambda^{\infty,*})^2 K^{\infty,*} + K_{Z}^{\infty,*} \leq \kappa, \hso K_Z^{\infty,*}\geq 0, \hso K^{\infty,*}\geq 0,\label{nc_8}  \\
&\Big(K^{\infty,*}-c^2 K^{\infty,*} - K_W\Big)\Big(K_Z^{\infty,*}+ K_W + \big(\Lambda^{\infty,*} + c \big)^2 K^{\infty,*}\Big) +\Big( K_W + c  K^{\infty,*}\big(\Lambda^{\infty,*} + c \big)\Big)^2 \leq 0. \label{nc_9}  
\end{align}
Dual Feasibility:
\begin{align}
\lambda_1^* \geq 0, \hso \lambda_2^* \geq 0, \hso \lambda_3^*\geq 0, \hso \lambda_4^*\geq 0.
\end{align}
(ii) If  $K_Z^*=0$ then  $K^{\infty,*}=0$,  if $K^{\infty,*}=0$ then $K_Z^*=0$, and  if either $K_Z^*=0$ or $K^{\infty,*}=0$  then $C^\infty(\kappa)=0, \forall \kappa \in [0,\infty)$.\\
(iii) A necessary condition for existence of $\kappa \in (0,\infty)$ such that $C^\infty(\kappa)>0$ is  $\lambda_1^* > 0, \lambda_2^* > 0, \lambda_3^*= 0, \lambda_4^*= 0$, and (\ref{nc_9}) holds with equality.
 \end{lemma}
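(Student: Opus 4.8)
The plan is to treat $C^\infty(\kappa)$ as a finite-dimensional nonlinear program in the three scalar variables $(\Lambda^\infty, K_Z^\infty, K^\infty)$, with the generalized ARE written in cleared-denominator polynomial form as an equality constraint, to read off the stated conditions from the standard Karush-Kuhn-Tucker (KKT) theorem, and then to establish the collapse-to-zero claim, which is where the real content lies.

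\textbf{Step 1 (reformulation as an NLP).} Since $x \mapsto \tfrac{1}{2}\log(x/K_W)$ is strictly increasing on $(0,\infty)$, maximizing the objective in (\ref{ll_3}) is equivalent to maximizing the numerator $f \tri (\Lambda^\infty + c)^2 K^\infty + K_Z^\infty + K_W$ over $(\Lambda^\infty, K_Z^\infty, K^\infty)$ subject to the equality constraint $g_1 = 0$ obtained by clearing the denominator of the ARE in (\ref{ll_4}) (this is exactly (\ref{nc_9})), the power inequality $g_2 \tri (\Lambda^\infty)^2 K^\infty + K_Z^\infty - \kappa \leq 0$, and the nonnegativity inequalities $-K^\infty \leq 0$ and $-K_Z^\infty \leq 0$. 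The Lagrangian (\ref{nc_1}) is then precisely $f - \lambda_1 g_1 - \lambda_2 g_2 - \lambda_3(-K^\infty) - \lambda_4(-K_Z^\infty)$, so the multipliers in the lemma correspond to maximizing the numerator $f$ directly.

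\textbf{Step 2 (KKT conditions).} Granted that an optimal pair $(\Lambda^{\infty,*}, K_Z^{\infty,*}) \in \mathcal{P}^\infty(\kappa)$ exists with associated stabilizing $K^{\infty,*}$, the stationarity identities (\ref{nc_3})--(\ref{nc_5}), complementary slackness (\ref{nc_6}), primal feasibility (\ref{nc_8})--(\ref{nc_9}), and dual feasibility are exactly the conclusions of the KKT theorem applied to the program of Step~1, with the maximization sign convention $\lambda_2,\lambda_3,\lambda_4 \geq 0$ for the $\leq 0$ inequalities and $\lambda_1$ unrestricted for the equality. To guarantee the multipliers exist I would verify a constraint qualification: because the stabilizing solution is the isolated branch of the ARE on which $\partial g_1/\partial K^\infty \neq 0$, the implicit function theorem lets one solve $g_1 = 0$ locally for $K^\infty$ as a smooth function of $(\Lambda^\infty, K_Z^\infty)$, reducing the problem to a two-variable program on which MFCQ, and hence the existence of $(\lambda_2,\lambda_3,\lambda_4)$, is transparent; the eliminated equality then supplies $\lambda_1$.

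\textbf{Step 3 (collapse to zero).} This is the substantive part. First suppose $K^{\infty,*}=0$. Substituting $K^\infty = 0$ into (\ref{nc_9}) yields $(-K_W)(K_Z^{\infty,*}+K_W) + K_W^2 = 0$, i.e. $K_W K_Z^{\infty,*} = 0$; since $K_W > 0$ this forces $K_Z^{\infty,*}=0$, whence $f = K_W$ and the objective equals $\tfrac{1}{2}\log(K_W/K_W)=0$. Conversely suppose $K_Z^{\infty,*}=0$. Membership in $\mathcal{P}^\infty(\kappa)$ requires the stabilizability condition $|F(K^{\infty,*},\Lambda^{\infty,*},0)|<1$; by Lemma~\ref{lem_pr_are}.(3) this holds iff $|\Lambda^{\infty,*}|<1$, and then Lemma~\ref{lem_pr_are}.(5) identifies $K^\infty = 0$ as the unique nonnegative stabilizing root of the ARE, the second root $K_W((\Lambda^{\infty,*})^2-1)/(\Lambda^{\infty,*}+c)^2$ being nonpositive under $|\Lambda^{\infty,*}| \leq 1$. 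Hence $K^{\infty,*}=0$, and again $f = K_W$ gives objective $0$. In either case $C^\infty(\kappa)=0$ for all $\kappa \in [0,\infty)$.

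\textbf{Main obstacle.} The KKT portion is routine once the constraint qualification of Step~2 is settled. The genuine point of the lemma is Step~3: the logical bridge between the two degenerate hypotheses rests not on any optimization argument but on the precise algebraic structure of the generalized ARE together with the stabilizability characterization of Lemma~\ref{lem_pr_are}. This is exactly the mechanism by which a zero innovations variance drives the mean-square error to its zero solution and annihilates the rate, operationalizing the paper's critique of the feedback-capacity characterization in \cite{kim2010}.
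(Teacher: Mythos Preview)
Your proposal is correct and follows essentially the same approach as the paper. The paper's proof dismisses parts (i)--(iii) in a single sentence (``The conditions are a consequence of the optimization problem'') and then argues the collapse-to-zero claim exactly as you do in Step~3: for $K_Z^{\infty,*}=0$ it invokes Lemma~\ref{lem_pr_are} to force $|\Lambda^{\infty,*}|<1$ and hence $K^{\infty,*}=0$, and for $K^{\infty,*}=0$ it substitutes into the ARE (\ref{nc_9}) to force $K_Z^{\infty,*}=0$. Your Step~2 discussion of constraint qualification is more careful than anything the paper supplies, but the substantive reasoning is identical.
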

 \begin{proof} See Section~\ref{pr_lemma_nc}. 
 \end{proof}

%
%



We shall derive the main  Theorem~\ref{thm_sol}, after we introduce the lower bound on feedback and non-feedback capacity, of Section~\ref{sect:nfb} (below).

\subsubsection{Achievable Rates Without Feedback for Stable and Unstable AR$(c)$ Noise}
\label{sect:nfb}
By Remark~\ref{rem_mt}, for   $\Lambda^\infty=0$,  the optimization problem of Theorem~\ref{lem_cov} reduces to $C^\infty(\kappa)\Big|_{\Lambda^\infty=0}$, which is an  achievable rate, because the pair $\{A,C\}$ is detectable  and the pair $\{A^*, B^{*,\frac{1}{2}}\}$ is stabilizable, when  $\Lambda^\infty=0$.  For $\Lambda^\infty=0$,  by (\ref{Q_1_3_s1_new}), the channel input is an independent innovations process $X_t^o=Z_t^0, t=1, \ldots,n$, and hence the code does not use  feedback.  In the next theorem we calculate  $C^\infty(\kappa)\Big|_{\Lambda^\infty=0}$.\\

  \begin{theorem} Achievable rates without feedback for stable and unstable AR$(c)$ noise  \\ 
 \label{thm_nfb}
For $\Lambda^\infty=0$, define the set 
\begin{align}
{\cal P}_0^{\infty,nfb} \tri & \Big\{K_Z^\infty\in [0,\infty): \nonumber \\
& \mbox{(i) the pair  $\{A, C\}\Big|_{\Lambda^\infty=0}\equiv \{A, C(\Lambda^\infty)\}\Big|_{\Lambda^\infty=0}$ is detectable,} \label{nf_1}  \\
& \mbox{(ii) the pair $\{A^*, B^{*,\frac{1}{2}}\}\Big|_{\Lambda^\infty=0}\equiv \{A^*(K_Z^\infty), B^{*,\frac{1}{2}}(K_Z^\infty)\}\Big|_{\Lambda^\infty=0}$ is stabilizable}\Big\}. \label{nf_2} 
\end{align}
For $\Lambda^\infty=0$, define the   channel input and output processes  by
\begin{align}
&X_t^o =Z_t^o,     \hst t=1, \ldots,n,  \label{nf_1_a}  \\
&V_t=c V_{t-1} +W_t, \hso V_0=v_0, \label{nf_1_b}  \\
&Y_t^o= X_t^o + V_t=c {V}_{t-1}^o+W_t + Z_t^o,  \label{nf_1_d}\\
&Y_1^o=Z_1^o+c V_0 +W_1. \label{nf_1_e}
\end{align}

(1) A lower bound on non-feedback capacity $C^{\infty,nfb}(\kappa,v_0)$ is $C_{LB}^{\infty,nfb}(\kappa)$ given by
\begin{align}
& C^{\infty,nfb}(\kappa,v_0)\geq C^\infty(\kappa)\Big|_{\Lambda^\infty=0}= C_{LB}^{\infty,nfb}(\kappa)\tri     \max_{K_{Z}^\infty \in {\cal P}_0^{\infty,nfb}(\kappa)} \frac{1}{2} \log\Big( \frac{c^2 K^\infty  + K_{Z}^\infty +K_{W}}{K_{W}}\Big) \label{nf_4}
\end{align}
where  
\begin{align}
{\cal P}_0^{\infty,nfb}(\kappa)\tri& \Big\{ K_Z^\infty\in {\cal P}_0^{\infty,nfb}: K_Z^\infty\geq 0, \hso K_{Z}^\infty \leq \kappa, \nonumber \\
& K^\infty= c^2 K^\infty + K_W -\frac{ \Big( K_W + c^2  K^\infty\Big)^2}{ \Big(K_Z^\infty+ K_W +   c^2 K^\infty\Big)},\nonumber \\
&\mbox{$K^\infty\geq 0$ is unique and stabilizing, i.e.,  $|F^{nfb}(K^\infty,K_Z^\infty)|<1$} \Big\} \label{nf_5}, \\
 F^{nfb}(K^\infty, K_{Z}^\infty) \tri & c -M^{nfb}(K^\infty, K_{Z}^\infty)c,\label{nf_6} \\
M^{nfb}(K^\infty, K_{Z}^\infty)  \tri  & \Big( K_{W} + c^2  K^\infty\Big)\Big(K_{Z}^\infty+ K_{W} + c^2 K^\infty\Big)^{-1}\label{nf_7}
\end{align}
provided there exists $\kappa\in [0,\infty)$ such that the set ${\cal P}_0^{\infty,nfb}(\kappa)$ is non-empty.\\
Moreover,  $C_{LB}^{\infty,nfb}(\kappa)$ is an achievable   rate without feedback,  independent of the initial state $V_0=v_0$, and\\
\CDC{
 (i) if the noise is stable, i.e., $c\in (-1,1)$ then the input and the output processes $(X_t^o, Y_t^o), t=1, \ldots$ are asymptotically stationary,  and \\
 (ii) if the noise is unstable i.e., $c\notin (-1,1)$ then the  input and the innovations  processes $(X_t^o, I_t^o), t=1, \ldots$ are asymptotic stationary. }\\
%
%
%
%
(2) The lower bound on non-feedback capacity of  (1) is given by  
\begin{align}
& C_{LB}^{\infty,nfb}(\kappa)= \frac{1}{2} \log\Big( \frac{c^2 K^{\infty,*}  + \kappa +K_{W}}{K_{W}}\Big), \hst \forall \kappa \in {\cal K}^{\infty, nfb}(c,K_w) \label{nf_8}
\end{align}
where $K^{\infty,*}\geq 0$ is unique and stabilizing, and $K_Z^{\infty,*}$,  are given by  
\begin{align}
&K^{\infty,*}= \left\{ \begin{array}{lll}  \frac{- \kappa \Big(1-c^2\Big)-K_W + \sqrt{\Big(\kappa\big(1-c^2\big) +K_W\Big)^2+4c^2 K_W \kappa}}{2c^2}, & \forall  c\neq 0, & \forall \kappa \in {\cal K}^{\infty, nfb}(c, K_{W}),  \\
\frac{\kappa K_W}{\kappa+K_W}, & c=0, & \forall \kappa \in [0,\infty),  \end{array} \right. 
 \label{nf_9}\\
 &K_Z^{\infty,*}=\kappa, \\
&{\cal K}^{\infty, nfb}(c, K_{W})\tri \Big\{\kappa \in [0,\infty): \hso K^\infty\geq 0 \Big\}=[0,\infty).\label{nf_10}
\end{align}
\end{theorem}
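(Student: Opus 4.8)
The plan is to derive Theorem~\ref{thm_nfb} as the specialization of Theorem~\ref{lem_cov} to $\Lambda^\infty=0$, supplemented by an explicit solution of the resulting scalar algebraic Riccati equation (ARE). First I would note that putting $\Lambda^\infty=0$ in (\ref{Q_1_3_s1_new}) forces $X_t^o=Z_t^o$, an i.i.d.\ Gaussian sequence, so the induced channel input distribution ${\bf P}_{X_t^o|V_{t-1},Y^{o,t-1}}$ does not depend on $(V_{t-1},Y^{o,t-1})$ and hence uses no feedback; consequently, as soon as condition {\bf (C1)} is verified for this strategy, $C^\infty(\kappa)\big|_{\Lambda^\infty=0}$ is an achievable rate for the nofeedback code of Definition~\ref{def_code}.(b), hence a lower bound on $C^{\infty,nfb}(\kappa,v_0)$, and the characterization (\ref{nf_4})-(\ref{nf_7}) is exactly (\ref{ll_3})-(\ref{ll_4_b}) read off at $\Lambda^\infty=0$.

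Next I would check that, with $\Lambda^\infty=0$, conditions {\bf (C1)}.(i)-(ii) hold for every $K_Z^\infty\in[0,\infty)$, so that ${\cal P}_0^{\infty,nfb}=[0,\infty)$ and ${\cal P}_0^{\infty,nfb}(\kappa)$ is non-empty for each $\kappa$. Since here $A=c$ and $C=\Lambda^\infty+c=c$, the pair $\{A,C\}$ is detectable: for $c\ne 0$ take $G=1$, giving $|A-GC|=0<1$, while for $c=0$ the state map $A=0$ is already stable. From (\ref{ma_1})-(\ref{m_2}) one computes $A^*=cK_Z^\infty/(K_Z^\infty+K_W)$ and $B^{*,\frac{1}{2}}=(K_WK_Z^\infty/(K_Z^\infty+K_W))^{1/2}$; because $K_W>0$, $B^{*,\frac{1}{2}}>0$ whenever $K_Z^\infty>0$, and $A^*=0$ when $K_Z^\infty=0$, so $\{A^*,B^{*,\frac{1}{2}}\}$ is stabilizable (hence also unit circle controllable) in all cases. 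Theorem~\ref{thm_ric}.(1)-(2) then gives that the DRE (\ref{Q_1_10_s1_new}) with $\Lambda^\infty=0$ and $K_0^o=0$ converges to the unique nonnegative stabilizing solution $K^\infty$ of the ARE in (\ref{nf_5}); reproducing the convergence argument in the proof of Theorem~\ref{lem_cov} (the power and rate per-block sums converge to their steady-state values, and $K_W>0$ yields asymptotic stationarity and ergodicity of the output independently of $v_0$) establishes part~(1) together with (\ref{nf_4}).

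For part~(2) I would solve the ARE explicitly. Clearing denominators in (\ref{nf_5}) collapses it to $c^2(K^\infty)^2+K^\infty(K_W+K_Z^\infty(1-c^2))-K_WK_Z^\infty=0$, which is linear when $c=0$ (root $K_WK_Z^\infty/(K_W+K_Z^\infty)$); for $c\ne 0$ the product of the two roots equals $-K_WK_Z^\infty/c^2\le 0$, so exactly one root is nonnegative — the one carrying the $+$ sign — and, by Theorem~\ref{thm_ric}, this is the stabilizing solution, whence $K^\infty\ge 0$ for every $\kappa\in[0,\infty)$, i.e.\ ${\cal K}^{\infty,nfb}(c,K_W)=[0,\infty)$. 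Finally I would show the objective increases in $K_Z^\infty$: writing $g(K_Z^\infty)\tri c^2K^\infty+K_Z^\infty$ and inserting the $+$-root, $g(K_Z^\infty)=\tfrac{1}{2}[K_Z^\infty(1+c^2)-K_W+\sqrt{D}\,]$ with $D\tri(K_W+K_Z^\infty(1-c^2))^2+4c^2K_WK_Z^\infty$, so $g'(K_Z^\infty)=\tfrac{1}{2}[(1+c^2)+(K_W(1+c^2)+K_Z^\infty(1-c^2)^2)/\sqrt{D}\,]>0$; hence the maximum over ${\cal P}_0^{\infty,nfb}(\kappa)$ is attained at $K_Z^\infty=\kappa$, and substituting $K_Z^\infty=\kappa$ into the $+$-root and into (\ref{nf_4}) yields (\ref{nf_9}) and (\ref{nf_8}). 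The main obstacle is the bookkeeping establishing stabilizability (equivalently $B^{*,\frac{1}{2}}>0$) for all admissible $K_Z^\infty$ at $\Lambda^\infty=0$ — this is what lets Theorem~\ref{thm_ric} be applied here even for unstable $c$ — whereas the quadratic-root analysis of the ARE and the monotonicity of $g$ are routine.
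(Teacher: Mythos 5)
Your proof is correct and follows essentially the same route as the paper's: specialize Theorem~\ref{lem_cov} to $\Lambda^\infty=0$, note the resulting IID input uses no feedback so the value is a lower bound on $C^{\infty,nfb}(\kappa,v_0)$, and then solve the scalar generalized ARE $c^2(K^\infty)^2+K^\infty\big(K_W+K_Z^\infty(1-c^2)\big)-K_WK_Z^\infty=0$ at the boundary $K_Z^{\infty,*}=\kappa$. The only divergences are in supporting detail, all of them sound: you prove boundary optimality via the monotonicity of $c^2K^\infty(K_Z^\infty)+K_Z^\infty$ in $K_Z^\infty$ where the paper merely asserts it; you verify detectability and stabilizability at $\Lambda^\infty=0$ for every $K_Z^\infty\geq 0$ explicitly (computing $A^*=cK_Z^\infty/(K_Z^\infty+K_W)$ and $B^{*,\frac{1}{2}}>0$ for $K_Z^\infty>0$), which the paper leaves implicit; and you deduce $|F^{nfb}(K^{\infty,*},\kappa)|<1$ from the sign of the product of the two roots together with the uniqueness clause of Theorem~\ref{thm_ric}, whereas the paper carries out the direct bound in Appendix~\ref{pr_stab_Lam0} (while itself remarking that the conditions already ensure it).
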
 
 \begin{proof} (1) Note that by setting  $\Lambda^\infty=0$, then the representation of channel input $X^{o,n}$ defined by (\ref{Q_1_3_s1_new})-(\ref{Q_1_10_s1_new}) is used without feedback, and this is a lower bound on the non-feedback capacity. Hence, the statements follow from   Theorem~\ref{lem_cov}, as a special case.\\
 (2) This follows from (1), since the optimal $K_Z^\infty$ occurs on the boundary, i.e.,  $K_Z^{\infty,*}=\kappa$. Then by substituting $(\Lambda^{\infty,*},K_Z^{\infty,*})=(0,\kappa)$ into the generalized ARE of the constraint (\ref{nf_5}), we obtain 
 \bea
 c^2 \Big(K^{\infty,*}\Big)^2 +K^\infty \Big\{\kappa\Big(1-c^2\Big) +K_W \Big\} -K_W \kappa=0, \hso c\neq 0. \label{gare_nf}
 \eea
 Hence, the unique and  non-negative stabilizing   solution of the generalized ARE (\ref{gare_nf})  is given by  (\ref{nf_9}), for $c\neq 0$. For $c=0$, then the generalized ARE reduces to $K^{\infty,*} \big\{\kappa  +K_W \big\} -K_W \kappa=0$, and hence all equations under (2) are obtained.   The validity of the stability condition, i.e.,  $|F^{nfb}(K^\infty, K_{Z}^\infty)|<1$, although it is ensured by the conditions, it is shown   in Appendix~\ref{pr_stab_Lam0}.  
 \end{proof}

\begin{remark} On the achievable  rate without feedback of Theorem~\ref{thm_nfb}\\
By Theorem~\ref{thm_nfb}.(2) we deduce that the lower bound $C_{LB}^{\infty,nfb}(\kappa), \kappa \in [0,\infty)$, holds for stable and unstable AR(1) noise.    We should explicitly mention the two special cases $c=0$ and $|c|=1$. \\
(1) For $c=0$, we recover, as expected, the capacity of the AGN channel with memoryless noise. \\
(2) For $|c|=1$, we obtain the lower bound $C_{LB}^{\infty,nfb}(\kappa)$ on the non-feedback capacity  $C^{\infty,nfb}(\kappa)$, given by   
\begin{align}
& C_{LB}^{\infty,nfb}(\kappa)= \frac{1}{2} \log\Big( \frac{c^2 K^{\infty,*}  + \kappa +K_{W}}{K_{W}}\Big), \\
&K^{\infty,*}=   \frac{-K_W + \sqrt{K_W^2+4 K_W \kappa}}{2}, \hso K_Z^{\infty,*}=\kappa \in [0,\infty).
\end{align}
The above choice of a channel input strategy ensures the pair  $\{A, C\}\Big|_{\Lambda^{\infty,*}=0}= \{c, c\}$ is detectable, and the pair $\{A^*, B^{*,\frac{1}{2}}\}\Big|_{K_Z^{\infty,*}=\kappa}= \{c-\frac{K_W}{\kappa+K_W}, K_W^{\frac{1}{2}}\sqrt{\big(1-K_W\big(\kappa+K_W\big)^{-1}}\}$ is stabilizable, for any $\kappa \in (0,\infty)$.
\end{remark}

\subsubsection{Feedback Capacity for Stable and Unstable AR$(c)$ Noise}
\label{sect:fb}
%
%
 Next  we derive closed form expressions for  the feedback capacity, by solving  the optimization problem of Theorem~\ref{lem_cov}, i.e., (\ref{ll_3}). \\
 First, by the definition of the sets ${\cal P}^\infty$ and ${\cal P}_0^{\infty,nfb}$ of Theorem~\ref{lem_cov} and Theorem~\ref{thm_nfb},   we have 
\bea
 {\cal P}^\infty = {\cal P}_0^{\infty,nfb}\bigcup {\cal P}^{\infty,fb},  \hst  {\cal P}^{\infty,fb} \tri \Big\{(\Lambda^\infty, K_Z^\infty)\in{\cal P}^\infty: \Lambda^{\infty} \neq 0\Big\}.
 \eea
 Thus, if $ (\Lambda^\infty, K_Z^\infty)\in {\cal P}^{\infty,fb}$ then the channel input process applies feedback, and if $K_Z^\infty\in {\cal P}_0^{\infty,nfb}$ then the channel input process does not apply feedback.\\

 \begin{theorem} Feedback capacity-solution of  optimization problem of Theorem~\ref{lem_cov} \\ 
 \label{thm_sol}
(1)  The   non-zero feedback capacity $C^\infty(\kappa)$ defined by (\ref{ll_3}), for a  stable and unstable AR$(c)$ noise, i.e, $c \in (-\infty,\infty)$,  with $c\neq 0, c\neq 1$, occurs in the set ${\cal P}^{\infty,fb}$, and is  given, as follows.
\begin{align}
&C^\infty(\kappa)= \frac{1}{2}  \log\Big( \frac{\big(\Lambda^{\infty,*}+c\big)^2 K^{\infty,*}  + K_{Z}^{\infty,*} +K_{W}}{K_{W}}\Big), \hst \forall \kappa \in {\cal K}^{\infty}(c, K_{W})  \label{sol_1}    \\
&\hst \hst = \frac{1}{2}\log\Big( \frac{c^2\Big( \big(c^2-1\big)\kappa +K_W\Big)}{\Big(c^2-1\Big)K_W}\Big),\\
&\Lambda^{\infty,*}= \frac{\kappa\Big(1-c^2\Big)+K_W + c^2 K^{\infty,*}}{c\Big(c^2-2\Big)K^{\infty,*}}\in (-\infty,\infty),  \label{sol_2}\\
&K_Z^{\infty,*}+\Big(\Lambda^{\infty,*}\Big)^2K^{\infty,*}=\kappa, \label{sol_3}\\
&{\cal K}^{\infty}(c, K_{W})\tri \Big\{\kappa \in [0,\infty): \hso K^{\infty,*}> 0, \hso K_Z^{\infty,*}>0 \Big\}, \hst c\in (-\infty,\infty), \hso   c \neq 0, \; c\neq 1.\label{sol_5}
\end{align}
where $K^{\infty,*}$ is the unique positive and stabilizing solution, i.e., $|F(K^{\infty,*},\Lambda^{\infty,*},K_Z^{\infty.*})|<1$,  of the quadratic equation
\begin{align}
&c^4 \Big(c^2-1\Big)\Big(K^{\infty,*}\Big)^2+ c^4 \Big(\Big(1-c^2\Big)\kappa +K_W\Big)K^{\infty,*}+ \Big(\Big(1-c^2\Big)\kappa +K_W\Big)^2 + 4\Big(c^2-1\Big)K_W \kappa-c^4\kappa K_W=0. \label{sol_4}
\end{align}
Further, for any $\kappa \in {\cal K}^{\infty}(c, K_{W})$, then 
\begin{align}
&K^{\infty,*} = \frac{\kappa\big(c^2-1\big)^2-K_W}{c^2\big(c^2-1\big)} \in (0,\infty), \\
&\Lambda^{\infty,*}= \frac{c K_W}{\kappa\big(c^2-1\big)^2-K_W}\in (-\infty,\infty),\\
&K_Z^{\infty,*}= \frac{\kappa\big(c^2-1\big)\Big(\kappa \big(c^2-1\big)^2-K_W\Big)-K_W^2}{\big(c^2-1\big)\Big(\kappa \big(c^2-1\big)^2-K_W\Big)} \in (0,\infty).
\end{align}
(2) The  non-zero feedback capacity $C^\infty(\kappa)$, $\kappa\in {\cal K}^\infty(c,K_W)$ of part (1), is restricted  to the  region:\\
(a) ${\cal K}^\infty(c,K_W) \tri \Big\{ \kappa\in [0, \infty): 1 < c^2 < \infty, \kappa > \frac{K_W+K_W \sqrt{4c^2-3}}{2\big(c^2-1\big)^2}  \Big\}$.

(3) A non-zero feedback capacity   $C^\infty(\kappa)$, i.e., with $\Lambda^\infty \neq 0$,  does not exist   for the two regions:\\
(a) $\kappa \in {\cal K}^{\infty,nfb}(c) \tri \Big\{\kappa\in [0,\infty): 0 \leq c^2 \leq 1\Big\}$; 

(b) $\kappa \in {\cal K}^{\infty,nfb}(c,K_W) \tri \Big\{ \kappa\in [0,\infty): 1 < c^2 <\infty,\hso  \kappa \leq \frac{K_W+K_W \sqrt{4c^2-3}}{2\big(c^2-1\big)^2}  \Big\}$.
\end{theorem}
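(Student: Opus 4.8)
\emph{Proof plan.} The plan is to reduce the statement to the finite-dimensional program of Theorem~\ref{lem_cov}, $C^\infty(\kappa)=\max_{(\Lambda^\infty,K_Z^\infty)\in{\cal P}^\infty(\kappa)}\frac{1}{2}\log\big((\Lambda^\infty+c)^2K^\infty+K_Z^\infty+K_W\big)/K_W$, and to solve it through the Lagrangian/KKT conditions of Lemma~\ref{lemma_nc}. A first useful step is to simplify the objective using the generalized ARE \eqref{nc_9}: writing $C\tri\Lambda^\infty+c$ and $R^\infty\tri K_Z^\infty+K_W$, the ARE reads $(c^2-1)K^\infty+K_W=\frac{(K_W+cCK^\infty)^2}{R^\infty+C^2K^\infty}$, so $(c^2-1)K^\infty+K_W>0$ on the feasible set and the objective numerator equals $C^2K^\infty+R^\infty=\frac{(K_W+cCK^\infty)^2}{(c^2-1)K^\infty+K_W}$; hence the objective is $\frac{1}{2}\log\frac{(K_W+c(\Lambda^\infty+c)K^\infty)^2}{K_W((c^2-1)K^\infty+K_W)}$, a form from which the clean expression \eqref{sol_1} will drop out by substitution. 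By the last assertion of Lemma~\ref{lemma_nc}, and since on the feasible set $K^\infty=0\Leftrightarrow K_Z^\infty=0$ (by \eqref{nc_9} and Lemma~\ref{lem_pr_are}), a positive value of $C^\infty(\kappa)$ can only be attained at an interior point with $K^{\infty,*}>0$, $K_Z^{\infty,*}>0$, so complementary slackness \eqref{nc_6} gives $\lambda_3^*=\lambda_4^*=0$; moreover the subcase $\Lambda^\infty=0$, i.e.\ ${\cal P}_0^{\infty,nfb}(\kappa)$, just reproduces the nofeedback rate $C_{LB}^{\infty,nfb}(\kappa)$ of Theorem~\ref{thm_nfb}, so it suffices to optimize over $\Lambda^\infty\neq0$.

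With $\lambda_3^*=\lambda_4^*=0$, the stationarity equations \eqref{nc_3}--\eqref{nc_5} together with the ARE \eqref{nc_9} and the power constraint \eqref{nc_8} form a closed system in $(\Lambda^{\infty,*},K_Z^{\infty,*},K^{\infty,*},\lambda_1^*,\lambda_2^*)$; solving it, I would eliminate $\lambda_1^*,\lambda_2^*$ from \eqref{nc_3} and \eqref{nc_5} to obtain, after repeated use of the ARE, the relation \eqref{sol_2} expressing $\Lambda^{\infty,*}$ rationally in $K^{\infty,*}$, then substitute this and the (active) constraint $K_Z^{\infty,*}=\kappa-(\Lambda^{\infty,*})^2K^{\infty,*}$ back into \eqref{nc_9} to get the quadratic \eqref{sol_4}, and finally keep the root that is positive and stabilizing, yielding $K^{\infty,*}=\frac{\kappa(c^2-1)^2-K_W}{c^2(c^2-1)}$ together with the stated $\Lambda^{\infty,*}$ and $K_Z^{\infty,*}$. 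It then remains to confirm admissibility, namely that $K^{\infty,*}$ is the \emph{unique} nonnegative stabilizing solution of \eqref{nc_9} (Theorem~\ref{thm_ric}.(1)--(2)), that $|F(K^{\infty,*},\Lambda^{\infty,*},K_Z^{\infty,*})|<1$ via \eqref{ll_4_a}--\eqref{ll_4_b}, and that detectability of $\{A,C(\Lambda^{\infty,*})\}$ and stabilizability of $\{A^*(K_Z^{\infty,*}),B^{*,\frac{1}{2}}(K_Z^{\infty,*})\}$ hold (Definition~\ref{def:det-stab}, Lemma~\ref{lem_pr_are}); substituting the optimizer into the simplified objective then collapses to \eqref{sol_1}.

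Parts (2) and (3) follow from a sign analysis of the closed forms. I would show $K^{\infty,*}>0$ and $K_Z^{\infty,*}>0$ hold simultaneously if and only if $c^2>1$ and $\kappa>K_W/(c^2-1)^2$, i.e.\ $\kappa\in{\cal K}^\infty(c,K_W)$: for $c^2>1$ the sign of $K^{\infty,*}$ is that of $\kappa(c^2-1)^2-K_W$ and $K_Z^{\infty,*}>0$ on the same range; for $c^2<1$ the candidate always has $K_Z^{\infty,*}<0$, because the numerator of $K_Z^{\infty,*}$ equals $-(\kappa^2(1-c^2)^3-\kappa(1-c^2)K_W+K_W^2)$, which is negative since the quadratic in $\kappa$ it contains has negative discriminant; and for $c^2>1,\ \kappa\le K_W/(c^2-1)^2$ one has $K^{\infty,*}\le0$; the degenerate cases $c^2=1$ are subsumed in the $c^2\le1$ analysis by a direct limiting argument. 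Thus outside Regime~1 there is no feasible interior KKT point with positive value, and a boundary analysis — using $K^\infty=0\Leftrightarrow K_Z^\infty=0$ from the ARE and the objective in the form derived above — shows no feedback strategy with $\Lambda^\infty\neq0$ can exceed $C_{LB}^{\infty,nfb}(\kappa)$; together with Theorem~\ref{thm_nfb} this yields $C^\infty(\kappa)=C^{\infty,nfb}(\kappa)$ for $\kappa\in{\cal K}^{\infty,nfb}(c)$ and for $\kappa\in{\cal K}^{\infty,nfb}(c,K_W)$, proving (3a)--(3b), while (2) is the complement.

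The main obstacle is twofold. First, the constrained optimization is a heavy computation: eliminating the Lagrange multipliers and collapsing the ARE-coupled stationarity system to the single quadratic \eqref{sol_4}, and then to the clean closed forms and \eqref{sol_1}, requires substituting the ARE back in repeatedly to keep the expressions manageable. Second, the non-existence claim in Regimes~2 and~3 is the delicate part: it is not enough that the interior KKT candidate is infeasible there — one must also rule out that some boundary or limiting feedback strategy with $\Lambda^\infty\neq0$ beats the nofeedback rate, which needs the monotonicity and stabilizability bookkeeping above together with the ARE-rewritten objective, and one must verify that the solution selected in the second step is genuinely the unique stabilizing root of the ARE and not the spurious $K^\infty=0$ branch.
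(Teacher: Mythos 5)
Your plan follows the paper's own route almost step for step: reduce to the time-invariant program of Theorem~\ref{lem_cov}, use the KKT system of Lemma~\ref{lemma_nc} with the observation that $K_Z^{\infty,*}=0\Leftrightarrow K^{\infty,*}=0\Rightarrow C^\infty(\kappa)=0$ to force $\lambda_3^*=\lambda_4^*=0$ and an interior candidate, eliminate $\lambda_1^*,\lambda_2^*$ to get \eqref{sol_2}, substitute the active power constraint into the ARE to obtain the quadratic \eqref{sol_4}, select the positive stabilizing root, and then do a sign analysis of the closed forms to delimit the three regimes. That is exactly Appendix~\ref{pr_thm_sol}.

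There is, however, one concrete slip in your argument for part (3) in the stable case. You claim that for $c^2<1$ the numerator of $K_Z^{\infty,*}$, namely $-\big(\kappa^2(1-c^2)^3-\kappa(1-c^2)K_W+K_W^2\big)$, is always negative because the quadratic in $\kappa$ has negative discriminant. Its discriminant is $(1-c^2)^2K_W^2\big(4c^2-3\big)$, which is negative only for $c^2<3/4$. For $3/4\le c^2<1$ the quadratic has two real positive roots, and $K_Z^{\infty,*}>0$ on the open interval between them, so your blanket argument does not close the case. The paper handles precisely this subcase by showing that the lower root $\tfrac{K_W(1-\sqrt{4c^2-3})}{2(1-c^2)^2}$ exceeds $\tfrac{K_W}{1-c^2}$ (equivalently, that the interval on which $K_Z^{\infty,*}>0$ is disjoint from the set on which $K^{\infty,*}>0$ and the relevant case constraints hold), so that $K^{\infty,*}>0$ and $K_Z^{\infty,*}>0$ can never hold simultaneously when $c^2<1$. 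You need to add this interval-separation step; once it is in place, the rest of your regime classification (and the resolution of your stated worry about boundary strategies, which the paper disposes of via the $K_Z^\infty=0\Leftrightarrow K^\infty=0$ equivalence) goes through as in the paper.
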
 
 \begin{proof} See Appendix~\ref{pr_thm_sol}. 
  \end{proof}

From the previous theorem it then follows the next theorem, that states  feedback does not increase capacity $C^\infty(\kappa)$ defined by   (\ref{ll_3}),   for the two regions   ${\cal K}^{\infty,nfb}(c)$ and  ${\cal K}^{\infty,nfb}(c,K_W)$.\\

\begin{theorem} Feedback does not increase  capacity for certain regions\\
\label{them_nfb_a}
Feedback does not increase   capacity $C^\infty(\kappa)$ defined by (\ref{ll_3}),  for the two regions:\\
(a) $\kappa \in {\cal K}^{\infty,nfb}(c) \tri \Big\{\kappa\in [0,\infty): 0 \leq c^2 \leq 1\Big\}$; 

(b) $\kappa \in {\cal K}^{\infty,nfb}(c,K_W) \tri \Big\{ \kappa\in [0,\infty): 1 < c^2 <\infty,\hso  \kappa \leq \frac{K_W+K_W \sqrt{4c^2-3}}{2\big(c^2-1\big)^2}  \Big\}$.
\end{theorem}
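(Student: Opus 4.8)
The plan is to obtain this theorem as a short corollary of Theorem~\ref{thm_sol}, using the decomposition ${\cal P}^\infty = {\cal P}_0^{\infty,nfb}\bigcup {\cal P}^{\infty,fb}$ together with a sandwich argument that ties $C^\infty(\kappa,v_0)$ to $C^{\infty,nfb}(\kappa,v_0)$. First I would recall that, by Theorem~\ref{lem_cov}, $C^\infty(\kappa,v_0)=C^\infty(\kappa)$ equals the maximum of the objective in (\ref{ll_3}) over ${\cal P}^\infty(\kappa)$, and that this admissible set splits as ${\cal P}_0^{\infty,nfb}(\kappa)$ (the strategies with $\Lambda^\infty=0$) together with $\{(\Lambda^\infty,K_Z^\infty)\in {\cal P}^\infty(\kappa):\Lambda^\infty\neq 0\}$. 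Theorem~\ref{thm_sol}(3) states that for $\kappa\in{\cal K}^{\infty,nfb}(c)$ (region~(a)) and for $\kappa\in{\cal K}^{\infty,nfb}(c,K_W)$ (region~(b)) no admissible strategy with $\Lambda^\infty\neq 0$ attains a value of (\ref{ll_3}) exceeding what is already attainable with $\Lambda^\infty=0$; hence the maximum in (\ref{ll_3}) is attained inside ${\cal P}_0^{\infty,nfb}(\kappa)$, i.e. $C^\infty(\kappa)=C^\infty(\kappa)\big|_{\Lambda^\infty=0}=C_{LB}^{\infty,nfb}(\kappa)$ by Theorem~\ref{thm_nfb}.

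Next I would close the argument with the elementary inclusion of feasible sets. Any time-invariant no-feedback distribution ${\bf P}_{X_t|X^{t-1},V_0}$ is a particular time-invariant feedback distribution ${\bf P}_{X_t|X^{t-1},Y^{t-1},V_0}$ — one that does not depend on $Y^{t-1}$ — it respects (\ref{g_cp_3}), it satisfies condition {\bf (C1)} exactly when its no-feedback version does, and the per-unit-time objective in (\ref{inter}) and (\ref{inter_nfb}) is the \emph{same} functional of the induced joint law. Therefore the feasible set of $C^{\infty,nfb}(\kappa,v_0)$ is contained in that of $C^\infty(\kappa,v_0)$ with an identical objective, so $C^{\infty,nfb}(\kappa,v_0)\le C^\infty(\kappa,v_0)$. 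Combining this with the previous paragraph and with the trivial achievable bound $C_{LB}^{\infty,nfb}(\kappa)\le C^{\infty,nfb}(\kappa,v_0)$ from Theorem~\ref{thm_nfb}, I obtain, for $\kappa$ in region (a) or (b),
\[
C^{\infty,nfb}(\kappa,v_0)\ \le\ C^\infty(\kappa,v_0)=C^\infty(\kappa)=C_{LB}^{\infty,nfb}(\kappa)\ \le\ C^{\infty,nfb}(\kappa,v_0),
\]
which forces equality throughout. Thus feedback does not increase capacity on regions (a) and (b), and by the ergodicity part of Theorem~\ref{thm_nfb} the common value $C^\infty(\kappa,v_0)=C^{\infty,nfb}(\kappa,v_0)=C^{\infty,nfb}(\kappa)$ is independent of $v_0$.

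The substantive content is already carried by Theorem~\ref{thm_sol}(3); the corollary itself is just the sandwich above. I expect the only delicate points in writing it out to be, first, stating carefully that no-feedback strategies form a subset of feedback strategies for an identical objective and an identical form of condition {\bf (C1)} (so that both "$\le$" at the ends of the chain are legitimate and the equality $C^\infty=C^{\infty,nfb}$ is \emph{forced} rather than merely a lower bound), and second, that one genuinely uses the decomposition ${\cal P}^\infty={\cal P}_0^{\infty,nfb}\bigcup{\cal P}^{\infty,fb}$ so that ruling out $\Lambda^\infty\neq 0$ leaves the optimizer in the no-feedback subclass. The real obstacle — deferred to Theorem~\ref{thm_sol} and its appendix — is showing that on region~(a), where $c^2\le 1$, the generalized ARE (\ref{nc_9}) together with the stabilizability requirement on $\{A^*,B^{*,1/2}\}$ caps the stabilizing $K^\infty$ too low for the extra term $(\Lambda^\infty+c)^2K^\infty$ to outweigh the power it costs, and that on region~(b), where $\kappa\le K_W/(c^2-1)^2$, the candidate feedback optimizer of Theorem~\ref{thm_sol}(1) has $K^{\infty,*}\le 0$ or $K_Z^{\infty,*}\le 0$ and hence leaves ${\cal P}^\infty(\kappa)$ — in both cases the supremum defining $C^\infty(\kappa)$ retreats to $\Lambda^\infty=0$.
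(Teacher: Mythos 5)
Your proof is correct and rests on exactly the same two ingredients as the paper's own argument: Theorem~\ref{thm_sol}.(3) to rule out any admissible strategy with $\Lambda^\infty\neq 0$ on regions (a) and (b), and Theorem~\ref{thm_nfb} to exhibit the nonzero achievable rate at $\Lambda^\infty=0$; the paper's proof consists of precisely these two observations. The only difference is your closing sandwich $C^{\infty,nfb}(\kappa,v_0)\le C^\infty(\kappa,v_0)=C_{LB}^{\infty,nfb}(\kappa)\le C^{\infty,nfb}(\kappa,v_0)$, which the paper does not carry out: it forces the stronger conclusion that the IID input is optimal among \emph{all} Gaussian nofeedback inputs in these regimes, a claim the paper deliberately avoids and which sits uneasily with Corollary~\ref{thm_nfb_g} and Figure~\ref{fig:compnofeed}, where a Markov nofeedback input is expected to strictly beat the IID one. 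Since the theorem only concerns $C^\infty(\kappa)$ as defined by (\ref{ll_3}), the sandwich is unnecessary; the first part of your argument already suffices and matches the paper.
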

\begin{proof}
  By Theorem~\ref{thm_sol}.(3) we deduce that, if $\Lambda^\infty \neq 0$, i.e., if feedback is used,  then  there does not exists  a non-zero value of   $C^\infty(\kappa)$, for $\kappa \in {\cal K}^{\infty,nfb}(c)\bigcup {\cal K}^{\infty,nfb}(c,K_W)$.  On the other hand, by Theorem~\ref{thm_nfb}, an achievable rate without feedback exists for  all $\kappa \in {\cal K}^{\infty,nfb}(c)\bigcup {\cal K}^{\infty,nfb}(c,K_W)$, by letting $\Lambda^\infty =0$, and $C_{LB}^{\infty,nfb}(\kappa)$ is a lower bound on capacity without feedback. 
\end{proof}

%

\ \

\begin{remark} Implications of Theorem~\ref{thm_sol} on operational non-feedback capacity for  unstable  noise. \\
 Theorem~\ref{thm_sol}.(3), when combined with Theorem~\ref{thm_nfb}, states  that, one does not need to presuppose, as usually done in information theory literature \cite{cover-thomas2006,blahut1987,gallager1968}, that  the joint  channel input and output process, and noise,  are  stationary to show existence of  achievable non-feedback rates, as done, in the  water-filling solution. We shall give a tighter lower bound on the   non-feedback capacity   in Corollary~\ref{thm_nfb_g}. 
\end{remark}

Another lower bound on feedback capacity is presented in Section~\ref{sect:nc} (below).

\subsubsection{Achievable Feedback Rates with Noise Cancellation for Stable AR$(c)$ Noise} 
\label{sect:nc} 
Let us recall the representation of channel input described by (\ref{Q_1_3_s1_new})-(\ref{Q_1_3_s1_e}). If we let $\Lambda^{\infty}=-c$, then (\ref{Q_1_4_s1_new}) and (\ref{Q_1_4_s1_a_new}), reduce to the equations $Y_t^o= c \widehat{V}_{t-1}^o +W_t + Z_t^o, t=2, \ldots, n, 
Y_1^o=Z_1^o+c V_0 +W_1$.  This means, for  the choice  of strategy $\Lambda^\infty=-c$, the channel output process at time
 $t$, $Y_t^{o}$   is driven by past observations,  and  $(Z_t^{o}, W_t)$, which mean  $V_t$ and $Y_t^o$ are correlated. Hence, if the resulting pair $\{A, C\}\Big|_{\Lambda^\infty=-c}\equiv \{A, C(\Lambda^\infty)\}\Big|_{\Lambda^\infty=-c}$ is detectable,  and  the pair $\{A^*, B^{*,\frac{1}{2}}\}\Big|_{\Lambda^\infty=-c}\equiv \{A^*(K_Z^\infty), B^{*,\frac{1}{2}}(K_Z^\infty)\}\Big|_{\Lambda^\infty=-c}$ is stabilizable, then we can have a corresponding rate which is achievable.

In the next theorem, we give an achievable rate with noise cancellation, that corresponds to   $C^\infty(\kappa)\Big|_{\Lambda^\infty=-c}$. \\

  \begin{theorem} Achievable rates with feedback and noise cancellation for  stable AR$(c)$ noise \\ 
 \label{thm_nc}
For $\Lambda^\infty=-c$, define the set 
\begin{align}
{\cal P}^{\infty,nc} \tri & \Big\{K_Z^\infty\in [0,\infty): \nonumber \\
& \mbox{(i) the pair  $\{A, C\}\Big|_{\Lambda^\infty=-c}\equiv \{A, C(\Lambda^\infty)\}\Big|_{\Lambda^\infty=-c}$ is detectable,} \label{nc_1}  \\
& \mbox{(ii) the pair $\{A^*, B^{*,\frac{1}{2}}\}\Big|_{\Lambda^\infty=-c}\equiv \{A^*(K_Z^\infty), B^{*,\frac{1}{2}}(K_Z^\infty)\}\Big|_{\Lambda^\infty=-c}$ is stabilizable}\Big\} \label{nc_2} 
\end{align}
and consider  the  channel input and output processes, given by (\ref{Q_1_3_s1_new})-(\ref{Q_1_10_s1_new}), with $\Lambda^\infty=-c$:
\begin{align}
&X_t^o = -c \Big({V}_{t-1} - \widehat{V}_{t-1}^o\Big) + Z_t^o,   \hso X_1^o=Z_1^o, \hso   t=2, \ldots,n,  \label{nc_3_a}  \\
&V_t=c V_{t-1} +W_t, \hso V_0=v_0,\label{nc_5_a}  \\
&Y_t^o= X_t^o + V_t=c \widehat{V}_{t-1}^o+W_t + Z_t^o,  \label{nc_5_a}
\end{align}
(1) A lower bound on the feedback capacity is  $C_{LB}^{\infty,nc}(\kappa)$ given by 
\begin{align}
&C^\infty(\kappa)\geq  C^\infty(\kappa)\Big|_{\Lambda^\infty=-c}= C_{LB}^{\infty,nc}(\kappa)\tri  \max_{K_{Z}^\infty \in {\cal P}^{\infty,nc}(\kappa)} \frac{1}{2} \log\Big( \frac{ K_{Z}^\infty +K_{W}}{K_{W}}\Big) \label{nf_4}
\end{align}
where  
\begin{align}
{\cal P}^{\infty,nc}(\kappa)\tri& \Big\{ K_Z^\infty\in {\cal P}^{\infty,nc}: K_Z^\infty\geq 0, \hso c^2 K^\infty+ K_{Z}^\infty \leq \kappa, \nonumber \\
& K^\infty= c^2 K^\infty + K_W -\frac{ K_W^2}{ K_Z^\infty+ K_W},\nonumber \\
&\mbox{$K^\infty\geq 0$ is unique and stabilizing, i.e.,  $|F^{nc}(K^\infty,K_Z^\infty)|<1$} \Big\} \label{nf_5_new}, \\
 F^{nc}(K^\infty, K_{Z}^\infty) &\tri  c  \label{nf_6_new}
\end{align}
provided there exists $\kappa\in [0,\infty)$ such that the set ${\cal P}^{\infty,nc}(\kappa)$ is nonempty.\\
Moreover,  $C_{LB}^{\infty,nc}(\kappa)$ is achievable and    is independent of the initial state $V_0=v_0$.\\ 
(2) The lower bound of (1) is given by  
\begin{align}
& C_{LB}^{\infty,nc}(\kappa)= \frac{1}{2} \log\Big( \frac{K_Z^{\infty,*}  +K_{W}}{K_{W}}\Big), \hso c\in (-1,1), \hst \forall \kappa \in [0,\infty)
  \label{nf_8}
\end{align}
where $K_Z^{\infty,*}$ and the unique and stabilizing  $K^{\infty,*}\geq 0$, are given by  
\begin{align}
&K^{\infty,*}=  \frac{K_Z^{\infty,*} K_W}{\Big(K_Z^{\infty,*}+K_W\Big)\Big(1-c^2\Big)}, \hso  c\in (-1,1), \hso \kappa \in [0,\infty), 
 \label{nf_9_new_a}\\
&K_Z^{\infty,*}=  \frac{-  \Big(K_W-\kappa \big(1-c^2\big)\Big) + \sqrt{ \Big(K_W-\kappa \big(1-c^2\big)\Big)^2      +4\kappa K_W \Big(1-c^2\Big)^2}}{2\Big(1-c^2\Big)}, \hso  c\in (-1,1), \hso \kappa \in [0,\infty).
 \label{nf_9_new_new}
\end{align}
\end{theorem} 
 \begin{proof} (1) By setting  $\Lambda^\infty=-c$, in  (\ref{Q_1_3_s1_new})-(\ref{Q_1_10_s1_new}), then the representation of channel input $X_t^{o}$ cancels the noise $V_{t-1}$ in the output $Y_t^o$, and (\ref{nc_3_a})-(\ref{nc_5_a}) are obtained. Clearly, the set ${\cal P}^{\infty,nc}$ is non-empty, since the pair  $\{A, C\}\Big|_{\Lambda^\infty=-c}= \{c,0\}$ is detectable, for $c\in (-1,1), c \neq 0$. Hence, the statements follow from    Theorem~\ref{lem_cov}, as a special case.\\
 (2) This follows from (1), by simple computations. In particular,  from the equation of $K^{\infty}$ that appears in constraint (\ref{nf_5_new}), we obtain
 \bea
 K^{\infty}= \frac{K_Z^\infty K_W}{\Big(K_Z^{\infty}+K_W\Big)\Big(1-c^2\Big)} \geq 0, \hso \mbox{if and only if} \hso |c|<1.
 \eea 
 Since the optimal $K_Z^\infty=K_Z^{\infty,*}$ is such that the average constraint holds with equality, then 
 $c^2 K^{\infty,*}+ K_{Z}^{\infty,*}= \kappa$ Substituting $K^{\infty,*}$, into the average constraint we arrive at the quadratic equation
\bea
\Big(1-c^2\Big)\Big( K_Z^{\infty,*} \Big)^2+\Big(K_W- \kappa\big(1-c^2\big)\Big)K_Z^{\infty,*} -K_W \kappa \Big(1-c^2\Big)=0, \hso |c|<1.
\eea
The rest follows from the above equation and the fact that $K_Z^{\infty,*}\geq 0$ belongs to the set ${\cal P}^{\infty,nc}$.
 \end{proof}
 
In the lemma  we show that the achievable rate without feedback of Theorem~\ref{thm_nfb},  is higher than the   achievable rate with feedback and  noise cancellation of Theorem~\ref{thm_nc}, when the noise is stable AR$(c)$. \\

\begin{lemma} On the comparison of lower bounds   of Theorem~\ref{thm_nfb} and Theorem~\ref{thm_nc}\\
\label{lemma_nc-nf}
For $c\in (-1,1), c \neq 0$ then the following inequality holds.
\bea
 C_{LB}^{\infty,nc}(\kappa) <  C_{LB}^{\infty,nfb}(\kappa), \hso \forall \kappa \in (0,\infty).
\eea
\end{lemma}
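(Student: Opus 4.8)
# Proof Proposal

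The plan is to compare the two closed-form expressions directly. From Theorem~\ref{thm_nc}.(2), the achievable rate with noise cancellation is
\begin{align}
C_{LB}^{\infty,nc}(\kappa) = \frac{1}{2}\log\Big(\frac{K_Z^{\infty,*}+K_W}{K_W}\Big),
\label{nc_expr}
\end{align}
where $K_Z^{\infty,*}$ is the positive root of the quadratic $(1-c^2)(K_Z^{\infty,*})^2 + (K_W-\kappa(1-c^2))K_Z^{\infty,*} - K_W\kappa(1-c^2) = 0$. From Theorem~\ref{thm_nfb}.(2), the achievable rate without feedback (IID input) is
\begin{align}
C_{LB}^{\infty,nfb}(\kappa) = \frac{1}{2}\log\Big(\frac{c^2 K^{\infty,*}+\kappa+K_W}{K_W}\Big),
\label{nfb_expr}
\end{align}
with $K^{\infty,*}$ the positive stabilizing root of $c^2(K^{\infty,*})^2 + K^{\infty,*}(\kappa(1-c^2)+K_W) - K_W\kappa = 0$. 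Since $\log$ is strictly increasing and both arguments share the denominator $K_W$, the claimed inequality $C_{LB}^{\infty,nc}(\kappa) < C_{LB}^{\infty,nfb}(\kappa)$ is equivalent to showing
\begin{align}
K_Z^{\infty,*} < c^2 K^{\infty,*} + \kappa
\label{key_ineq}
\end{align}
for all $\kappa\in(0,\infty)$ and $c\in(-1,1)$, $c\neq 0$.

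First I would establish \eqref{key_ineq} by a monotonicity/comparison argument rather than solving both quadratics explicitly (though explicit solution is available as a fallback). Observe that $c^2 K^{\infty,*} + \kappa$ is precisely the steady-state power term, while $K_Z^{\infty,*}$ satisfies the averaging constraint $c^2 K^{\infty,*} + K_Z^{\infty,*} = \kappa$ in the noise-cancellation scheme---wait, that would give equality in the wrong direction, so one must be careful: the two schemes have \emph{different} steady-state error variances $K^{\infty,*}$. The cleanest route is to substitute the averaging-constraint identities into the target expressions. For the noise-cancellation scheme, the constraint is $c^2 K^{\infty,nc,*} + K_Z^{\infty,*} = \kappa$, so the RHS of \eqref{key_ineq} should be read with $K^{\infty,*}=K^{\infty,nfb,*}$, the \emph{nofeedback} error variance. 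Then \eqref{key_ineq} becomes $K_Z^{\infty,*} < c^2 K^{\infty,nfb,*} + \kappa$, and using $K_Z^{\infty,*} = \kappa - c^2 K^{\infty,nc,*} \le \kappa < c^2 K^{\infty,nfb,*} + \kappa$ since $K^{\infty,nfb,*}>0$ and $c\neq 0$. So the inequality reduces to the strict positivity $c^2 K^{\infty,nfb,*}>0$, which holds because $K^{\infty,nfb,*}>0$ for $\kappa>0$ (from \eqref{nf_9}, the positive root is strictly positive when $\kappa>0$) and $c\neq 0$.

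The main obstacle I anticipate is not the final inequality---which collapses to the positivity just noted---but rather correctly tracking that the $K^{\infty,*}$ appearing in \eqref{nfb_expr} is the nofeedback steady-state error, while $K_Z^{\infty,*}$ in \eqref{nc_expr} obeys the noise-cancellation power constraint, so that the two quantities are \emph{not} tied by a single common $K^\infty$. Once that bookkeeping is made explicit, the proof is short: rewrite $C_{LB}^{\infty,nfb}(\kappa)$ using its defining power constraint $c^2 K^{\infty,nfb,*}+\kappa+K_W = K_W + \kappa + c^2 K^{\infty,nfb,*}$, compare with $C_{LB}^{\infty,nc}(\kappa)$'s argument $K_Z^{\infty,*}+K_W = \kappa - c^2 K^{\infty,nc,*} + K_W \le \kappa + K_W$, and conclude $\kappa + K_W \le \kappa + K_W + c^2 K^{\infty,nfb,*}$ with strict inequality whenever $c\neq 0$ and $\kappa>0$. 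Taking $\frac{1}{2}\log$ of both sides preserves the strict inequality, establishing the claim. If the referee prefers, I would also supply the explicit-root computation as a sanity check, plugging \eqref{nf_9} and \eqref{nf_9_new} into both sides and simplifying.
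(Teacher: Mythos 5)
Your proposal is correct and follows essentially the same route as the paper's own proof in Appendix D: both reduce the claim to the chain $K_Z^{\infty,nc}+K_W = \kappa - c^2 K^{\infty,nc}+K_W \le \kappa+K_W < c^2K^{\infty,nfb}+\kappa+K_W$, using the power constraint of the noise-cancellation scheme on one side and the strict positivity of the nofeedback error variance $K^{\infty,nfb}$ (for $\kappa>0$, $c\neq 0$) on the other. Your explicit care in distinguishing the two distinct steady-state error variances $K^{\infty,nc}$ and $K^{\infty,nfb}$ matches the paper's bookkeeping exactly.
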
 
 \begin{proof} See Appendix~\ref{pr_lemma_nc-nf}. \end{proof}

\subsection{Lower Bound on Characterization of Capacity without Feedback for Time-Invariant Channel Input Distributions} 
By recalling the lower bound of Corollary~\ref{thm_FTWFI},  and by restricting the non-feedback strategies to the time-invariant channel input strategies, $(\overline{\Lambda}_t,K_{Z_t})=(\overline{\Lambda}^\infty, K_Z^\infty), t=1, \ldots, n$, then we have a representation of $(X^{o,n}, Y^{o,n})$, as in Section~\ref{sect:char_fb}. The next corollary gives an achievable  lower bound on the characterization of non-feedback  capacity, as a time-invariant optimization problem, which is  
  analogous to Theorem~\ref{lem_cov}. \\

\begin{corollary} Achievable lower bound on non-feedback capacity  \\ 
 \label{thm_nfb_g}
Consider the lower bound on the  characterization of the $n-$FTFI capacity of Corollary~\ref{thm_FTWFI}.(b),  
and restrict the admissible strategies to  the time-invariant  strategies, $(\overline{\Lambda}_t, K_{Z_t})=(\overline{\Lambda}^\infty, K_Z), t=1, \ldots,n$,   which generate $(X^{o,n}, Y^{o,n})$. \\
Define the lower bound on non-feedback  capacity  by 
\begin{align}
&C^{\infty,nfb}(\kappa,v_0)\geq C_{LB}^{\infty,nfb}(\kappa,v_0) 
\tri
 \sup_{\big(\overline{\Lambda}^\infty, K_{Z}^\infty\big): \hso \lim_{n \longrightarrow \infty} \frac{1}{n} \sum_{t=1}^n  \big(\overline{\Lambda}^\infty\big)^2 K_{X_{t-1}}^o+ K_{Z}^\infty   \leq \kappa} \Big\{\nonumber \\
 & \hst \hst \lim_{n \longrightarrow \infty}\frac{1}{2n}   \sum_{t=1}^n  \log\Big( \frac{\big(\overline{\Lambda}^\infty-c\big)^2 K_{t-1}^o  + K_{Z}^\infty +K_{W}}{K_{W}}\Big)\Big\} \label{int_nfb_1}\\
&\mbox{subject to: $K_Z^\infty\geq 0$, $K_t^o$ and $K_{X_t}^o, t=1, \ldots,n$ satisfy the generalized DRE and Lyapunov equation}\nonumber \\ 
&K_{t}^o=  \big(\overline{\Lambda}^\infty\big)^2 K_{t-1}^o  + K_{Z}^\infty -\frac{ \Big( K_{Z}^\infty + \overline{\Lambda}^\infty K_{t-1}^o\Big(\overline{\Lambda}^\infty - c \Big)\Big)^2}{ \Big(K_{Z}^\infty+ K_{W} + \Big(\overline{\Lambda}^\infty - c \Big)^2 K_{t-1}^o\Big)}, \hst  K_t^o \geq0,  \hso K_{0}^o=0, \hst t=1, \ldots, n, \label{reg_23_in_a} \\
&K_{X_t}^o =\big(\overline{\Lambda}^\infty\big)^2 K_{X_{t-1}}^o +K_Z^\infty, \hso K_{X_0}=0.\label{reg_23_in}
\end{align}
Define the set\footnote{The definition of detectability and stabilizability follow from the DRE (\ref{reg_23_in_a}), while $\overline{\Lambda}^\infty\in (-1,1)$ is necessary for $K_{X_n}^o$ to converge as $n\longrightarrow \infty.$} 
\begin{align}
{\cal P}^{\infty,nfb} \tri & \Big\{(\overline{\Lambda}^\infty, K_Z^\infty)\in (-1, 1)\times [0,\infty): \nonumber \\
& \mbox{(i) the pair  $\{A, C\}\equiv \{A(\overline{\Lambda}^\infty), C(\overline{\Lambda}^\infty)\}$ is detectable,} \nonumber \\
& \mbox{(ii) the pair $\{A^*, B^{*,\frac{1}{2}}\}\equiv \{A^*(\overline{\Lambda}^\infty,K_Z^\infty), B^{*,\frac{1}{2}}(K_Z^\infty)\}$ is stabilizable}\Big\}. \label{adm_set_2}
\end{align}
where 
\begin{align}
&A\tri \overline{\Lambda}^\infty, \hso C\tri \overline{\Lambda}^\infty-c, \hso A^*=\overline{\Lambda}^\infty -K_Z^\infty R^{-1} C, \hso B^{*,\frac{1}{2}}\tri K_Z^{\infty,\frac{1}{2}} B^\frac{1}{2}, \\
& R\tri K_Z^\infty +K_W, \hso B\tri 1-K_Z^\infty \big(K_Z^\infty+K_W\big)^{-1}. \label{adm_set_2_2}
\end{align}
Then the lower bound on non-feedback capacity is given by the time-invariant optimization problem: 
\begin{align}
& C_{LB}^\infty(\kappa,v_0)=C_{LB}^\infty(\kappa)\tri    \max_{\big(\overline{\Lambda}^\infty, K_{Z}^\infty\big) \in {\cal P}^{\infty,nfb}(\kappa)} \frac{1}{2} \log\Big( \frac{\big(\overline{\Lambda}^\infty-c\big)^2 K^\infty  + K_{Z}^\infty +K_{W}}{K_{W}}\Big) \label{ll_3_3}
\end{align}
 where  
\begin{align}
{\cal P}^{\infty,nfb}(\kappa)\tri& \Big\{(\overline{\Lambda}^\infty, K_Z^\infty)\in {\cal P}^{\infty,nfb}: K_Z^\infty\geq 0, \hso \frac{K_Z^\infty}{1-\big(\overline{\Lambda}^\infty\big)^2} \leq \kappa, \nonumber \\
& K^\infty= \big(\Lambda\big)^2 K^\infty + K_Z^\infty -\frac{ \Big( K_Z^\infty + \overline{\Lambda}^\infty  K^\infty\big(\overline{\Lambda}^\infty - c \big)\Big)^2}{ \Big(K_Z^\infty+ K_W + \big(\overline{\Lambda}^\infty - c \big)^2 K^\infty\Big)}\nonumber \\
&\mbox{$K^\infty\geq 0$ is unique and stabilizable, i.e.,  $|F^{nfb}(K^\infty,\overline{\Lambda}^\infty,K_Z^\infty)|<1$} \Big\} \label{ll_4_4},
\end{align}
 \begin{align}
 F^{nfb}(K^\infty,\overline{\Lambda}^\infty, K_{Z}^\infty) \tri & \overline{\Lambda}^\infty -M^{nfb}(K^\infty, \overline{\Lambda}^\infty, K_{Z}^\infty) \Big(\overline{\Lambda}^\infty -c\Big), \label{ll_4_a_5} \\
M^{nfb}(K^\infty, \overline{\Lambda}^\infty, K_{Z}^\infty)  \tri  & \Big( K_{Z}^\infty + \overline{\Lambda}^\infty  K^\infty\Big(\overline{\Lambda}^\infty - c \Big)\Big)\Big(K_{Z}^\infty+ K_{W} + \Big(\overline{\Lambda}^\infty - c \Big)^2 K^\infty\Big)^{-1} \label{ll_4_b_6}
\end{align}
and the set ${\cal P}^{\infty,nfb}(\kappa)$ is nonempty for all  $\kappa\in [0,\infty)$.\\
Moreover, there exist  maximum element $(\overline{\Lambda}^\infty, K_Z^\infty) \in {\cal P}^{\infty,nfb}(\kappa)$, such that for all $\kappa \in [0,\infty)$, \\
\CDC{
 (i) if the noise is stable, i.e., $c\in (-1,1)$ then the input and the output processes $(X_t^o, Y_t^o), t=1, \ldots$ are asymptotically stationary,  and \\
 (ii) if the noise is unstable i.e., $c\notin (-1,1)$ then  input and the innovations  processes $(X_t^o, I_t^o), t=1, \ldots$ are asymptotically stationary. }
%
%
%
%
\end{corollary} 
\begin{proof} First, note that definition (\ref{int_nfb_1})-(\ref{adm_set_2_2}), follows from the restriction of time-varying strategies of Corollary~\ref{thm_FTWFI} to time-invariant, and by imposing the detectability and stabilizability conditions of the DRE (\ref{reg_23_in_a}) and convergence of the Lyapunov difference equation (\ref{reg_23_in}).  The lower bound  (\ref{ll_3_3}),  follows from $|\overline{\Lambda}^\infty|<1$, which implies   $\lim_{n \longrightarrow \infty} K_{X_n}^o=K_{X}^\infty$ satisfies $K_X^\infty =\big(\overline{\Lambda}^\infty\big)^2 K_X^\infty +K_Z^\infty$, and the convergence of the DRE (\ref{reg_23_in_a}) to the ARE, as stated in the set ${\cal P}^{\infty,nfb}(\kappa)$.  The fact that the set ${\cal P}^{\infty,nfb}(\kappa)$ is nonempty for all  $\kappa\in [0,\infty)$, follows from Theorem~\ref{thm_nfb}, since this is true for the special case   $\overline{\Lambda}^\infty=0$. The last statement follows, from the definition of the set ${\cal P}^{\infty,nfb}(\kappa)$, and the fact that for the special case of Theorem~\ref{thm_nfb}, such a channel  input always exists. 
\end{proof}

\ \

\begin{remark} On the lower bound on achievable rates without feedback\\
(1) We should note that the achievable lower bound on non-feedback capacity, given in   Corollary~\ref{thm_nfb_g}, i.e., $C_{LB}^\infty(\kappa,v_0)=C_{LB}^\infty(\kappa)$ defined by (\ref{ll_3_3}), on the non-feedback capacity $C^{nfb}(\kappa,v_0)$, defined by  (\ref{inter_nfb}),   holds for stable and unstable noise AR$(c), c \in (-\infty,\infty)$.  This is contrary to the well-known   water-filling solution, given by  (\ref{nfb_wf}), which presupposes the noise is stable, i.e., AR$(c), c \in (-1,1)$. \\
(2) We expect that the closed form expression of the lower bound $C_{LB}^\infty(\kappa,v_0)=C_{LB}^\infty(\kappa)$ defined by (\ref{ll_3_3}), can be found by carrying out the optimization problem, as in Theorem~\ref{thm_sol}. \\
(3) By Figure~\ref{fig:compnofeed}, the difference between the non-feedback capacity $C^{nfb}(\kappa)$  based on water-filling formulae (\ref{nfb_wf}) (see   \cite[eqn(5.5.14)]{ihara1993} and \cite[eqn(6)]{butman1976}),   and the lower bound $C_{LB}^{\infty,nfb}(\kappa)$ on  achievable rate without feedback based on the formulae \eqref{int_nf_8_aa},  of transmitting an IID channel input $Z_t^o\in N(0,\kappa)$,   for an AGN channel driven by AR$(c)$, noise,  $c=0.75$ and $K_W=1$. is less than $1.5 \times 10^{-2}$ bits per channel use. This difference is expected to be reduced further, if the lower bound $C_{LB}^\infty(\kappa,v_0)=C_{LB}^\infty(\kappa)$ defined by (\ref{ll_3_3}) is  used, because it employs  a Markov channel input, instead of an IID channel input.  
\end{remark}


\section{Discussion of Related  Literature and Comparison with  Main Results of the Paper}
\label{lite}
In this section we recall the formulation and some of the results of   \cite{cover-pombra1989} and compare them with  \cite{yang-kavcic-tatikonda2007,kim2010} and with recent results in \cite{liu-han2017,liu-han2019,gattami2019,li-elia2019}, with emphasis on the assumptions based on which  these are derived. Then we  specialize some of the results to AGN channels driven by the AR$(c)$ stable noise, to  point out  some of the oversights, which are overlooked in \cite{kim2010}, and are repeated in \cite{liu-han2017,liu-han2019,gattami2019,li-elia2019}.

\subsection{Cover and Pombra Characterization of Feedback Capacity} 
\label{sect:cp}
 Cover and Pombra \cite{cover-pombra1989} considered the AGN channel (\ref{AGN_in}), when the noise is time-varying, i.e., nonstationary Gaussian, with distribution ${\bf P}_{V_t|V^{t-1}}, t=2, \ldots, n, {\bf P}_{V_1|V_0}={\bf P}_{V_1}$,  and  derived converse and direct coding theorems,    
through the  characterization of the $n-$finite (block length) transmission of  feedback information ($n-$FTFI) capacity. The code   in 
\cite{cover-pombra1989}, is analogous to Definition~\ref{def_code}, with the fundamental difference that  the encoder  and decoder are replaced by  $X_1=e_1(W),X_2=e_2(W,X_1,Y_1),\ldots, X_n=e_n(W, X^{n-1}, Y^{n-1})$,  $y^n \longmapsto d_{n}(y^n)\in  {\cal M}^{(n)}$. The  average  error  probability is defined by (\ref{g_cp_4}), without knowledge of the initial state $v_0$, and    depends on the initial distribution ${\bf P}_{Y_1}$, induced by  $Y_1=X_1+V_1$, where  ${\bf P}_{V_1}$ is fixed.  \\
Cover and Pombra derived the characterization of $n-$FTFI capacity,  by recognizing that  entropy $H(Y^n)$ is  maximized, if the input $X^n$ is  jointly Gaussian, driven by a jointly Gaussian process $\overline{Z}^n$,  of the form:
\begin{align}
&X_t=\sum_{j=1}^{t-1} \Gamma_{t,j}V_j +\overline{Z}_t, \hso X_1=\overline{Z}_1, \hso t=2, \ldots, n, \label{cp_6}   \\
&\mbox{equivalently} \hso  
X^n =\Gamma^n V^n + \overline{Z}^n, \hso  \hso \mbox{$\Gamma^n$ is  lower diagonal non-random matrix}, \label{cp_10}\\
& \overline{Z}^n \hso \mbox{is jointly Gaussian, $N(0, K_{\overline{Z}^n})$}, \\
&  \overline{Z}^n \hso \mbox{is independent of} \hso V^n, \label{cp_8} \\
&\frac{1}{n}  {\bf E} \Big\{\sum_{t=1}^{n} (X_t)^2\Big\} =\frac{1}{n}Trace \; \left\{ {\bf E}\Big(X^n (X^{n})^T\Big) \right\} \leq \kappa. \label{cp_9} 
\end{align} 
The  $n$-FTFI  capacity is defined by maximizing $H(Y^n)-H(V^n)$ over all {\it time-varying} channel input strategies  $(\Gamma^n,K_{\overline{Z}^n})$, which induce {\it time-varying} feedback distributions,  as follows. 
\begin{align}
 C_n(\kappa) 
  \tri & \max_{  \big(\Gamma^n, K_{\overline{Z}^n}\big): \hso \frac{1}{n} Trace  \big\{ {\bf E} \big(X^n (X^n)^T\big)\big\}\leq \kappa}  H(Y^n) - H(V^n )  \label{cp_11}\\
=&  \max_{  \big(\Gamma^n, K_{\overline{Z}^n}\big):\hso \frac{1}{n} Trace\big\{\Gamma^n \; K_{V^n} \; (\Gamma^n)^T+ K_{\overline{Z}^n}\big\}  \leq \kappa } \frac{1}{2} \log \frac{|\big(\Gamma^n+I\big) K_{V^n}\big(\Gamma^{n}+I\big)^T+ K_{\overline{Z}^n}|}{|K_{V^n}|}. \label{cp_12}
  \end{align}
Coding theorems, are   stated in  \cite[Theorem~1]{cover-pombra1989}, based on $\frac{1}{n} C_n(\kappa)$ for sufficiently large ``$n$''.\\
The $n-$finite transmission without feedback information (FTwFI) capacity is characterized by the input process (\ref{cp_6}), with $\Gamma_{t,j}=0, \forall (t,j)$, i.e.,  $X_t=\overline{Z}_t,  t=1, \ldots n,$ or ${\Gamma}^n=0$, which imply  $K_{X^n}=K_{\overline{Z}^n}$, as follows  \cite[eqn(14)]{cover-pombra1989}.
\begin{align}
 C_n^{nfb}(\kappa) 
  \tri 
  \max_{ K_{X^n}:\hso \frac{1}{n} Trace\big( K_{X^n}\big)  \leq \kappa } \frac{1}{2} \log \frac{|K_{X^n}+ K_{V^n}|}{|K_{V^n}|}. \label{cp_nfb}
  \end{align}
 In \cite[Theorem~2,  Eqn(15), Eqn(16)]{cover-pombra1989},  it is proved that $C_n(\kappa)$ and  $C_n^{nfb}(\kappa)$,  satisfy
\begin{align}
\frac{1}{n} C_n^{nfb}(\kappa) \leq\frac{1}{n} C_n^{fb}(\kappa)  \leq \frac{1}{n} C_n^{nfb}(\kappa) +\frac{1}{2}.  \label{bounds}
\end{align} 
To this date  no closed formulas are available in the literature for  $C_n(\kappa)$ and  $C_n^{nfb}(\kappa)$ for unstable or unstable noise.   
 Bounds (\ref{bounds}) are  analyzed extensively in the literature \cite{dembo1989,ihara-yanaki1989,yanaki1992,yanaki1994,chen-yanaki1999,ozarow1990,ozarow1990a} (and references therein), 
 under stationarity or asymptotic stationarity. 

\CDC{
\subsection{Observations on the Feedback Codes and Optimal Channel Inpus  of Past Literature
    }
\label{app_sect_imp}
In this section, we make some observations on the  optimal channel inputs, that are used in \cite{yang-kavcic-tatikonda2007}, and applied in \cite{kim2010} to derive the main theorem \cite[Theorem~6.1]{kim2010}. From these observations then   follows that neither  \cite[Theorem~6.1]{kim2010} (and also \cite[Theorem~4.1]{kim2010}) nor \cite{liu-han2017,liu-han2019,gattami2019,li-elia2019},  characterize the feedback capacity for the Cover and Pombra \cite{cover-pombra1989} formulation of feedback code (with limit and supremum operations interchanged) of stationary channels. }


\begin{observation}\CDC{  Optimal channel inputs in \cite{yang-kavcic-tatikonda2007,kim2010,liu-han2017,liu-han2019,gattami2019,li-elia2019}\\
\label{obs_in_new}
(1) In \cite{yang-kavcic-tatikonda2007} the information structure of optimal channel inputs with feedback are derived 
for AGN channels driven by stationary autoregressive noise, described by a power spectral density (PSD) function $S_V(e^{j\omega}), \omega \in [-\pi,\pi]$ \cite[eqn(7)]{yang-kavcic-tatikonda2007}:
\begin{align}
S_V(e^{j\omega}) \tri& K_W\frac{\Big( 1-\sum_{k=1}^L a(k) e^{jk \omega}\Big)\Big(1- \sum_{k=1}^L a(k) e^{-j lk \omega}\Big)}{\Big( 1-\sum_{k=1}^L c(k) e^{jk \omega}\Big)\Big(1- \sum_{k=1}^L c(k) e^{-jk\omega}\Big)}, 
\hso |c(k)|<1,\hso |a(k)|\leq 1. \label{PSD_G}
\end{align}
In  \cite[Section~II.C, eqn(19), eqn(20)]{yang-kavcic-tatikonda2007},  the noise is represented by a particular state space realization, with state variables $S^n\tri (S_1, \ldots, S_n)$,   under the following {\it crucial assumption}\footnote{If (A1)  does not hold then the characterization of the $n-$FTFI  capacity \cite[Sections~II-V]{yang-kavcic-tatikonda2007} is not valid.}. }

\CDC{ (A1) \cite[page 933, I)-III)]{yang-kavcic-tatikonda2007}:   given the initial state of the noise $S_0=s_0$, which is known to the encoder and the decoder, the channel input $X^n\tri (X_1, \ldots, X_n)$ uniquely defines the state variables $S^n$ and vice-versa.}

\CDC{Then, under Assumotion (A1),  in \cite[Section~II-V]{yang-kavcic-tatikonda2007},  the optimal channel input $X^n$ is expressed in terms of the state $S^n$, its initial state $S_0=s_0$,  the outputs $Y^n$,  and a Gaussian innovations process (see \cite[Theorem~5, eqn(114)]{yang-kavcic-tatikonda2007}). The $n-$FTFI capacity is characterized by \cite[eqn(131)]{yang-kavcic-tatikonda2007}. \\
(2) In \cite[Theorem~6.1]{kim2010} the characterization of feedback capacity is derived by invoking an equivalent state space representation of the noise (\ref{PSD_G}), given by \cite[eqn(58)]{kim2010}, i.e.,  
\begin{align}
S_{t+1}=&FS_t +G W_t, \hso S_0=0, W_0=0,\hso  t=0, \ldots,  \label{ss_1}\\
 V_t=& HS_t+ G W_t,\hso  t=1, \ldots, \label{ss_2}
\end{align}
 where $W_t, t=1, \ldots, $ are  independent, Gaussian, independent of the Gaussian RV $S_0$, and  $F, G, H$ are matrices of dimensions $L\times L, L\times 1, 1\times L$, respectively.    $S_t$ is called the state of the noise.\\ 
  Although, not explicitly stated in \cite{kim2010}, from  \cite[Lemma~6.1]{kim2010} (which is used to derive  \cite[Theorem~6.1]{kim2010}), it follows  a crucial assumption, which is analogous to  (A1) (see also \cite[page 78, left column, first paragraph]{kim2010}):}
  
\CDC{   
(A2) \cite[Theorem~6.1 and Lemma~6.1]{kim2010}:  Given the initial state of the noise $S_0=s_0, W_0=w_0$, which is  known to the encoder and the decoder, the noise $V^n$ uniquely defines the state variables $S^n$ and vice-versa.}

\CDC{ 
(3) In Cover and Pombra \cite{cover-pombra1989} the code definition  and  optimal channel input process $X^n$, i.e.,  (\ref{cp_6}),   are not allowed to depend on the state variables $S^n$ and the initial states, i.e., assumptions (A1) or (A2) are not imposed, and hence $S^n$ needs to be estimated at the encoder.   Because  assumptions (A1) and (A2) are imposed in  the feedback capacity problem analyzed in \cite{yang-kavcic-tatikonda2007} and characterization of feedback capacity in \cite[Theorem~6.1]{kim2010} (see also \cite[Lemma~6.1]{kim2010}), respectively, then \cite{yang-kavcic-tatikonda2007} and  \cite[Theorem~6.1]{kim2010} treat a fundamentally different and much easier problem than  the Cover and Pombra \cite{cover-pombra1989}.   \\
Papers \cite{liu-han2017,liu-han2019,gattami2019,li-elia2019} should be read with caution, because the misinterpretations discussed above are repeated.\\
For example,  \cite{liu-han2017,liu-han2019,li-elia2019} developed their results based on  \cite[Theorem~4.1]{kim2010} which is equivalent to \cite[Theorem~6.1]{kim2010}, hence the discussion under (2) applies to  these references.\\
In  \cite{gattami2019} the related to  the presentation of \cite[Theorem~6.1]{kim2010}, such as,  in \cite[page 1949 second column]{gattami2019}, and \cite[Section~IV first paragraph]{gattami2019}, is misleading  because the discussion under (2) applies.  In  \cite[Theorem~2]{gattami2019}, the optimal channel input is a function of the state process $S^n$, and hence the discussion under (2) applies as well. 
}
\end{observation}

\CDC{
\subsection{Preliminary Observations on Convergence of Feedback Rates}
\label{app_comp}
In this section we make preliminary observation on conditions for  existence of the limits of the  feedback capacity definition (\ref{inter}), which are not accounted for in  \cite[Theorem~6.1]{kim2010} (equivalently in \cite[Theorem~4.1]{kim2010}). These are also overlooked in  \cite{liu-han2017,liu-han2019,gattami2019,li-elia2019}, when use is made of the results of \cite{kim2010}.  For the stationary AR$(c)$ noise,  the feedback capacity characterizations given in   \cite[Theorem~6.1]{kim2010},   and similarly in    \cite[Theorem~4.1]{kim2010}, correspond to  $K_Z^\infty=0$, (see\cite[Lemma~6.1]{kim2010} and  comments above it), which means the stabilizability condition (see Definition~\ref{def:det-stab}) is not satisfied. }

\ \

\begin{observation}
On the  feedback capacity characterizations given in \cite{kim2010}\\
\label{obs_in}
(1) Observe that the characterization of feedback capacity given in \cite[Theorem~6.1]{kim2010} for the  state space representation of the  noise  (\ref{ss_1}), (\ref{ss_2}),  with zero variance of the innovations  of the channel input process, implies one solution to the symmetric  matrix algebraic Riccati equation (ARE) \cite[Eqn(61) satisfied by  $\Sigma$]{kim2010}, is always the zero  matrix solution $\Sigma=0$. \\
 An important technical issue,  which is overlooked in the treatment of the asymptotic analysis in  \cite{kim2010}, is  the fact that, the    estimation error covariance\footnote{$\Sigma$ in the notation of \cite{kim2010}.} of the noise $V^n$  satisfies a generalized difference Riccati  equation\footnote{Generalized DREs and algebraic Riccati equations (ARE) are well-known in filtering  problems of Gaussian systems, where the state to be estimated is driven by a Gaussian noise,  which is correlated with the Gaussian noise that drives the observed process, i.e., the process that is used to estimate the state process \cite{kailath-sayed-hassibi,caines1988}.}  (DRE) \cite{kailath-sayed-hassibi}, and not a classical DRE (see Section~\ref{sect:q1}).  It then follows from the theory of generalized DREs and AREs \cite{kailath-sayed-hassibi,caines1988}, (see properties of Theorem~\ref{thm_ric}),  that even for the stable and stationary channel noise considered  in \cite{kim2010}, to ensure asymptotic stationarity   of the optimal  joint channel input and output process, it is necessary that  the  conditions, known as detectability and stabilizability conditions should be  appended  to the definition of characterization of 
 feedback capacity\footnote{It appears \cite{kim2010} did not recognize that stability or stationarity of the noise is not sufficient for detectability and stabilizability to hold, as in  classical differerence and algebraic equations. This is also not accounted for  in \cite{gattami2019}.}   in \cite[Theorem~6.1]{kim2010}.  When  these conditions are appended, then the  feedback capacity is fundamentally different from the one given in   \cite[Theorem~6.1]{kim2010}.\\
Two special cases are discussed below. \\
Case 1. Consider the  asymptotically stationary AR$(c)$ noise. From the  generalized ARE (\ref{ARE_in}) of $K^\infty$, with a zero variance of  the innovations of the channel input,  $K_Z^\infty=0$,  then it follows that   $(K^\infty, \Lambda^\infty)$ is the analog of $(\Sigma,X)$ in \cite[Theorem~6.1]{kim2010}.   The equation of $K^\infty$, with $K_Z^\infty=0$,  is  a quadratic polynomial, with  two roots, one at $K^\infty=0$,  and another at $K^\infty=K^\infty(\Lambda^\infty)$, given in (\ref{gre_1_a})-(\ref{gre_2}) (by simple algebra).  \\ 
It then follows from the general properties of generalized matrix difference RIccati equations \cite{kailath-sayed-hassibi} (see Theorem~\ref{thm_ric}, Lemma~\ref{lem_pr_are}  for its application to the AR$(c)$ noise), that the   asymptotic limit of the mean-square estimation error converges to     the zero solution $\Sigma\equiv K^\infty=0$,  and not the maximal solution.   This then implies, the feedback capacity based on \cite[Theorem~6.1, $C_{FB}$]{kim2010} is zero.\\
Case 2. Consider the PSD (\ref{PSD_G}), with $L=1$, denoted by AR$(a,c)$, given in \cite[eqn(43)]{kim2010}:  
\begin{align}
S_V(e^{j\omega}) \tri& K_W\frac{\Big( 1-a e^{i\omega}\Big)\Big(1-a e^{-i\omega}\Big)}{\Big( 1- c e^{i \omega}\Big)\Big(1- c e^{-i\omega}\Big)}, 
\hso |c|<1,\hso |a|\leq 1,\hso  K_W=1. \label{PSD_G_S}
\end{align}
 In \cite[above and below eqn(43)]{kim2010}, the following state space representation is introduced.
\begin{align}
S_{t+1}=&c S_t +W_t, \hso S_0=W_0=0, \hso t=0, \ldots, \label{ar_g_1}   \\
V_t=&\big(c-a\big)S_t +W_t,\hso t=1, \ldots, \label{ar_g_2}
\end{align}
Similarly to Case 1,  the ARE   \cite[Theorem~6.1, eqn(61) satisfied by  $\Sigma$]{kim2010}, is a quadratic polynomial in $\Sigma$, with two solutions, the zero solution $\Sigma=0$, and a nonzero solution $\Sigma\neq 0$ (one root is always zero  because $K_Z^\infty=0$).  It is easy to  verify that the choice of  the nonzero solution gives directly the  statement of  \cite[Theorem~5.3]{kim2010} (which is derived using a lengthy derivation  from  the frequency domain characterization of \cite[Theorem~4.1]{kim2010}).  \\
However, similarly to (1) it follows from the general properties of generalized matrix difference Riccati equations \cite{kailath-sayed-hassibi}, that the   asymptotic limit of the mean-square estimation error converges to     the zero solution $\Sigma\equiv K^\infty=0$,  and not the nonzero solution, which then implies the feedback capacity based on \cite[Theorem~6.1, $C_{FB}$]{kim2010} is zero.\\
 The above   technical oversights are also discussed in Remark~\ref{rem_g-kalman_1}, and  follow from Section~\ref{sect:ar1_r}, {\bf Facts 1-3}.  
We also give an alternative derivation in  Section~\ref{sect:kim} (see Counterexample~\ref{counterex}).\\
\CDC{Papers \cite{liu-han2017,liu-han2019,gattami2019,li-elia2019} should be read with caution, because the technical issues discussed above  are repeated. }
\end{observation}

\subsection{The Kim Characterization of Feedback Capacity}
\label{sect:kim}
  Let us now recall the characterization of feedback capacity given  
in  \cite[Theorem~6.1]{kim2010} for the stable AR$(c)$ noise, with  $V_0=v_0$, known to the encoder and the decoder, which presupposes the variance of the innovations of the channel input process is zero, i.e.,  $K_Z^\infty =0$ in the generalized ARE of Theorem~\ref{lem_cov}.

{\it The Feedback Capacity for the stable  AR$(c), c\in (-1,1)$ Noise \cite{kim2010}.} From   \cite[Theorem~6.1]{kim2010}, under the  stationarity or asymptotic stationarilty of the  input and  output process, the following is obtained.

{\bf (K1) } 
 The characterization of feedback capacity for the stable AR$(c), c\in (-1,1)$ noise,   is given by\footnote{The reader may  verify that characterization  (\ref{i_kim_1a})-(\ref{i_ric_3_na}), with $K_W=1$,  is a degenerate version of \cite[Theorem~6]{kim2010}.
 } \cite[Theorem~6.1, eqn(61)]{kim2010}
 \begin{align}
&C^{K}(\kappa)  \tri   \max_{\Lambda^\infty}  \log\Big( \frac{\big(\Lambda^\infty+c\big)^2 K^\infty   +K_{W}}{K_{W}}\Big) \label{i_kim_1a}\\
&\mbox{  $\Lambda^\infty \in {\mathbb R}\hso $ such that  $\hso |\Lambda^\infty|\neq 1$,}\label{i_kim_1} \\
&\mbox{$c^2 K^\infty \leq \kappa$}, \label{k_pc}  \\ 
&\mbox{$K^\infty$ is the maximal solution  of the ARE,} \\
& K^\infty= c^2 K^\infty + K_W -\frac{ \Big( K_W + c  K^\infty\big(\Lambda^\infty + c \big)\Big)^2}{ \Big( K_W + \big(\Lambda^\infty + c \big)^2 K^\infty\Big)}, \hst |c|<1.  \label{i_ric_3_na}   
\end{align}
Note that (\ref{i_ric_3_na}) corresponds to the generalized ARE of Theorem~\ref{lem_cov} with $K_Z^\infty =0$. 
Condition
 (\ref{i_kim_1}) is known as unit circle controllability \cite{kailath-sayed-hassibi}, (see Definition~\ref{def:det-stab}.(b)). {\bf Fact 3}, (\ref{gre_1_a})-(\ref{gre_2}) give all possible solutions of (\ref{i_ric_3_na}).  By solving (\ref{i_ric_3_na}) we obtain two  solutions, $K^\infty=0$ and   $K^\infty=\frac{K_W\big((\Lambda^\infty)^2-1\big)}{\big(\Lambda^\infty+ c\big)^2}, |\Lambda^\infty| \neq  1$. Note,  that  by Lemma~\ref{lem_pr_are}.(3),   the stabilizability condition is satisfied if and only if $|\Lambda^\infty|<1$, and hence stabilizability implies the only nonnegative solution is  $K^\infty=0$. Hence,  the solution  that  correspond to the per unit time limit (\ref{i_ll_2})     is $K^\infty=0$, i.e., $|\Lambda^\infty|<1$, which then implies $C^K(\kappa)=0, \forall \kappa \in [0,\infty)$.
 
{\bf (K2)} In  \cite[page 58, second column, last paragraph]{kim2010},  the solution of  the optimization Problem (\ref{i_kim_1a})-(\ref{i_ric_3_na}),   for the AR$(c)$ noise, described by the  power  spectral density,  $S_V(e^{j\omega})=\frac{1}{|1+c e^{j\omega}|^2}, -1 < c < 1$, i.e., for the an AR$(c)$ stationary noise with $K_W=1$, is given as follows. 
\begin{align}
&C^{K}(\kappa)=-\log x_0, \label{kim_ar1_a}  \\
&\mbox{where $x_0$ is the unique positive root of $\kappa x^2=\frac{\big(1-x^2\big)}{\big(1+|c|x\big)^2}$}.    \label{kim_ar1_b}
\end{align}  
 The (incorrect) simplified characterizations of the feedback capacity  i.e., with zero power spectral density and zero variance of the innovations process of the channel input,   of    \cite[Theorem~4.1]{kim2010} and  \cite[Theorem~6.1]{kim2010}, respectively, are utilized extensively in the literature, and    
 stimulated the interest into  additional investigations, by many authors, for example, \cite{liu-han2017,liu-han2019,gattami2019,li-elia2019}.

The next  proposition summarizes  the implications of the properties of generalized DREs and AREs of Theorem~\ref{thm_ric}, and  Lemma~\ref{lem_pr_are}, on the characterization of  feedback capacity given in  \cite[Theorem~4.1 and Theorem~6.1]{kim2010}.  \\

\begin{proposition} On  \cite[Theorem~6.1 and Lemma~6.1]{kim2010} for stable  AR$(c)$ noise  \\
\label{prop_kim}
Consider  the  characterization  of \cite[Theorem~6]{kim2010} for  stable  AR$(c)$ noise,     which   is equivalent to the optimization problem (\ref{i_kim_1a})-(\ref{i_ric_3_na}). \\
(a) There are two  solutions to the optimization problem (\ref{i_kim_1a})-(\ref{i_ric_3_na}), listed under Solutions \# 1,  \#2. \\
Solution \#1: 
\begin{align}
&K^\infty  =\frac{K_W\Big(\big(\Lambda^\infty\big)^2-1\Big)}{\Big(\Lambda^\infty+ c\Big)^2},  \hso   |\Lambda^\infty| >1, \hso \Lambda^\infty \neq -c, \label{i_soll_5} \\
& \big(\Lambda^\infty\big)^2 K^\infty=\kappa \hso \Longrightarrow \hso K_W \big(\Lambda^\infty\big)^4 -\big(K_W+\kappa\big) \big(\Lambda^\infty\big)^2 -2 c\kappa \Lambda^\infty -c^2 \kappa=0,\hso |\Lambda^\infty|> 1 \label{i_soll_7}\\
& C(\kappa)= \log |\Lambda^\infty|, \label{i_soll_8}\\
&\mbox{the pair $\{A, C\}$ is detectable},
 \label{soll_9}\\
&\mbox{the pair $\{A^*, B^{*,\frac{1}{2}}\}$ is unit circle controllable, which is equivalent to (\ref{i_kim_1}),} \\
&\mbox{the pair $\{A^*, B^{*,\frac{1}{2}}\}$ is  not stabilizable.} \label{soll_10}
\end{align}
In particular,  for $K_W=1$ then $x=\frac{1}{\Lambda^\infty}$ satisfies (\ref{kim_ar1_b}), and   (\ref{kim_ar1_a}) and  (\ref{kim_ar1_b}) are obtained  from  (\ref{i_soll_5})-(\ref{i_soll_8}), i.e., letting $K_W=1$.

Solution \#2:
\begin{align}
&K^\infty = 0,  \hso  |\Lambda^\infty|<1,  \label{i_soll_2}\\
& \big(\Lambda^\infty\big)^2 K^\infty=0 \leq \kappa, \hso C^{K}(\kappa)=0, \hso \forall \kappa \in [0,\infty),\label{i_soll_3} \\
&\mbox{the pair $\{A, C\}$ is detectable}, \\
&\mbox{the pair $\{A^*, B^{*,\frac{1}{2}}\}$ is stabilizable.}\label{i_soll_4}
\end{align}
(b) The asymptotic per unit time limit, as defined by (\ref{Q_1_2_s1_new_b}), converges to the characterization  (\ref{i_kim_1a})-(\ref{i_ric_3_na}) if and only if $|\Lambda^\infty|<1$,  and  Solution \#2 is the unique limit.   
\end{proposition}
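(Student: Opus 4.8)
The plan is to reduce the whole statement to the explicit root structure of the generalized ARE (\ref{i_ric_3_na}) — which is Kim's Riccati equation specialized to a zero innovations variance, i.e.\ the generalized ARE (\ref{DRE_TI_gae}) with $R=K_W$ — together with the convergence dichotomy of Theorem~\ref{thm_ric} and Lemma~\ref{lem_pr_are}. The starting observation, already recorded in {\bf Fact~3}, is that clearing denominators in (\ref{i_ric_3_na}) turns it into the quadratic $K\big[(\Lambda^\infty+c)^2 K + K_W\big(1-(\Lambda^\infty)^2\big)\big]=0$, whose two roots are exactly those of (\ref{gre_1_a}): $K^\infty=0$ and $K^\infty=K_W\big((\Lambda^\infty)^2-1\big)/(\Lambda^\infty+c)^2$ for $\Lambda^\infty\neq-c$. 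By (\ref{gre_1})--(\ref{gre_2}) the second root is nonnegative iff $|\Lambda^\infty|\ge 1$ and nonpositive iff $|\Lambda^\infty|\le 1$, so the ``maximal nonnegative solution'' prescribed in the characterization is the second root when $|\Lambda^\infty|>1$ and the root $0$ when $|\Lambda^\infty|<1$; the boundary value $|\Lambda^\infty|=1$ is removed by the unit circle controllability constraint (\ref{i_kim_1}).

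For part (a) I would handle the two regimes separately. For $|\Lambda^\infty|>1$ (Solution~\#1), substituting the second root uses $(\Lambda^\infty+c)^2K^\infty+K_W=K_W(\Lambda^\infty)^2$ and returns the objective value $\log|\Lambda^\infty|$ of (\ref{i_soll_8}); making the power constraint active, $(\Lambda^\infty)^2K^\infty=\kappa$, and inserting the root yields, after clearing denominators, the quartic (\ref{i_soll_7}); the detectability, unit circle controllability and \emph{non}-stabilizability assertions (\ref{soll_9})--(\ref{soll_10}) are then immediate from Lemma~\ref{lem_pr_are}.(1),(2),(3) with $c\in(-1,1)$ and $|\Lambda^\infty|>1$. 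For $|\Lambda^\infty|<1$ (Solution~\#2), the selected root is $0$, the power constraint $(\Lambda^\infty)^2\cdot 0=0\le\kappa$ is vacuous, the objective equals $0$, establishing (\ref{i_soll_3}), and detectability and stabilizability follow from Lemma~\ref{lem_pr_are}.(1),(3). Finally, the $K_W=1$ specialization of Solution~\#1 is pure algebra: putting $x=1/\Lambda^\infty$ in (\ref{i_soll_7}) rearranges to $\kappa x^2=(1-x^2)/(1+|c|x)^2$, which is (\ref{kim_ar1_b}), while $\log|\Lambda^\infty|=-\log|x|=-\log x_0$ gives (\ref{kim_ar1_a}).

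For part (b) the crucial point is that (\ref{Q_1_2_s1_new_b}) is driven by the generalized DRE (\ref{Q_1_10_s1_new}) started from the \emph{admissible} initial condition $K_0^o=0$ (since $V_0=v_0$ is known, $\widehat V_0^o=V_0$), whereas the characterization (\ref{i_kim_1a})--(\ref{i_ric_3_na}) inserts a root of the ARE chosen by maximality. With $K_Z^\infty=0$ the DRE obeys $K_{t-1}^o=0\Rightarrow K_t^o=c^2\cdot 0+K_W-K_W^2/K_W=0$, so $K_t^o\equiv 0$ for \emph{every} $\Lambda^\infty$; hence the per unit time limit in (\ref{Q_1_2_s1_new_b}) exists and equals $\tfrac12\log\big((\Lambda^\infty+c)^2\cdot 0+0+K_W\big)/K_W=0$. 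This agrees with the value produced by (\ref{i_kim_1a})--(\ref{i_ric_3_na}) exactly when the maximal nonnegative root selected there is $K^\infty=0$, which by part (a) is precisely $|\Lambda^\infty|<1$ (Solution~\#2), while for $|\Lambda^\infty|>1$ the characterization returns $\log|\Lambda^\infty|\ne 0$ and so cannot coincide — this is the ``if and only if''. To place it in the general framework I would then note that for $|\Lambda^\infty|<1$, Lemma~\ref{lem_pr_are}.(1),(2) give detectability of $\{A,C\}$ and unit circle controllability of $\{A^*,B^{*,\frac12}\}$, Lemma~\ref{lem_pr_are}.(3) gives stabilizability, so by Theorem~\ref{thm_ric}.(1),(3) and Lemma~\ref{lem_pr_are}.(4),(5) the value $K^\infty=0$ is the unique nonnegative \emph{stabilizing} solution and the limit of the DRE from \emph{every} initial condition — i.e.\ Solution~\#2 is the (unique) limit and the genuine ergodic steady state; whereas for $|\Lambda^\infty|>1$ the pair $\{A^*,B^{*,\frac12}\}$ is not stabilizable, so Solution~\#1's maximal root $K_W\big((\Lambda^\infty)^2-1\big)/(\Lambda^\infty+c)^2$, though it solves the ARE, is never the limit of the DRE from $K_0^o=0$ (that limit is $0$), hence is spurious as an asymptotic error covariance.

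I expect the main obstacle to be conceptual rather than computational: one must keep scrupulously separate (i) the maximal nonnegative root of the generalized ARE, which the characterization of \cite[Theorem~6.1]{kim2010} selects and which produces the strictly positive value $\log|\Lambda^\infty|$, from (ii) the actual asymptotic value of the $n$-FTFI-capacity per unit time limit, which is governed by the DRE from $K_0^o=0$ and — because $K_Z^\infty=0$ pins $K_t^o\equiv 0$ — equals $0$; and one must argue, through the ``only if'' direction of Theorem~\ref{thm_ric}.(1) and the sign analysis of Lemma~\ref{lem_pr_are}, that this gap is exactly the failure of stabilizability, namely $|\Lambda^\infty|\ge 1$. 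A minor point to dispatch in passing is the degenerate case $\Lambda^\infty=-c$, where $C=0$, the observation carries no information about the noise, and the claimed value $0$ holds trivially.
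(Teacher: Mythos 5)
Your proposal is correct and follows essentially the same route as the paper: the paper's formal proof is a one-line deferral to Theorem~\ref{thm_ric} and Lemma~\ref{lem_pr_are}, with the substantive computations (the quadratic root structure of the ARE with $K_Z^\infty=0$, the identity $(\Lambda^\infty+c)^2K^\infty+K_W=K_W(\Lambda^\infty)^2$, the quartic in $\Lambda^\infty$, and the DRE recursion pinned at $K_t^o\equiv 0$ with transition map $-\Lambda^\infty$) carried out in {\bf Fact~3} and Counterexample~\ref{counterex}, all of which you reproduce. Your separation of the maximal ARE root from the actual DRE limit via the stabilizability dichotomy is exactly the paper's intended argument.
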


\begin{proof} The statements follow from Theorem~\ref{thm_ric} and Lemma~\ref{lem_pr_are}.
\end{proof}

To gain additional insight,  we construct a  counterexample to  \cite[Lemma~6.1]{kim2010}, which states  that the variance of the innovations of the channel input process is optimal, i.e.,  $K_Z^\infty =0$, that  let to the incorrect characterization of feedback capacity  \cite[Theorem~6.1]{kim2010} (and also to \cite[Theorem~4.1]{kim2010}),  by calculating the transition map of   the error recursion of the generalized Kalman-filter. An alternative approach  to construct a counterexample, is to follow the statements prior to \cite[Lemma~14.2.1, page 507]{kailath-sayed-hassibi}, for the generalized DRE and ARE, using its  transform version, as discussed in \cite[Section~14.7, page 540]{kailath-sayed-hassibi}.  \\

\begin{counterexample} On \cite[Theorem~6.1 and Lemma~6.1]{kim2010}\\
\label{counterex}
Our objective is to evaluate $C^K(\kappa)$, defined by  (\ref{i_kim_1a})-(\ref{i_ric_3_na}), from (\ref{Q_1_2_s1_new_b}),  by first calculating the transition map of the error recursion of the generalized Kalman-filter.\\
Let $K_0^o, K_1^o, \ldots, K_n^o$ denote the sequence generated from the generalized DRE (\ref{Q_1_10_s1_new}), with $K_Z^\infty=0$, as stated in  \cite[Theorem~6]{kim2010}, i.e., 
\begin{align}
K_{t}^o= & c^2 K_{t-1}^o  + K_{W} -\frac{ \Big( K_{W} + c K_{t-1}^o\Big(\Lambda^\infty + c \Big)\Big)^2}{ \Big( K_{W} + \Big(\Lambda^\infty + c \Big)^2 K_{t-1}^o\Big)}, \hst  K_t^o \geq0,  \hso K_{0}^o=0, \hst t=1, \ldots, n, 
 \label{Q_1_10_s1_new_new}
 \end{align}
We shall compute the solution of (\ref{Q_1_10_s1_new_new}) recursively, to determine whether the transition map of the error recursion (\ref{i_error_s1_ti}) converges, i.e.,  whether it is stabilizing, by checking:
\begin{align}
\lim_{n \longrightarrow \infty} |F(K_{t-1}^o,\Lambda^\infty, K_{Z}^\infty=0)| =\lim_{n \longrightarrow \infty} |c - M(K_{t-1}^o, \Lambda^\infty, K_Z^\infty=0)\Big(\Lambda^\infty +c\Big)| < 1.
\end{align}
(i) By the generalized DRE (\ref{Q_1_10_s1_new_new}), 
at time $t=1$,    we have 
\begin{align}
&K_1^o=0 \hso \Longrightarrow \hso F(K_{0}^o,\Lambda^\infty, K_Z^\infty=0)= c -\Lambda^\infty -c=-\Lambda^\infty.  \label{rec_1_ce}
\end{align}
At time $t=2$,  we have 
  \begin{align}
&K_2^o=0  \hso \Longrightarrow \hso    F(K_{1},\Lambda^\infty, K_Z^\infty=0)=-\Lambda^\infty. \label{rec_2_ce}
\end{align}
For all $t\in\{3,4, \ldots, n\}$, by recursive calculations,   we have  
  \begin{align}
&K_t^o=0  \hso \Longrightarrow \hso  F(K_{t-1}^o,\Lambda^\infty, K_Z^\infty=0)    =-\Lambda^\infty, \hst t=2,3, \ldots, n. \label{rec_3_ce}
\end{align}
Clearly, a necessary condition for the transition map of the error recursion  to  convergence  is 
\bea
|  F(K_{n-1}^o,\Lambda^\infty, K_Z^\infty)|    =|-\Lambda^\infty|< 1, \hst n=0,1, \ldots, .
\eea
Thus, the necessity of $|\Lambda^\infty|<1$ implies Proposition~\ref{prop_kim}, Solution \#2 is the valid solution, re-confirming  our earlier  discussions. Hence, the unique, nonnegative solution of (\ref{Q_1_10_s1_new_new}) is $K_n^o=0, n=0,1, \ldots$. 
It then follows that  $\Lambda^\infty K_n^o=0, n=0,1,  \ldots$. Then   for any finite $n$, by (\ref{Q_1_2_s1_new_b}) with $K_Z^\infty=0$, we have
\begin{align}
&\frac{1}{n} {\bf E}_{v_0}\Big\{\sum_{t=1}^n \big(X_i^o\big)^2\Big\} =\frac{1}{n} \sum_{t=1}^n (\Lambda^\infty)^2 K_{t-1}^o=0, \label{i_limit_1_ce}\\
&  \frac{1}{2n} \sum_{t=1}^n  \log\Big( \frac{\Big(\Lambda^\infty+c\Big)^2 K_{t-1}^o +K_{W}}{K_W}\Big)=0.\label{i_limit_2_ce}
\end{align}   
The last two equations imply that the valid solution of  (\ref{i_kim_1a})-(\ref{i_ric_3_na}), i.e., that corresponds to (\ref{Q_1_2_s1_new_b}), when $K_Z^\infty=0$,   is given by $C^K(\kappa)=0, \forall \kappa \in [0,\infty)$.\\
(ii) It is easy to verify that the calculations in (i) are also predicted from the properties of the generalized DREs and AREs.
\end{counterexample}

\section{Conclusion}
The $n-$finite transmission feedback information  (FTFI) capacity for additive Gaussian noise (AGN) channels with  feedback,   is characterized, and lower bounds on the characterization of the  $n-$finite transmission without feedback information  (FTwFI) capacity are derived,  when the noise is described by {\it stable and unstable} autoregressive models. Closed form feedback capacity formulas are derived, when   {\it channel input strategies or distributions are time-invariant}, for  autoregressive memory one, stable and unstable noises, It is shown that feedback does not increase capacity, when the noise is stable and for certain unstable noise.  Lower bounds on the  non-feedback capacity are also derived, based on Markov channel input distributions, i.e., induced by a Gaussian Markov channel input process,  and also by an independent and identically distributed  channel input process. These  achievable lower bounds on non-feedback capacity hold for any autoregressive noise model, irrespective of whether it is stable or unstable. \\
An interesting    long standing unanswered question, is to derive a closed form expression for the $n-$FTFI capacity, to gain insight into the performance of achievable  time-varying rates.  
 

{\bf Acknowledgments.} This work was co-funded by the European Regional Development Fund and the Republic of Cyprus through the Research and Innovation Foundation, under Projects,  EXCELLENCE/1216/0365 and EXCELLENCE/1216/0296.

\section{Appendix}
\label{appendix}

\subsection{proof of  Lemma~\ref{lem_pr_are}}
\label{lem_pr_are_AP}
Recall (\ref{ma_1}) and (\ref{m_2}). 
 (1) By definition $A=c, C=\Lambda+c$. By $c \in (-1,1)$, then  there always exists a $G \in {\mathbb R}$ such that   $|A-GC|=|c -G(\Lambda +c)|<1$, i.e., take $G=0$. This shows (1).\\
(2) Since $K_{Z}=0$,  then $B= 1-K_W\big(K_Z+K_W\big)^{-1}=0,  B^{*,\frac{1}{2}}=K_W^{\frac{1}{2}}B^{\frac{1}{2}}=0$,   $A^*=c-K_W \big(K_W+K_Z\big)^{-1}C=c-\Lambda-c=-\Lambda$, and hence, there exists a $G \in {\mathbb R}$ such that   $|A^*-B^{*,\frac{1}{2}}G|=|-\Lambda|\neq 1$ if and only if $|\Lambda|\neq 1$. This shows (2).\\
(3) Since $K_{Z}=0$, similar to the prove  in (2), there exists a $G \in {\mathbb R}$ such that   $|A^*-B^{*,\frac{1}{2}}G|=|-\Lambda|< 1$ if and only if $|\Lambda|< 1$. This shows (3). \\
(4) Since $c\in (-1,1), K_Z=0$, then,  by (1),  the pair $\{A, C\}$ is detectable, by  (2) the pair $\{A^*,B^{*,\frac{1}{2}}\}$ is unit circle controllable if and only if $|\Lambda| \neq 1$, and by (3) the pair $\{A^*,B^{*,\frac{1}{2}}\}$ is stabilizable  if and only if $|\Lambda|< 1$.  By  Theorem~\ref{thm_ric}.(1) we deduce the claim.  This shows (4). \\
(5) Clearly,  (\ref{i_ric_3_n}) is equivalent to the quadratic equation
\bea
K^2 \big(\Lambda +c\big)^2 + K \big(K_W -K_W \Lambda^2\big)=0.   \label{i_ric_4_nn}
\eea 
Hence, the two solutions are (\ref{i_rae_nu}). The last statement is also obtained by applying Theorem~\ref{thm_ric}.(2), as follows. By (1),  $\{A,C\}$ is detectable, by (2), $\{A^*, B^{*,\frac{1}{2}}\}$  is unit circle controllable if and only if $|\Lambda|\neq 1$, and by (3), $\{A^*, B^{*,\frac{1}{2}}\}$ is stabilizable if and only if $|\Lambda|<1$. By invoking  Theorem~\ref{thm_ric}.(2), then the last statement follows. On the other hand, it is easily verified from  (\ref{i_rae_nu}), that  the uniqueness of solutions $K\geq 0$ holds if and only if $|\Lambda|<1$, because for $|\Lambda|>1$ there are two non-negative solutions. By (\ref{ma_3a}) and (\ref{ma_3}), evaluated at $K_Z=0, K=0$, we have   $F(K,\Lambda,K_Z)\Big|_{K_Z=0,K=0} =-\Lambda$. Hence, the non-negative solution   $K=0$ is unique and stabilizable if and only if $|\Lambda|<1$.

\subsection{Proof of Lemma~\ref{lemma_nc}} 
\label{pr_lemma_nc}
(i) The conditions are a consequence of the optimization problem, while (\ref{nc_9}) is due to the following standard relaxation. For any $(\Lambda^{\infty}, K_Z^{\infty})\in {\cal P}^{\infty}(\kappa)$, then detectability and stabilizability hold, and by properties of time-invariant generalized DRE \cite{caines1988,kailath-sayed-hassibi},   the corresponding sequence $K_n^{o}, n=1,2, \ldots, K_0^{o}=0$ generated by the generalized DRE (\ref{Q_1_10_s1_new}) is non-decreasing, i.e., $K_n^o \leq K_m^o, m \geq n$ and also bounded. Hence, $K_n^o \leq K_{n+1}^o=c^2 K_n^o + K_W -\frac{ \Big( K_W + c  K_n^o\big(\Lambda^\infty + c \big)\Big)^2}{ \Big(K_Z^\infty+ K_W + \big(\Lambda^\infty + c \big)^2 K_n^o\Big)}$, and by taking the limit, as $n \longrightarrow \infty$, the inequality (\ref{nc_9}) is obtained. \\
%
(ii) Suppose $K_Z^{\infty,*}=0$. By (\ref{nc_9}), with $K_Z^{\infty,*}=0$,   then  
\begin{align}
&\Big(K^{\infty,*}-c^2 K^{\infty,*} - K_W\Big)\Big( K_W + \big(\Lambda^{\infty,*} + c \big)^2 K^{\infty,*}\Big) +\Big( K_W + c  K^{\infty,*}\big(\Lambda^{\infty,*} + c \big)\Big)^2\leq0,   \label{ineqprimf}\\
& \Longrightarrow \hso
 K^{\infty,*}\left( K^{\infty,*}\left(\Lambda^{\infty,*} + c \right)^2+K_W\left(1-\big(\Lambda^{\infty,*}\big)^2 \right)\right)\leq 0. \label{ineqprimf_a}
 \end{align}
 By Lemma~\ref{lem_pr_are},   a necessary and sufficient  condition for stabilizability of the pair  $\{A^*, B^{*,\frac{1}{2}}\}$, when $K_Z^\infty=0$,  is $|\Lambda^{\infty,*}|< 1$, and therefore (\ref{ineqprimf_a}) is satisfied if and only if  $K^{\infty,*}$ lies in the region
\bea
\frac{K_W\Big(\big(\Lambda^{\infty,*}\big)^2-1\Big)}{\Big(\Lambda^{\infty,*}+ c\Big)^2} \leq K^{\infty,*}\leq 0, \hso  \mbox{for}\hso  |\Lambda^{\infty,*}|< 1.
\eea
Since it must be that $K^{\infty,*}\geq0$ then necessarily $K^{\infty,*}=0$, which  implies $C^\infty(\kappa)=0, \forall \kappa \in [0,\infty)$. Similarly, if  $K^{\infty,*}=0$ then  necessarily  $K_Z^{\infty,*}=0$, and hence $C^\infty(\kappa)=0, \forall \kappa \in [0,\infty)$. \\
(iii)   By the stationarity conditions   (\ref{nc_3})-(\ref{nc_5}), with $\lambda=(\lambda_1,\lambda_2,\lambda_3, \lambda_4)$:
\begin{align}
&\frac{\partial}{\partial K_Z^\infty}{\cal L}(\Lambda^{\infty,*}, K_Z^{\infty,*},K^{\infty,*},\lambda^*)=  1-\lambda_1^*\Big(K^{\infty,*}-c^2 K^{\infty,*}-K_W\Big)-\lambda_2^*+\lambda_4^*= 0, \label{2.2}\\
&\frac{\partial}{\partial \Lambda^\infty}{\cal L}(\Lambda^{\infty,*}, K_Z^{\infty,*},K^{\infty,*},\lambda^*)=\Big(\Lambda^{\infty,*}+c\Big)K^{\infty,*}-\lambda_1^*\Big\{ \Big(K^{\infty,*}-c^2 K^{\infty,*}-K_W\Big)\Big(\Lambda^{\infty,*}+c\Big)K^{\infty,*}\nonumber \\
&+\Big(K_W+c K^{\infty,*}\big(\Lambda^{\infty,*}+c\big)\Big)c K^{\infty,*}\Big\}-\lambda_2^*\Lambda^{\infty,*} K^{\infty,*}= 0, \label{2.3}\\
&\frac{\partial}{\partial K^\infty}{\cal L}(\Lambda^{\infty}, K_Z^{\infty},K^\infty,\lambda^*)= \Big(\Lambda^{\infty,*}+c\Big)^2-\lambda_1^*\Big\{   \Big(1-c^2\Big)\Big(K_Z^*+K_W+\big(\Lambda^{\infty,*}+c\big)^2 K^{\infty,*}\Big)\nonumber \\
&+ \Big(K^{\infty,*}-c^2 K^{\infty,*}-K_W\Big)\Big(\Lambda^{\infty,*}+c\Big)^2+2c\Big(K_W+c K^{\infty,*}\big(\Lambda^{\infty,*}+c\big)\Big) \Big(\Lambda^{\infty,*}+c\Big)\Big\}\nonumber \\
&-\lambda_2^*\Big(\Lambda^{\infty,*}\Big)^2+\lambda_3^*= 0, \label{2.4} 
\end{align} 
Suppose $\lambda_4^{*}\neq 0$. Then, by  complementary slackness (\ref{nc_6}), we have  $\lambda_4^* K_Z^{\infty,*}=0$, which  implies $K_Z^{\infty,*}=0$, and hence by (ii),  $K^{\infty,*}=0$. By complementary slackness (\ref{nc_6}),  we also have $\lambda_2^*\big(\big(\Lambda^{\infty,*}\big)^2 K^{\infty,*}+K_Z^{\infty,*}-\kappa\big)=\lambda_2^{*}\big(0-\kappa\big)= 0$, hence $\lambda_2^*=0$. By (ii) it follows that   $C^\infty(\kappa)=0, \forall \kappa \in [0,\infty)$, hence the rate is zero.  Similarly, if $\lambda_3^*\neq 0$ then $K_Z^{\infty,*}=0$ and $K^{\infty,*}=0$, which  lead to a zero rate. However, it can be verified (see Theorem~\ref{thm_nfb}),  that for $\Lambda^{\infty,*}=0, K_Z^{\infty,*}\neq 0$,  we exhibit a non-zero rate, which is a lower bound on the non-feedback rate. 
Next, consider the case $\lambda_3^*= 0, \lambda_4^* = 0$ and $\lambda_1^*= 0$. Then, by \eqref{2.2} $\lambda_2^*= 1$. However, when $\lambda_2^*= 1$ equalities \eqref{2.3} and \eqref{2.4} hold only if $c=0$.  Since, $c\neq 0$, otherwise the channel is memoryless, then the only choice is $\lambda_1^*> 0$. Moreover, since $\lambda_1^* >0$, then in order to satisfy the complementary slackness condition \eqref{nc_9}, then inequality  must hold with equality, i.e.,
\begin{align}
\Big(K^{\infty,*}-c^2 K^{\infty,*} - K_W\Big)\Big(K_Z^{\infty,*}+ K_W + \big(\Lambda^{\infty,*} + c \big)^2 K^{\infty,*}\Big) +\Big( K_W + c  K^{\infty,*}\big(\Lambda^{\infty,*} + c \big)\Big)^2=0.   \label{eqcomsla}
 \end{align}
Finally we consider the case $\lambda_3^*= 0, \lambda_4^* = 0$ and $\lambda_2^*= 0$. Solving the system of equations \eqref{nc_6}, \eqref{2.2}-\eqref{eqcomsla}, the following sets of solutions are obtained.\\
The first solution is
\bea
K^{\infty,*}=\frac{K_W+1}{1-c^2}, \ \ \  \Lambda^{\infty,*}=-\frac{c^2+K_W}{c(K_W+1)}, \ \ \   K_Z^{\infty,*}=-\frac{K_W(c^2+K_W)}{c^2(K_W+1)}, \ \ \ \lambda_1^*=1.
\eea
The second solution is
\bea
K^{\infty,*}=0, \ \ \  \Lambda^{\infty,*}=-\frac{c^2+1}{2c}, \ \ \   K_Z^{\infty,*}=0, \ \ \ \lambda_1^*=-\frac{1}{K_W}.
\eea
The first solution is discarded since for $K_W>0$, then, $K_Z^{\infty,*}<0$, while the second solution is discarded due to  (ii). Therefore $\lambda_2^* \neq 0$, which by the complementary slackness condition \eqref{nc_6} implies that 
\begin{align}
(\Lambda^{\infty,*})^2 K^{\infty,*} + K_{Z}^{\infty,*} = \kappa
\label{appcomsla}
\end{align}
Thus, we have shown that a necessary condition for existence of $\kappa \in (0,\infty)$ such that $C^\infty(\kappa)>0$ is $\lambda_1^* >0$, $\lambda_2^* >0$, $\lambda_3^* = 0$ and $\lambda_4^* = 0$. This completes the proof.

\subsection{proof of  the stability condition of Theorem \ref{thm_nfb}}
\label{pr_stab_Lam0}

From \eqref{nf_6} and \eqref{nf_7}, we have
\begin{align}
 F^{nfb}(K^\infty, K_{Z}^\infty) \tri c -M^{nfb}(K^\infty, K_{Z}^\infty)c
=c \left(1- \Big( K_{W} + c^2  K^\infty\Big)\Big(\kappa+ K_{W} + c^2 K^\infty\Big)^{-1}\right).\label{nf_8_new}
\end{align}
We begin the evaluation for the stable case, i.e., $|c|<1$. Then,
\begin{align}
 {|}F^{nfb}(K^{\infty,*}, K_{Z}^{\infty,*}){|} =|c| \Big{|}\Big( \kappa\Big)\Big(\kappa+ K_{W} + c^2 K^{\infty,*}\Big)^{-1}\Big{|}\sr{(a)}{<}|c|<1,\label{nf_9_new}
\end{align}
where $(a)$ holds since $\kappa\geq 0$, $K_W>0$ and $K^{\infty,*}>0$. Thus, the stability condition holds for $|c|<1$. For the unstable case, i.e., $|c|>1$, by substituting $K^{\infty,*}$, we have the following
\begin{align}
 {|}F^{nfb}(K^{\infty,*}, K_{Z}^{\infty,*}){|} =& \bigg{|}\frac{2c\kappa}{\kappa+K_W+c^2\kappa +\sqrt{(K_W+\kappa(c^2-1))^2+4c^2\kappa{K_W}}}\bigg{|}\label{nf_10_new}\\
 \sr{(b)}{\leq}&\bigg{|}\frac{2c\kappa}{\kappa+K_W+c^2\kappa +\sqrt{(K_W+\kappa(c^2-1))^2}}\bigg{|}\label{nf_11_new}\\
 \sr{(c)}{<}&\bigg{|}\frac{2c\kappa}{\kappa+K_W+c^2\kappa +\sqrt{(\kappa(c^2-1))^2}}\bigg{|} =\bigg{|}\frac{2c\kappa}{K_W+2c^2\kappa}\bigg{|}
   \sr{(d)}{<}\frac{|2c\kappa |}{|2c^2\kappa |}=\frac{1}{|c|}<1,\label{nf_14_new}
\end{align}
where $(b)$-$(d)$ hold since  $\kappa\geq 0$ and $K_W>0$. Thus, the stability condition also holds for $|c|>1$. The case $|c|=1$ follows from the equality in (\ref{nf_9_new}).

\subsection{proof of Theorem~\ref{thm_sol}.}
\label{pr_thm_sol}
We prove the statements in several steps.\\
(i) First, we recall  Lemma~\ref{lemma_nc}.(iv) 
that states if $K_Z^{\infty,*}=0$ or $K^{\infty,*}=0$ the $C^\infty(\kappa)=0, \forall \kappa \in [0,\infty)$. However,  by Theorem~\ref{thm_nfb},   if we restrict our optimization to $\Lambda^{\infty,*}=0$, and $K_Z^{\infty,*}\neq 0$,  then we exhibit a non-zero rate without feedback, which is a lower bound on the non-feedback rate.     This shows that the only choice for non-zero rate with feedback is to exists a $\kappa \in (0,\infty)$ such that  
 $K^{\infty,*}>0, K_{Z}^{\infty,*}>0$. If such $\kappa$ does not exists, then,  we can exhibit a non-zero rate without feedback by considering   time-invariant channel inputs strategies  without feedback, ${\bf P}_{X_t^o|X^{o,t-1},V_0}={\bf P}^\infty(dx_t|x^{t-1}, v_0), t=1, \ldots, n$. \\
(ii) Second, we recall Lemma~\ref{lemma_nc}.(v), which states a necessary condition for existence of a non-zero feedback rate for some $\kappa \in (0,\infty)$ is   $\lambda_1^* > 0$, $\lambda_2^* > 0$, $\lambda_3^* = 0$ and $\lambda_4^* = 0$. 
 For the rest of the derivation we characterize the set of all values $\kappa \in {\cal K}^{\infty}(c, K_{W})$ if such exist,  and treat the case when $ {\cal K}^{\infty}(c, K_{W})$ is empty separately, using non-feedback channel inputs. \\
(iii) Consider $\lambda_1^* > 0$, $\lambda_2^* > 0$, $\lambda_3^* = 0$ and $\lambda_4^* = 0$. We solve the system of equations \eqref{2.2}-\eqref{2.4}, \eqref{eqcomsla} and  \eqref{appcomsla}. First we solve the system of equations  \eqref{2.2} and \eqref{2.3} to obtain $\lambda_1^*$ and $\lambda_2^*$ as a function of $\{K^{\infty,*},K_{Z}^{\infty,*}, \Lambda^{\infty,*}\}$. By substituting $\lambda_1^*$, $\lambda_2^*$ and $K_{Z}^{\infty,*}$ from \eqref{appcomsla} in \eqref{2.4}, we obtain \eqref{sol_2}. Finally, by substituting $K_{Z}^{\infty,*}$ and $\Lambda^{\infty,*}$ in \eqref{eqcomsla} we deduce the quadratic equation (\ref{sol_4}). 
The two solutions of the  the quadratic equation (\ref{sol_4}) give rise to the following two solutions.  \\
The first solution is 
\begin{align}
&K^{\infty,*}=K_1^{\infty,*}= \frac{\kappa\Big(c^2-1\Big)^2-K_W}{c^2 \Big(c^2-1\Big)},\label{case_11} \\
&\Lambda^{\infty,*}=\Lambda_1^{\infty,*}= \frac{cK_W}{\kappa\Big(c^2-1\Big)^2-K_W}, \label{case_12}   \\
&K_Z^{\infty,*}=K_{Z_1}^{\infty,*}= \frac{\kappa \Big\{\Big(    \kappa\big(c^2-1\big)^2-K_W\Big)\Big(c^2-1\Big)\Big\}-K_W^2}{\Big( \kappa\big(c^2-1\big)^2-K_W\Big) \Big(c^2-1\Big)}.\label{case_13}\\
&\lambda_1^*=\frac{c^2}{K_W -\kappa\Big(1-c^2\Big)}, \hst \lambda_2^*=c^2.\label{case_13_l}
\end{align}
The second solution is
\begin{align}
&K^{\infty,*}=K_2^{\infty,*}= \frac{\kappa-K_W}{c^2},\label{case_14} \\
&\Lambda^{\infty,*}=\Lambda_2^{\infty,*}= \frac{c\kappa}{K_W-\kappa}, \label{case_15}\\
&K_{Z}^{\infty,*}=K_{Z_2}^{\infty,*}= \frac{\kappa K_W}{K_W-\kappa}.\label{case_16}\\
&\lambda_1^*= \frac{c^2}{\kappa-K_W\Big(1-c^2\Big)}, \hst \lambda_2^*=0.\label{case_16_l}
\end{align} 
 Solution $K^{\infty,*}= K_2^{\infty,*},  \Lambda^{\infty,*}=\Lambda_2^{\infty,*},
K_Z^{\infty,*} = K_{Z_2}^{\infty,*}$, $(\lambda_1^*, \lambda_2^*)$ given by (\ref{case_16_l}), are not valid solutions, because if  $K^{\infty,*}=K_2^{\infty,*}>0$, then  $K_{Z}^{\infty,*}=K_{Z_2}^{\infty,*}<0$, and vice-versa. Thus, the only valid solution is Solution 1, from which all statements of (1) are obtained. It remains to show the statements under (2) and (3).  \\
(iv)  Consider $c^2<1$ and define the set
\beae
{\cal A}_1(c,K_W)\tri \Big\{\kappa \in [0,\infty): \; K^{\infty,*}= K_1^{\infty,*}>0, \; K_{Z}^{\infty,*}= K_{Z_1}^{\infty,*}>0, \; c^2<1, c\neq 0, \;\lambda_1^*>0, \lambda_2^*>0  \Big\}.\nms \label{st_1}
\eeae
Similarly,  define the sets for $c^2>1$ 
\begin{align}
{\cal A}_2(c,K_W)\tri \Big\{\kappa \in [0,\infty): \; K^{\infty,*}= K_1^{\infty,*}>0, \; K_{Z}^{\infty,*}= K_{Z_1}^{\infty,*}>0, \; c^2>1, c\neq 0, \;\lambda_1^*>0, \lambda_2^*>0  \Big\} \label{st_22ck}
\end{align}
and 
the set for $c^2=1$
\begin{align}
{\cal A}_3(c,K_W)\tri \Big\{\kappa \in [0,\infty): \; K^{\infty,*}= K_1^{\infty,*}>0, \; K_{Z}^{\infty,*}= K_{Z_1}^{\infty,*}>0, \; c^2=1, c\neq 0, \;\lambda_1^*>0, \lambda_2^*>0  \Big\} \label{st_23ck}
\end{align}

The proof is then completed by determining the values of $\kappa \in {\cal A}_1(c,K_W)$ and $\kappa \in {\cal A}_2(c,K_W)$, if such exist.  If the set is empty then, we need to consider time-invariant channel input strategies without feedback; such strategies always exists, in view of the lower bound of Theorem~\ref{thm_nfb}.\\
(iv.1) For $c^2<1$, we have the following
\begin{alignat}{3}
& \  K^{\infty,*}=K_1^{\infty,*}>0&&\Longrightarrow \kappa<\frac{K_W}{(c^2-1)^2},\label{st_1ck}\\
& \ \lambda_1^*>0 &&\Longrightarrow \kappa<\frac{K_W}{(1-c^2)},\label{st_2ck}\\
& \ K_{Z}^{\infty,*}=K_{Z_1}^{\infty,*}>0 &&\Longrightarrow \kappa^2 \Big(c^2-1\Big)^3 -\kappa K_W\Big(c^2-1\Big) -K_W^2>0,\\
& &&\Longrightarrow \frac{K_W-K_W \sqrt{4c^2-3}}{2\Big(1-c^2\Big)^2} < \kappa < \frac{K_W+K_W \sqrt{4c^2-3}}{2\Big(1-c^2\Big)^2 }, \hso 4c^2 \geq 3. \label{st_3ck}
\end{alignat}
Next, we show the set ${\cal A}_1(c,K_W)$ is empty. For $4c^2 < 3$, 
$\kappa^2 \Big(c^2-1\Big)^3 -\kappa K_W\Big(c^2-1\Big) -K_W^2<0$, thus $K_{Z}^{\infty,*}<0$. For $4c^2 \geq 3$, and from \eqref{st_2ck} and \eqref{st_3ck}, it suffices to show that
\begin{align}
\frac{K_W-K_W \sqrt{4c^2-3}}{2\Big(1-c^2\Big)^2} >\frac{K_W}{1- c^2}, \hso \mbox{for}\hso  4c^2 \geq  3.\label{st_4ck}
\end{align}
Clearly, after simple algebra, we can show that \eqref{st_4ck} holds iff $\Big(c^2-1\Big)^2 >0, \hso \mbox{for}\hso  4c^2 \geq 3,$ thus it hold for all $c\neq 1$. Therefore, the set defined by ${\cal A}_1(c,K_W)$ is empty. \\
(iv.2) For $c^2>1$, we have the following
\begin{align}
&   K^{\infty,*}=K_1^{\infty,*}> \hso \Longrightarrow \hso \kappa>\frac{K_W}{(c^2-1)^2},\label{st_1ck1}\\
&\lambda_1^*>0 \hso \Longrightarrow \hso  \kappa>\frac{K_W}{(1-c^2)},\label{st_2ck1}\\
&K_{Z}^{\infty,*}=K_{Z_1}^{\infty,*}>0 \hso \Longrightarrow \hso \kappa^2 \Big(c^2-1\Big)^3 -\kappa K_W\Big(c^2-1\Big) -K_W^2>0,\\
& \kappa < \frac{K_W-K_W \sqrt{4c^2-3}}{2\Big(1-c^2\Big)^2 } \ \ \mbox{or} \ \ \kappa>\frac{K_W+K_W \sqrt{4c^2-3}}{2\Big(1-c^2\Big)^2}. \label{st_3ck1}
\end{align}
Next, note that for $c^2>1$ the following inequalities hold
\begin{align}
\frac{K_W}{(1-c^2)}<\frac{K_W-K_W \sqrt{4c^2-3}}{2\Big(1-c^2\Big)^2}<0<\frac{K_W}{(c^2-1)^2}<\frac{K_W+K_W \sqrt{4c^2-3}}{2\Big(1-c^2\Big)^2}.
\label{st_4ck1}
\end{align}
Then, from \eqref{st_1ck1}-\eqref{st_4ck1}, we deduce that the set ${\cal A}_2(c,K_W)$ is non empty, only if the power $\kappa$ satisfies $\kappa>\frac{K_W\left(1+ \sqrt{4c^2-3}\right)}{2\Big(1-c^2\Big)^2}$. For values of $\kappa\leq\frac{K_W\left(1+ \sqrt{4c^2-3}\right)}{2\Big(1-c^2\Big)^2}$, since they do not belong in the set ${\cal A}_2(c,K_W)$, we need to consider time-invariant channel input strategies without feedback.\\
(iv.3) For $c^2=1$ clearly the set ${\cal A}_3(c,K_W)$ is empty, thus we need to consider time-invariant channel input strategies without feedback.\\
Putting all the above together we obtain the statements under (2) and (3). The proof is  completed.

\subsection{proof of Lemma~\ref{lemma_nc-nf}.}
\label{pr_lemma_nc-nf}

 Recall Theorem~\ref{thm_nc}, and let  $K^{\infty,*}=K^{\infty,nc}$ be the solution given in this theorem with $(\Lambda^{\infty,*}, K_Z^{\infty,*})=(-c,K_Z^{\infty,nc})$.  Then 
\begin{align}
&C_{LB}^{\infty,nc}(\kappa)=C^{\infty}(\kappa)\Big|_{\Lambda^{\infty,*}=-c}=\frac{1}{2} \log\Big( \frac{K_Z^{\infty,nc}  +K_{W}}{K_{W}}\Big), \hso c\in (-1,1), \hst \kappa \in [0,\infty), \label{in_nc} \\
& c^2 K^{\infty,nc}+K_Z^{\infty,nc}=\kappa, \hso K^{\infty,nc}= c^2 K^{\infty,nc} + K_W -\frac{ K_W^2}{ K_Z^{\infty,nc}+ K_W}.
\end{align} 
 Recall  Theorem~\ref{thm_nfb}, and let $K^{\infty,*}=K^{\infty,nfb}$ be the solution of $(\Lambda^{\infty,*}, K_Z^{\infty,*})=(0,\kappa)$. Then 
 \begin{align}
&C_{LB}^{\infty,nfb}(\kappa)=C^{\infty}(\kappa)\Big|_{\Lambda^{\infty,*}=0}= \frac{1}{2} \log\Big( \frac{c^2 K^{\infty,nfb}  + K_Z^{\infty,nfb} +K_{W}}{K_{W}}\Big), \label{nfb_in1} \\
&K_{Z}^{\infty,nfb} = \kappa, \hso 
K^{\infty,nfb}= c^2 K^{\infty,nfb} + K_W -\frac{ \Big( K_W + c^2  K^{\infty,nfb}\Big)^2}{ \Big(K_Z^{\infty,nfb}+ K_W +   c^2 K^{\infty,nfb}\Big)}. \label{nfb_in2}
\end{align}
 Then,    $K_Z^{\infty,nc}=\kappa -c^2 K^{\infty,nc}\in (0,\infty),$ for all $\kappa \in (0,\infty)$, and moreover the following inequalities holds. 
\begin{align}
C_{LB}^{\infty,nc}(\kappa)=&\frac{1}{2} \log\Big( \frac{\kappa -c^2 K^{\infty,nc}  +K_{W}}{K_{W}}\Big) \\
 < & \frac{1}{2} \log\Big( \frac{\kappa+K_{W}}{K_{W}}\Big), \hso \forall \kappa \in (0,\infty), \hso c \in (-1,1), c\neq 0\\
 <&\frac{1}{2} \log\Big( \frac{c^2 K^{\infty,nfb} +\kappa +K_{W}}{K_{W}}\Big)
 =  C_{LB}^{\infty,nfb}(\kappa), \hst \mbox{by (\ref{nfb_in2}).}
\end{align}
 This completes the proof.

\bibliographystyle{IEEEtran}
\bibliography{Bibliography_capacity}



\end{document}